\documentclass[12pt,letter,american,intoc,bibliography=totoc,index=totoc,BCOR10mm,captions=tableheading,nottitlepage,fleqn]{article}

\usepackage{setspace, geometry}
\onehalfspacing
\geometry{left=1.0in,right=1.0in,top=1.0in,bottom=1.0in}

\usepackage{etex}
\usepackage{babel,newcent,textcomp}

\usepackage{scrextend}
\usepackage{booktabs, threeparttable}

\usepackage[latin9]{inputenc}
\usepackage[style=bwl-FU, citestyle=authoryear, backend=biber, maxbibnames=99]{biblatex}
\addbibresource{0_references.bib}
\usepackage{lmodern}

\usepackage[T1]{fontenc}
\usepackage[autostyle]{csquotes}
\usepackage{sgame}
\usepackage{fancyhdr}
\pagestyle{plain}
\setcounter{secnumdepth}{3}
\usepackage{amsmath,amssymb,bbm,nccmath}
\usepackage[amsthm]{ntheorem}
\usepackage{dsfont}
\usepackage{tablefootnote}
\usepackage[labelfont=bf]{caption}
\usepackage{enumitem}
\usepackage{geometry}
\usepackage{lscape}
\usepackage[unicode=true,
 bookmarks=true,bookmarksnumbered=true,bookmarksopen=true,bookmarksopenlevel=1,
 breaklinks=false,pdfborder={0 0 0},backref=false,colorlinks=true,linkcolor=black,citecolor=black,anchorcolor=black,urlcolor=blue]
 {hyperref}
\hypersetup{pdftitle={Your title},
 pdfauthor={Your name},
 pdfpagelayout=OneColumn, pdfnewwindow=true, pdfstartview=XYZ, plainpages=false}
 
\usepackage{istgame}

%%%%%%%%%%%%%%%%%%%%%%%%%%%%%% LyX specific LaTeX commands.

%%%%%%%%%%%%%%%%%%%%%%%%%%%%%% User specified LaTeX commands.
\usepackage{tikz}

\usetikzlibrary{shapes.geometric, arrows, positioning}
\tikzset{
    %Define standard arrow tip
    >=stealth',
    %Define style for boxes
    box/.style={
           rectangle,
           rounded corners,
           draw=black, very thick,
           text width=6.5em,
           minimum height=1.5em,
           text centered},
    boxsmall/.style={
           rectangle,
           rounded corners,
           draw=black, very thick,
           text width=6em,
           minimum height=1.5em,
           text centered},
     simple/.style={
           text centered},
    % Define arrow style
    arrow/.style={
           ->,
           thick,
           shorten <=2pt,
           shorten >=2pt,}
}
\usepackage{graphicx}
\graphicspath{{""}}

\usepackage{cleveref}
% that links to image floats jumps to the beginning
% of the float and not to its caption
\usepackage[figure]{hypcap}
% enables calculations
\usepackage{calc}
% fancy page header/footer settings
%\renewcommand{\sectionmark}[1]{\markright{\thesection\ #1}}
% increases the bottom float placement fraction

% avoids that floats are placed above its sections
\let\mySection\section\renewcommand{\section}{\suppressfloats[t]\mySection}

\theoremstyle{break}
\newtheorem{theorem}{Theorem}
\newtheorem{lemma}{Lemma}
\newtheorem{proposition}[theorem]{Proposition}

%\usepackage{ntheorem}
%
%\usepackage{chngcntr}
%\newcounter{subhypothesis}
%\counterwithin{hypothesis}{subhypothesis}
%
%\renewcommand\thetheorem{\arabic{subhypothesis}\alph{hypothesis}}
%\newcommand{\theoremgroup}{\refstepcounter{subhypothesis}}
\usepackage{subfigure}

\newenvironment{definition}[1][Definition]{\begin{trivlist}
\item[\hskip \labelsep {\bfseries #1}]}{\end{trivlist}}

% I believe this is a command that the package wants to use, though not sure what it's doing and this document seems to compile ok without it, so who knows.

\makeatother

\begin{document}

\lhead[]{}

\rhead[]{}

\lfoot[\thepage]{}

\cfoot{}

\rfoot[]{\thepage}

\newpage

\let\endtitlepage\relax
\begin{titlepage}
\title{The Supply of Motivated Beliefs\thanks{\scriptsize{Thank you to Roland B\'enabou, Yan Chen, Ernesto Dal Bo, Christine Exley, Drew Fudenberg, David Laibson, Cesar Martinelli, Matthew Rabin, Mattie Toma, Leeat Yariv, Sevgi Yuksel, seminar participants at Exeter, HEC Paris, Imperial, LMU Munich, Michigan, Nottingham, Princeton, Toronto, UCL, and Warwick, and attendees at the ASSA Annual Meetings, Belief-Based Utility Workshop, NBER SI Political Economy, SITE: Psychology and Economics, and others for helpful comments. I am grateful for funding support from the Pershing Square Fund for the Foundations of Human Behavior. Preregistration plans for the two experiments are available on AsPredicted: (Primary experiment: \url{https://aspredicted.org/blind.php?x=VMH_8HW}; Additional experiment: \url{https://aspredicted.org/blind.php?x=GAQ_ZJF}.) Both experiments were approved by the IRB at Princeton University (13114-04 and 13114-05).}}}
\author{Michael Thaler\thanks{\scriptsize{University College London. Email: \href{michael.thaler@ucl.ac.uk}{michael.thaler@ucl.ac.uk}.}}}
\date{September 2023}  
%\\  \normalsize{\href{https://drive.google.com/file/d/1GDXmIgahq_6YzyiMPIJdik8KJlxCSlZ2/view?usp=sharing}{Click here for most recent version}}}

\maketitle

\abstract{
When people choose what messages to send to others, they often consider how others will interpret the messages. A sender may expect a receiver to engage in motivated reasoning, leading the receiver to trust good news more than bad news, relative to a Bayesian. This paper experimentally studies how motivated reasoning affects information transmission in political settings. Senders are randomly matched with receivers whose political party's stances happen to be aligned or misaligned with the truth, and either face incentives to be rated as truthful or face no incentives. Incentives to be rated as truthful cause senders to be less truthful; when incentivized, senders send false information to align messages with receivers' politically-motivated beliefs. The adverse effect of incentives is not appreciated by receivers, who rate senders in both conditions as being equally likely to be truthful. A complementary experiment further identifies senders' beliefs about receivers' motivated reasoning as the mechanism driving these results. Senders are additionally willing to pay to learn the politics of their receivers, and use this information to send more false messages. 

% This paper experimentally studies how motivated reasoning affects information transmission. Senders are randomly matched with receivers whose ideology is either aligned or misaligned with the truth, and either face incentives to be rated as truthful by receivers or face no incentives. These perceived-truth incentives make senders less truthful, as they shift to sending false information that aligns with receivers' politically-motivated beliefs. Receivers' ratings of senders' truthfulness are unaffected by senders' incentives, suggesting naivete. A complementary experiment further identifies senders' beliefs about motivated reasoning as the driving mechanism. Results indicate that incentives can perversely affect disinformation when issues evoke motivated beliefs.

\vspace{7mm}
\noindent \textbf{JEL classification:} C91; D83; D91
\vspace{18mm}
}

%\noindent \textbf{Keywords:} motivated reasoning; biased beliefs; polarization; fake news; overconfidence

\setcounter{page}{0}
\thispagestyle{empty}
\end{titlepage}

\pagebreak \newpage

%\tableofcontents
%\newpage

\section{Introduction}

There has been a proliferation of interest in understanding how people communicate with others and why much of the news marketplace contains inaccurate information. ``Fake news,'' defined by \textcite{AG17} as the intentional reporting of false information, has been shown to be a contributing factor to the undermining of trust in public health, historical misconceptions, and the state of democracy (\cite{L-etal-18}; \cite{OW14}; \cite{PR21}). Given these large societal costs, it is important to better understand what motivates people to send false information to others in order to inform efforts to improve news dissemination and increase trust in credible sources. 

False information has been of particular concern in environments where information comes from unmoderated sources, such as on social media, and when information is about issues that evoke strong motivated beliefs, such as in politics (\cite{AG17}). Meanwhile, there is ample evidence that motivated beliefs affect people's demand for information (e.g. \cite{PI21}; \cite{GLMS22}), but much less is known about their effects on the \textit{supply} of information. 
% In addition, past papers have shown how senders may misinform because of misaligned incentives or to appeal to receivers' \textit{current} beliefs (e.g. \cite{ANR19}; \cite{GS10}). 

This paper provides novel experimental evidence to show that motivated reasoning influences the supply of false information, focusing on political settings. It highlights the role of two factors that lead senders to be less truthful: (1) believing that receivers' motivated beliefs are misaligned with the truth, and (2) having incentives to be perceived as truthful. Both factors are prevalent in many political news transmission environments, and can play a significant role on social media, for two reasons. First, when news senders have information about their audience, they often can tailor messages to appeal to their receivers' politics. If receivers prefer to hold beliefs that are farther in the direction that aligns with their political party, they will be motivated to believe that ``good news'' about their party is true while ``bad news'' is false. Second, many social media platforms have users rate other users, incentivizing news senders to be rated well by others. For instance, Facebook implemented a policy in 2015 that enabled users to report news as false, and by 2018 began assigning users a credibility score based on the news they shared (\cite{WaPo18}). These incentives can perversely lead to more disinformation if users send false messages to appeal to receivers' motivated beliefs in order to boost their trustworthiness.

% This paper studies the determinants of false information in environments in which receivers of information are motivated to hold certain beliefs. While there is ample evidence that motivated beliefs affect receivers' demand for information (e.g. \cite{OSD13}; \cite{PI21}), much less is known about their effects on the \textit{supply} of information. As such, this paper analyzes the effect that receivers' motivated beliefs have on senders, highlighting the role of two factors that increase disinformation: (1) beliefs that receivers' motivated beliefs are misaligned with the truth, and (2) incentives to be perceived as truthful. Both factors are prevalent in many news transmission environments, and can play a significant role on social media, for two reasons. First, news often evokes motivated beliefs: for instance, many consumers disagree about political issues and may prefer to hold beliefs that are even farther in the direction that aligns with their political party. These desired beliefs lead them to be motivated to believe that certain pieces of news are true while others are false. Second, many social media platforms have users rate other users, incentivizing news senders to be rated well by others. For instance, Facebook implemented a policy in 2015 that enabled users to report news as false, and by 2018 began assigning users a credibility score based on the news they shared (\cite{WaPo18}). This system can backfire when users send messages to appeal to receivers' motivated beliefs.

In order to have a controlled environment that is able to cleanly identify the effects of motivated beliefs and incentives, I turn to the (online) laboratory and run two large preregistered experiments. 
The primary experiment recruits self-reported social media users in the United States. %in an artefactual field experiment (\cite{HL04}). 
Subjects send and receive messages about various factual issues that are chosen to evoke politically-motivated beliefs, as described in \Cref{topic-motives}. On each question, receivers report their prior belief about which of two answers is correct. Senders learn the true answer to the question and then choose, as a function of the receiver's prior, whether to send a message to the receiver that corresponds to the true answer or to the false answer. Finally, receivers rate the probability that the message from their sender is truthful. To identify the role that incentives play, senders are randomly assigned to either be paid as a function of receivers' ratings or to have their pay not depend on receivers' ratings. To identify the role that motivated beliefs play, senders are randomly assigned to receivers whose political party is either aligned with the truthful message, whose party is aligned with the false message, or whose party is unknown, and can condition on the receivers' prior beliefs.

\renewcommand\arraystretch{1.3}
\begin{table}
\begin{tabular}{l l l}  
\toprule
\textbf{Topic} & \textbf{Pro-Democrat Motives} & \textbf{Pro-Republican Motives} \\
\midrule
Immigrants' crime rate & Lower than US citizens & Higher than US citizens \\
Racial discrimination & Severe in labor market & Not severe in labor market \\
US crime & Got worse under Trump & Got better under Trump \\
Media bias & Media not mostly Dems & Media mostly Dems \\
COVID-19 restrictions & Mostly stopped spread & Did not mostly stop spread \\
Gun reform & Decreased homicides & Did not decrease homicides \\
Unemployment & Got worse under Trump & Got better under Trump \\
Wages & Grew slower under Trump & Grew faster under Trump \\
Undocumented immigrants & Mostly overstaying visas & Mostly illegally entered US \\
Domestic terrorism & More due to white supremacy & More due to other factors \\
Poverty rates & Got worse under Trump & Got better under Trump \\
Illegal immigration & Not historically high & Historically high  \\
\bottomrule
\end{tabular}
\begin{scriptsize}
\caption{The list of political topics and hypothesized motives in the experiments.}
\label{topic-motives}
\end{scriptsize}
\end{table}

\renewcommand\arraystretch{1}

The main result from the primary experiment is that senders are more likely to send false messages when they face incentives to be rated as truthful and when the receiver's politics are aligned with the false message. When faced with receivers with the same prior belief, senders with these incentives are 7.1 percentage points (subject-level clustered s.e. 3.2 pp) more likely to send a false message when the receiver's party is aligned with the false message than when it is aligned with the true message. Senders are also more likely to send false messages when the receiver's party is aligned with the false message than when the receiver's party is unknown or when the topic is neutral instead of political. Since senders condition on receivers' prior beliefs, these results indicate that senders are tailoring messages to receivers' politics separately from their priors. 

There are many alternative explanations as to why senders may condition messages on receivers' party, such as altruism or innate preferences to be seen as truthful. These factors are not directionally driving results, as without incentives, senders no longer condition on receivers' party. Relatedly, the second main result is that incentives have a negative effect on truthfulness overall. Senders are 7.3 pp (s.e. 2.5 pp) more likely to send false messages when incentivized, an increase of 34 percent. The negative effect of incentives is driven by cases in which the receiver's party is misaligned with the truth. 

Results are similar in an additional treatment in which senders cannot condition on receivers' priors: incentivized senders condition messages on receivers' party even more, but unincentivized senders do not on average, and incentives lead to significantly more false messages being sent. Results are also robust to other specifications and cuts of the data. 

Next, we turn to receivers' behavior. Receivers do not incorporate the negative effects of incentives when rating the truthfulness of senders. Neither receivers' ratings nor beliefs are statistically-significantly affected by senders' incentives, and the point estimates are tightly estimated around zero. Survey evidence confirms these patterns: Senders self-report choosing messages that are more aligned with the receiver's party, and less aligned with the truth, when they are incentivized, but receivers' predictions of sender behavior are not significantly affected by senders' incentives. 

In an additional experiment, I further identify the mechanism of motivated reasoning, defined as the distortion of belief updating to favor preferred states (\cite{K90}; \cite{BT02}; \cite{K16}; \cite{T-WPa}). My preferred explanation from the primary experiment is that senders believe that receivers engage in motivated reasoning, and tailor messages to appeal to this bias. The primary experiment is able to rule out other preference-based explanations by comparing incentivized to unincentivized senders, but higher-order belief explanations are harder to rule out. For instance, if senders believe that receivers expect senders to distort messages in the direction opposed to receivers' party, leading them to trust pro-party messages more (such as in \cite{M01}). To identify the motivated-reasoning mechanism, the additional experiment breaks the interaction between senders and receivers, isolating the beliefs channel directly. 

I identify motivated reasoning among receivers using a version of the design from \textcite{T-WPa}. This design has two main steps. First, each receiver is given a variety of factual questions with numerical answers. On each question, the receiver gives a guess that they think is equally likely to be above or below the correct answer; that is, the median of their belief distribution is elicited. Second, the receiver is given two binary messages from the computer: one message is true and and the other is false. Each message tells them whether the answer was above or below their median. The receiver is not told whether each source is true or false; instead, they are asked to make inferences about its veracity from the message content. Since messages relate the true answer to the receiver's median, a Bayesian would believe that it is equally likely for each source to report either message.\footnote{That is, the receiver has stated that they believe the answer is equally likely to be greater than or less than her median; so, they believe the likelihood that a true message would report that the answer is greater than their median is 1/2, and the likelihood that a false message would report that the answer is less than their median is also equal to 1/2, leading a ``greater than'' or ``less than'' message to be completely uninformative about the veracity of the news source.} On the other hand, a receiver who engages in motivated reasoning will think the news is more likely to be true if it sends a message that aligns more with their preferred (political) beliefs. In \textcite{T-WPa}, I argue that this method is a well-identified and well-powered way to identify motivated reasoning.\footnote{While there are other experimental approaches to identifying motivated reasoning, such as those in \textcite{MNNR-WP} and \textcite{ER11}, those approaches have often found it difficult to cleanly detect motivated reasoning, as identification of the bias from Bayesian updating and other inference biases is noisy (\cite{B19}; \cite{TPR20a}; \cite{TPR20b}).} Indeed, I find significant motivated reasoning among receivers here as well.

After receivers play the game above, each sender is given a matched receiver's median belief and are asked to select either the ``greater than'' or the ``less than'' message of the computer. The two primary treatment arms are the same as those in the primary experiment. First, senders are either incentivized to have the receiver's rating of the message be implemented for payment, or they are unincentivized. Second, senders are matched to receivers whose party is either aligned or misaligned with the truth. The median beliefs provided to the sender are selected such that at least one Democrat and at least one Republican has stated that median. This enables random assignment, for a given prior belief, to a receiver who is a Democrat, is a Republican, or is equally likely to be of either party. 

The main finding from the additional experiment coincides with the main finding from the primary experiment; senders who are incentivized to be perceived as truthful are more likely to choose the false message when it is aligned with the receiver's party, while unincentivized senders are not. Incentivized senders are 14.5 percentage points (s.e. 4.5 pp) more likely to send a false message when the receiver's party is aligned with the false message than when it is aligned with the true message. Unincentivized senders are not directionally affected by this treatment. Incentives again lead to more false messages chosen. Senders are 8.5 pp (s.e. 2.3 pp) more likely to send false messages overall when incentivized, a 34 percent increase, which is driven by the condition in which the receiver's party is misaligned with the truth. This result clarifies the form of politically-motivated reasoning that determines senders' beliefs and choices of false messages. 

The second finding from the additional experiment is that senders demand information about receivers. A majority of senders are willing to pay a positive amount to learn the political party of receivers on political questions, and a majority are not willing to on neutral questions. Senders use receivers' party information in order to choose more false news, suggesting that giving news providers the option to learn about their audience may cause the audience to receive less truthful pieces of news when motivated reasoning is at play. 

In both experiments, I elicit beliefs about others, finding clear evidence that senders expect receivers to use their politically-motivated beliefs, in addition to prior beliefs, when inferring the truthfulness of news. Senders' beliefs are directionally accurate, but on average they overstate the impact of party on inference.

Theoretically, the main experimental findings can be explained as equilibria of a simple sender-receiver game. In this game, the sender prefers to tell the truth and has incentives for the receiver to believe that she is telling the truth, and the receiver prefers accurately rating the sender. There is always a separating Bayesian Nash equilibrium (BNE) in which the sender reports the truth and the receiver believes that all sent messages are truthful. However, when the receiver engages in motivated reasoning, directionally distorting his posteriors in directions he prefers, there may no longer be a separating BNE, as the sender's best response is to bias messages in the receiver-preferred direction.\footnote{More technically, I discuss a solution concept in which the receiver plays a strategy of a BNE in which he has one particular level of bias, and the sender --- who is aware of the receiver's strategy and beliefs about his own bias --- plays a best response to the receiver under the assumption that he has another level of bias.} When the sender and receiver have different beliefs about the receiver's motivated reasoning, and the receiver is unaware of this difference, incentives can lead to a gap between the strategy the sender plays and the strategy the receiver plays a best response to. In particular, the sender's incentives lead her to send less truthful messages but do not lead the receiver to rate her as less truthful, predictions that are consistent with the experimental results. 

This paper adds to several strands of literature. It is among the first papers to unite the growing literature on the demand for motivated beliefs with the large literature on the supply of misinformation and disinformation. The definition of motivated reasoning I use was developed in economics by \textcite{BT02} and was further formalized in a series of subsequent papers by these authors (\cite{BT11}; \cite{B13}; \cite{BT16}).\footnote{The ``optimal beliefs'' framing discussed in my model is also similar to the setups of \cite{BP05} and \cite{MNNR-WP}).} I contribute to the understanding of the relationship between politics and misperceptions has been studied extensively in applied political settings (e.g. \cite{TL06}; \cite{AST18}; \cite{AMS20}). \textcite{B13}, \textcite{L14}, and \textcite{McGee-WP} theoretically study beliefs about others' motivated reasoning, but to my knowledge, my paper is the first to \textit{experimentally} study such higher-order beliefs.\footnote{My model has overlaps with \textcite{L14}. He studies a setting in which a policy-maker exploits the demand for motivated beliefs by voters by choosing whether to supply information to them and finds that there is often a truthful equilibrium as well as non-truthful equilibria. In my setting, I find an even starker setting in which truthful equilibria \textit{cannot} exist.} \textcite{HL-WP} study persuasion in a motivated setting, but focus on senders and receivers with materially-misaligned incentives. The additional experiment further extends the experimental design of \textcite{T-WPa} (which also studies politics), following a set of papers that study motivated reasoning about various topics including gender differences (\cite{T21}), racial discrimination (\cite{E-WP}), and ``good news'' about the world (\cite{T-WPb}). 

These results relate to the experimental cheap talk literature and preferences for truth-telling. As in \textcite{ANR19}, I find that people prefer to send truthful messages but that this preference is malleable. Compared to most cheap-talk and information-design games in the literature (e.g. \cite{CW06}; \cite{FLP-WP}), my game does have a truth-telling BNE, but still finds considerable lying and distrust.\footnote{This result is quite different from \textcite{CW06}, who find excess trust and truth-telling absent motivated beliefs.} A number of other papers experimentally study the determinants of lying (e.g. \cite{EG12}; \cite{SvDP11}; \cite{GKS18}), receivers' behavior in detecting lies in nonpolitical contexts (\cite{SG-WP}); and non-motivated reasons why people may lie to be seen as truthful (\cite{CSC20}; \cite{B-WP}; \cite{SSvdWV-WP}). I add to this literature by emphasizing behavior of senders and relating lying to beliefs about others' motivated reasoning.

This paper also adds to the psychology literature on bias blind spots (\cite{PLR02}; \cite{PGR04}; \cite{P07}) and persuasion of news (\cite{D83}) --- which finds that people see themselves as less susceptible to biases than others --- by considering belief gaps about motivated reasoning.\footnote{In economics, \textcite{F-WP} finds similar results in the context of self-control problems.} It also relates to papers that experimentally study higher-order beliefs about motivated and unmotivated biases in social settings (\cite{OY20}; \cite{BBG-WP}; \cite{G-B-WP}), and theoretically to literatures on higher-order belief biases like bounded strategic rationality (\cite{C03}), cognitive hierarchy (\cite{CHC04}), and cursed equilibrium (\cite{ER05}). \textcite{SS23} find complementary evidence about what people believe about \textit{their own} motivated reasoning. My data provide evidence for bias blind sports about motivated reasoning, in which people act as if \textit{others} treat them as if they were less biased. 

Finally, these findings contribute to the understanding of the spread of disinformation on social media. They show that even small incentives can significantly change behavior, and that disinformation is especially prevalent in contexts that evoke motivated beliefs. Related literature analyzes how news providers distort messages in the direction of their audiences' \textit{current} beliefs (\cite{GS06}; \cite{GS10}; \cite{MS05}); I show that appealing to motivated beliefs is independently important, and may even provide an alternative explanation for previous findings. These findings also provide a belief-driven explanation for pandering, in which senders bias messages towards receivers' \textit{preferences} (\cite{MT04}; \cite{CDK13}; \cite{CHS01}). Here, motivated reasoning impedes truth-telling equilibria that exist even when receivers have differing priors, a gap that does not typically occur in individual decision making (\cite{L-WP}). 

The rest of the paper proceeds as follows: \Cref{theory} develops the sender-receiver model. \Cref{experiment1} presents the design and results of the primary experiment, and \Cref{experiment2} presents the additional experiment. \Cref{conclusion} concludes and proposes directions for future work. Additional results and study materials are in the appendices.

\section{Theory}
\label{theory}

\subsection{A signaling game}

This section introduces a simple sender-receiver game that forms the basis of the primary experiment. I consider a game with one sender (she) and one receiver (he). There is a state of the world which is either high ($\theta_H$) or low ($\theta_L$). The sender observes the true state of the world and the receiver wants to predict the true state of the world. The sender has a preference for telling the truth and may receive an additional benefit for having the receiver think she is telling the truth. The sender may benefit from being perceived as truthful because the receiver wishes to consume more news from her or because the receiver will be more likely to ``like'' or ``share'' the message.

After observing the sender's message, the receiver is asked to rate the probability that the sender was telling the truth, receiving a payoff that depends on the accuracy of his rating. These ratings are functionally equivalent to rating the probability that the state is high or low. The receiver's payoff is a reduced-form interpretation of accuracy motives.

% In the game, S learns a binary state (H or L) and sends a binary message to R. One of the two messages corresponds to saying ``the state is high'' and the other to ``the state is low;'' that is, one message is truthful and the other is false. After seeing the message, R reports how likely he believes it is that S's message is truthful. 

More formally, the timing of the game is as follows: First, Nature chooses whether the state $\theta$ is $\theta_H$ (with probability $\pi$) or $\theta_L$ (with probability $1-\pi$). We will later interpret $\pi$ as reflecting R's prior belief of the state. Next, S learns the state, and then chooses whether to send message $x_H$ or $x_L$ to R. After observing the message, R takes action $a \in [0,1]$. 

The receiver's expected utility is set so that he is incentivized to report his true belief: $u_R = 1-(1-a)^2$ if the sender's message is truthful and $u_R = 1-a^2$ if the message is false. The sender's utility includes two components. She has an intrinsic value to report the truth, receiving $\tau > 0$ for truth-telling. She also receives utility that is linear in the receiver's rating $a$ of her truthfulness. $\gamma \geq 0$ corresponds to the weight put on the receiver's rating. 

% More formally, we have:
% \begin{align}
%     u_R &= 1-(1-a)^2 \cdot \mathbbm{1}(x_{\theta} = \theta) + 1-a^2 \cdot \mathbbm{1}(x_{\theta} \neq \theta). \nonumber \\
%     u_S &= \tau \cdot \mathbbm{1}(x_{\theta} = \theta) + \lambda \cdot \mathbbm{1}(x_{H}) + \gamma \cdot a.
% \end{align}
The expected utility matrix, as a function of $a$, is as follows:

\begin{center}
\vspace{4mm}
\begin{tabular}{|l|c|c|}
\hline
  & Nature chooses $\theta_H$ & Nature chooses $\theta_L$ \\
 \hline
 Sender chooses $x_H$ & $\gamma a+\tau, 1-(1-a)^2$    & $\gamma a, 1-a^2$ \\
 \hline
 Sender chooses $x_L$ & $\gamma a, 1-a^2$    & $\gamma a+\tau, 1-(1-a)^2$ \\
 \hline
\end{tabular}
\end{center}
\vspace{4mm}

\noindent The first element in each cell is the sender's utility, and the second is the receiver's utility. \Cref{main-game} describes the game in extensive form. 

\begin{figure}
    \centering
\begin{istgame}[scale=2]
\xtdistance{18.4mm}{18.4mm}
\istroot(0)[chance node]<below>{N}
\istB<grow=left>{\theta_H}[a]{\pi}[b]
\istB<grow=right>{\theta_L}[a]{1-\pi}[b]
\endist
\xtdistance{12mm}{20mm}
\istroot(1)(0-1)<180>{S}
\istb<grow=north>{x_H}[l]
\istb<grow=south>{x_L}[l]
\endist
\istroot(2)(0-2)<0>{S}
\istb<grow=north>{x_H}[r]
\istb<grow=south>{x_L}[r]
\endist
\cntmdistance{6mm}{14mm}
\cntmApreset[dotted][gray!35]
\istrootcntmA'[west](a1)(1-1)
\istbA{a}[a]{\dbinom{\gamma a + \tau}{1-(1-a)^2}}
\endist
\istrootcntmA[west](b1)(1-2)
\istbA{a}[a]{\dbinom{\gamma a}{1-a^2}}
\endist
\istrootcntmA[east](b2)(2-1)
\istbA{a}[a]{\dbinom{\gamma a}{1-a^2}}
\endist
\istrootcntmA'[east](a2)(2-2)
\istbA{a}[a]{\dbinom{\gamma a + \tau}{1-(1-a)^2}}
\endist
\xtInfoset[dashed](a1)(b2){R}
\xtInfoset[dashed](b1)(a2){R}
\end{istgame}
    \caption{The extensive-form game. S's payoffs are listed on top; R's payoffs are listed on bottom. Dashed lines denote information sets.}
    \label{main-game}
\end{figure}

We first consider the Bayesian Nash equilibria (BNE) of the game. There is always a full-information separating BNE in which the sender plays a \textit{truthful strategy} --- sending $x_H$ given $\theta_H$ and sending $x_L$ given $\theta_L$ --- and the receiver chooses $a(x_H)=1$ and $a(x_L)=1$. The sender earns $\gamma + \tau$ and the receiver earns 1, and this BNE is Pareto optimal.\footnote{I focus on BNE, instead of equilibrium refinements, for simplicity. The truthful BNE is also a perfect Bayesian equilibrium and a sequential equilibrium.} As is common in coordination games, there may be other equilibria. In this game, the existence of other equilibria depends on $\tau / \gamma$ being sufficiently small.\footnote{Specifically, there is a pooling BNE in which the sender always sends $x_H$ iff $\tau/\gamma \leq \pi$. In such a BNE, the receiver plays $a(x_H) = \pi$ and $a(x_L) = \pi'$ for $\pi' < \pi - \tau/\gamma$. Similarly, there is an $x_L$-pooling BNE iff $\tau/\gamma \leq 1-\pi$. The receiver plays $a(x_H) = 1-\pi''$ and $a(x_L) = 1-\pi$ for $1 - \pi'' < 1 - \pi - \tau/\gamma$. There can also be mixed-strategy equilibria. If $\tau/\gamma \leq 1-\pi$, there is a BNE in which the sender sends $x_H|\theta_H$ with probability 1, and sends $x_L|\theta_L$ with probability $\frac{1-\pi-\tau/\gamma}{(1-\pi)(1-\tau/\gamma)}$. The receiver plays $a(x_H) = 1-\tau/\gamma$ and $a(x_L) = 1$. If $\tau/\gamma \leq \pi$ there is a BNE in which the sender sends $x_L|\theta_L$ with probability 1 and sends $x_H|\theta_H$ with probability $\frac{\pi-\tau/\gamma}{\pi(1-\tau/\gamma)}$. The receiver plays $a(x_L) = 1-\tau/\gamma$ and $a(x_H) = 1$.}

\subsection{Motivated reasoning}

When people receive information, they often distort how they process the information in a non-Bayesian way that favors beliefs they prefer to hold: \textit{motivated reasoning}. I model motivated reasoning here by having the receiver optimally form a posterior that trades off the benefit from believing the state is good with accuracy motives. This model follows a common approach from past literature (\cite{BT02}; \cite{BP05}; \cite{BT11}). 

We formally consider how motivated reasoning affects behavior in the game described above. The receiver either sees an $x_H$ signal or an $x_L$ signal; without loss of generality, suppose that $x_H$ is ``good'' and $x_L$ is ``bad.'' Denote $P(x_H | H) = p_H$ and $P(x_L | L) = p_L$, where $p_H \geq 1-p_L$, so that $x_H$ is weakly more likely in state $H$ and $x_L$ is weakly more likely in state $L$.\footnote{Since $\tau>0$, this will be true in all equilibria of this game.}

A Bayesian receiver would give the following ratings:
\begin{align*}
    a(x_H) = P(H|x_H) &= \frac{\pi p_H}{\pi p_H + (1-\pi) (1-p_L)}, \\
    a(x_L) = P(L|x_L) &= \frac{(1-\pi) p_L}{\pi(1-p_H) + (1-\pi) p_L}.
\end{align*}
Motivated reasoners act as if they receive additional utility for forming a posterior that is consistent with the good state. Specifically, they receive a benefit of $\lambda \cdot a$ towards positively rating signals that indicate that the state is high and $\lambda (1-a)$ towards negatively rating signals that indicate the state is low.\footnote{Predictions are similar if overweighting of good news is less severe than underweighting of bad news (e.g. \cite{BT02}; \cite{MNNR-WP}).}

The updated utility matrix, as a function of $a$, is as follows:

\vspace{5mm}
\begin{tabular}{|l|c|c|}
\hline
  & Nature chooses $\theta_H$ & Nature chooses $\theta_L$ \\
 \hline
 Sender chooses $x_H$ & $\gamma a+\tau, 1-(1-a)^2\pmb{{\color{blue}+ \lambda a}}$    & $\gamma a, 1-a^2\pmb{{\color{blue}+ \lambda a}}$ \\
 \hline
 Sender chooses $x_L$ & $\gamma a, 1-a^2\pmb{{\color{blue}+ \lambda (1-a)}}$    & $\gamma a+\tau, 1-(1-a)^2\pmb{\color{blue}{+ \lambda (1-a)}}$ \\
 \hline
\end{tabular}
\vspace{5mm}

\noindent The blue bold text denotes the added term. The motivated receiver gives ratings that equal:
\begin{align}
    a(x_H) &= \min \left\lbrace \lambda/2 + \frac{\pi p_H}{\pi p_H + (1-\pi) (1-p_L)}, 1 \right\rbrace \nonumber \\
    a(x_L) &= \max \left\lbrace -\lambda/2 + \frac{(1-\pi) p_L}{\pi(1-p_H) + (1-\pi) p_L}, 0 \right\rbrace. \label{eq:br-receiver}
\end{align}
We assume $\lambda \in (-2,2)$ in order to allow for interior ratings to be optimal.

% In extensive form, \Cref{motivated-game} replicates \Cref{main-game} but with the added motivated terms. 

% \begin{figure}
%     \centering
% \begin{istgame}[scale=1.6]
% \xtdistance{18.4mm}{18.4mm}
% \istroot(0)[chance node]<below>{N}
% \istB<grow=left>{\theta_H}[a]{\pi}[b]
% \istB<grow=right>{\theta_L}[a]{1-\pi}[b]
% \endist
% \xtdistance{14mm}{20mm}
% \istroot(1)(0-1)<180>{S}
% \istb<grow=north>{x_H}[l]
% \istb<grow=south>{x_L}[l]
% \endist
% \istroot(2)(0-2)<0>{S}
% \istb<grow=north>{x_H}[r]
% \istb<grow=south>{x_L}[r]
% \endist
% \cntmdistance{6mm}{14mm}
% \cntmApreset[dotted][gray!35]
% \istrootcntmA'[west](a1)(1-1)
% \istbA{a}[a]{\dbinom{\gamma a + \tau}{1-(1-a)^2 \pmb{+ \lambda a}}}
% \endist
% \istrootcntmA[west](b1)(1-2)
% \istbA{a}[a]{\dbinom{\gamma a}{1-a^2 \pmb{+ \lambda (1-a)}}}
% \endist
% \istrootcntmA[east](b2)(2-1)
% \istbA{a}[a]{\dbinom{\gamma a}{1-a^2 \pmb{+ \lambda a}}}
% \endist
% \istrootcntmA'[east](a2)(2-2)
% \istbA{a}[a]{\dbinom{\gamma a + \tau}{1-(1-a)^2 \pmb{+ \lambda (1-a)}}}
% \endist
% \xtInfoset[dashed](a1)(b2){R}
% \xtInfoset[dashed](b1)(a2){R}
% \end{istgame}
%     \caption{The extensive-form game with motivated receivers. S's payoffs are listed on top; R's payoffs are listed on bottom. Dashed lines denote information sets. Bold text indicates motivated terms added to \Cref{main-game}.}
%     \label{motivated-game}
% \end{figure}

\subsection{Specifying higher-order beliefs}

We now allow for incomplete information over the receiver's type $\lambda$, and particularly do not assume that there is common knowledge over $\lambda$; the sender and the receiver may have different beliefs about $\lambda$ and may have different higher-order beliefs. 

I denote the receiver's first-order belief about $\lambda$ as a probability distribution $\Delta_R(\lambda)$.\footnote{One natural assumption is that the receiver knows $\lambda$, but higher-order beliefs may be wrong.} The receiver's second-order belief reflects his belief about the sender's belief $\Delta_R(\Delta_S(\lambda))$. The receiver's third-order belief reflects his belief about the sender's belief about his belief: $\Delta_R(\Delta_S(\Delta_R(\lambda)))$, and so on.

Similarly, the sender's first-order belief about $\lambda$ is a probability distribution $\Delta_S(\lambda)$. The sender's second-order belief reflects her belief about the receiver's belief: $\Delta_S(\Delta_R(\lambda))$. The sender's third-order belief reflects her beliefs about the receiver's belief about her belief: $\Delta_S(\Delta_R(\Delta_S(\lambda)))$, and so on. 

Formally, I define each player's belief hierarchy in the following recursive manner:\footnote{I assume common knowledge over other game characteristics, such as over the sender's knowledge of the state. An extension could look at cases in which senders may misinfer from Nature's signals.} 

\begin{align*}
    \Delta_{S,k+1}(\lambda) &= \Delta_S(\Delta_{R,k}(\lambda)) \text{ and} \\
    \Delta_{R,k+1}(\lambda) &= \Delta_R(\Delta_{S,k}(\lambda)) \text{ for each }k=1,2,...
\end{align*}

For simplicity, we assume that both the sender and the receiver have point beliefs for each element in their belief hierarchy. I denote $\hat{\lambda}_{i,k}$ to be player $i$'s point belief $\Delta_{i,k}(\lambda)$, and will omit the subscript when $k=1$. In line with the psychological literature on bias blind spots (\cite{PLR02}; \cite{PGR04}; \cite{P07}), we assume that the receiver's belief about $\lambda$ is lower than the sender's belief about $\lambda$: $\hat{\lambda}_{S} > \hat{\lambda}_{R}$. We assume nothing about the accuracy of first-order beliefs, except that $\hat{\lambda}_i \in [0,2)$ for simplicity. One natural possibility is that the receiver knows his $\lambda$, so $\hat{\lambda}_{R} = \lambda$ (and so the sender overstates the receiver's bias).

To close the model, we assume that the receiver projects his belief onto the sender when constructing higher-order beliefs, and that the sender is aware of this projection (e.g. \cite{McGee-WP}):

\begin{align*}
    \hat{\lambda}_{S,k} &= \hat{\lambda}_{R} \text{ and} \\
    \hat{\lambda}_{R,k} &= \hat{\lambda}_{R} \text{ for each }k=2,3,...
\end{align*}

One psychology behind such a formulation is that individuals believe they are unique in their low level of bias. That is, a receiver thinks of himself as relatively unbiased --- in contrast to others --- and projects his perception of his unbiasedness onto others' beliefs.\footnote{An alternative theory is \textcite{C03}, which considers a setting in which sender beliefs about receiver bounded rationality lead to profitable cheap talk in zero-sum games. That setup can lead to similar predictions when receivers are more boundedly rational about strategies than senders are. This asymmetry is hard to rule out, but since roles are randomly assigned, it seems less plausible than having asymmetry in beliefs about ones own versus others' biases.}

\subsection{Motivated equilibrium}

I now consider how beliefs about $\lambda$ affect equilibrium behavior. It is possible for the receiver plays as if his strategy is in a BNE of the game in which $\lambda = \hat{\lambda}_R$ is common knowledge, while the sender plays a best response to the receiver's strategy under her belief that $\lambda = \hat{\lambda}_S$. 

\begin{definition}
The strategy profile $\left(s_S (\theta, \{ \hat{\lambda}_{S,k} \} ), s_R (x, \{ \hat{\lambda}_{R,k} \} ), s_R (x, \lambda ) \right)$ constitutes a \textit{motivated equilibrium} of the game $\Gamma = \left[I = \{S,R\},\{s_i\}_{i \in I}, \{u_i\}_{i \in I}, \Theta, \pi, \lambda, \{\hat{\lambda}_i\}_{i \in I}\right]$ described above if:
\begin{itemize}
    \item $s_S (\theta, \{ \hat{\lambda}_{S,k} \})$ is a best response to $s_R(x, \{\hat{\lambda}_{R,k} \})$;
    \item $s_R(x, \{\hat{\lambda}_{R,k}\})$ is a best response to $s_S (\theta, \{ \hat{\lambda}_{R,k} \})$; and
    \item $s_S (\theta, \{ \hat{\lambda}_{R,k} \})$ is a best response to $s_R(x, \{\hat{\lambda}_{R,k} \})$.
\end{itemize}
\end{definition}

In a motivated equilibrium, the receiver's strategy is a best response (given the true value of $\lambda$) to a hypothetical sender who holds the receiver's belief $\hat{\lambda}_{R}$ about $\lambda$ and who plays a best response to the receiver's strategy. The actual sender, who holds belief $\hat{\lambda}_{S}$ about $\lambda$ and is aware of the receiver's higher-order beliefs, plays a best response to the receiver's strategy. 

\begin{proposition}
    There exists a motivated equilibrium for this game (possibly in mixed strategies). 
\end{proposition}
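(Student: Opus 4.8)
The plan is to decompose the solution concept into two independent existence problems. The second and third bullet points of the definition, taken together, assert exactly that the pair $\left(s_S(\theta,\{\hat{\lambda}_{R,k}\}),\, s_R(x,\{\hat{\lambda}_{R,k}\})\right)$ is a Bayesian Nash equilibrium of the auxiliary game in which the receiver's motivation is commonly known to equal $\hat{\lambda}_R$; the first bullet then only requires that the actual sender, holding belief $\hat{\lambda}_S$, possess a best response to the receiver's strategy $s_R(x,\{\hat{\lambda}_{R,k}\})$ already pinned down by that equilibrium. I would therefore (i) establish existence of the auxiliary BNE, (ii) append the actual sender's best response, and (iii) note that the third strategy $s_R(x,\lambda)$ is the receiver's induced play under the true $\lambda$ and needs no separate fixed point. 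Step (ii) is immediate regardless of how $\hat{\lambda}_S$ shapes the sender's forecast of ratings: in each state the sender chooses between two messages to maximize her anticipated payoff $\gamma a + \tau$ adjusted for whether the message is truthful, a finite optimization whose maximizer exists, with indifference resolved by any mixture; the same observation makes the induced $s_R(x,\lambda)$ in (iii) well-defined.

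The bulk of the work is step (i), where I would cast the auxiliary game as a concave game on compact, convex strategy sets and invoke the Debreu--Glicksberg--Fan existence theorem. I would reduce the sender's behavioral strategy to $\sigma=(\sigma_H,\sigma_L)\in[0,1]^2$, the probabilities of truthful messaging in the two states, and take the receiver's strategy to be the rating pair $(a(x_H),a(x_L))\in[0,1]^2$; both sets are compact and convex. The key modeling choice is to write each player's expected payoff \emph{ex ante}, as an average of per-outcome payoffs weighted by $\pi$ and $\sigma$, rather than through interim posteriors. This makes both payoffs polynomials in $(\sigma,a)$, hence jointly continuous, and side-steps the usual discontinuity at off-path messages: a message sent with probability zero simply drops out of the receiver's ex-ante payoff, leaving him indifferent over his rating there, exactly as a BNE permits. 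The sender's payoff is linear in $\sigma$ for fixed $a$, hence quasi-concave; the receiver's payoff is, for fixed $\sigma$, a nonnegative combination of the per-message terms $1-(1-a)^2+\hat{\lambda}_R a$ and $1-a^2+\hat{\lambda}_R a$ (together with their $x_L$ analogues), each concave in the relevant rating through its $-a^2$ term, with $a(x_H)$ and $a(x_L)$ entering separately, so the receiver's payoff is concave in his own action. Continuity plus own-quasi-concavity on compact convex domains yields an equilibrium of the auxiliary game, which is the required BNE; since the reduced strategy $\sigma$ already encodes message randomization, this equilibrium is ``pure'' in the reduced game but generally mixed in the signaling game, matching the parenthetical in the statement.

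The step I expect to be the main obstacle is justifying that (i) genuinely needs this fixed-point machinery rather than an explicit construction, together with the boundary behavior of the motivated ratings. A naive attempt to exhibit the truthful separating profile fails: under truthful play the receiver rates $a(x_H)=1$ and $a(x_L)=1-\hat{\lambda}_R/2$, so the low-state sender strictly prefers to lie whenever $\tau/\gamma<\hat{\lambda}_R/2$, and no pure separating equilibrium survives. One must then either track the pooling and interior-mixed candidates by hand --- applying the intermediate value theorem to the low-state sender's indifference condition $\gamma\bigl(a(x_H)-a(x_L)\bigr)=\tau$ as $\sigma_L$ varies, while accounting for the clamping of $a(x_H)$ and $a(x_L)$ to $[0,1]$ through the $\min$ and $\max$ in the receiver's best response --- or, as I would prefer, bypass the casework with the concave-game theorem. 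That clamping is precisely what complicates a direct construction, since the receiver's best response is piecewise and can rest on the boundary; it is the reason the general argument, which requires only continuity and concavity of the ex-ante payoffs, is the cleaner route to existence.
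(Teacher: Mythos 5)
Your proposal is correct and takes essentially the same route as the paper: the paper likewise reduces existence of a motivated equilibrium to (a) existence of a BNE of the auxiliary game in which $\lambda = \hat{\lambda}_R$ is common knowledge, which it dispatches by citing Glicksberg (1952) for compact strategy spaces with continuous utilities, and (b) the immediate existence of the actual sender's best response given $\hat{\lambda}_S$, appended to the receiver's equilibrium strategy. Your only departure is one of detail rather than substance: you verify the fixed-point hypotheses explicitly via the concave-game (Debreu--Fan--Glicksberg) theorem on reduced behavioral strategies in $[0,1]^2$, with ex-ante payoffs handling off-path ratings, where the paper simply invokes the mixed-strategy existence result.
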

Existence of a motivated equilibrium follows immediately from existence of BNE. A BNE exists in this game for all $\hat{\lambda}_R$, since strategy spaces are compact in $\mathbb{R}$ and utilities are continuous (\cite{G52}). This describes R's strategy $s_R(x, \{\hat{\lambda}_{R,k}\})$ in a motivated equilibrium. Likewise, there exists a best response for S to R's strategy for $\hat{\lambda}_S$, which is $s_S (\theta, \{ \hat{\lambda}_{S,k} \})$. 

We can strengthen this proposition by calculating the pure-strategy equilibria.

\begin{proposition} \label{prop:list-eq}
There exists a motivated equilibrium for this game in pure strategies. \Cref{motivated-equilibria} enumerates the six possible motivated equilibria in pure strategies. The existence of each equilibrium depends on an inequality relating $\tau / \gamma$ to functions of other parameters.

\renewcommand\arraystretch{1.3}
\begin{table}[ht]
    \centering
\begin{tabular}{l l l}  
\toprule
\textbf{R plays BR to $s_S(\cdot)$} & \textbf{S's strategy $s_S(\cdot)$} & \textbf{Conditions} \\
      \hline
      $x_H | \theta_H$ and $x_L | \theta_L$  &  $x_H | \theta_H$ and $x_L | \theta_L$  
      & $\tau/\gamma \geq \hat{\lambda}_S/2$  \\
      \hline
      $x_H | \theta_H$ and $x_H | \theta_L$  &  $x_H | \theta_H$ and $x_H | \theta_L$ 
      & $\tau/\gamma \leq \min\{\pi + \hat{\lambda}_R/2,1\} - a(x_L)$   \\
      \hline
      $x_L | \theta_H$ and $x_L | \theta_L$  &  $x_L | \theta_H$ and $x_L | \theta_L$ 
      & $\tau/\gamma \leq \max\{1-\pi - \hat{\lambda}_S/2,0\} - a(x_H)$   \\
      \hline
      $x_H | \theta_H$ and ${\pmb{\color{teal}x_L | \theta_L}}$  &  $x_H | \theta_H$ and ${\pmb{\color{teal}x_H | \theta_L}}$ 
      & $\tau/\gamma \in [\hat{\lambda}_R/2, \hat{\lambda}_S/2]$  \\
      \hline
      ${\pmb{\color{teal}x_L | \theta_H}}$ and $x_L | \theta_L$  &  ${\pmb{\color{teal}x_H | \theta_H}}$ and $x_L | \theta_L$ 
      & $\tau/\gamma \leq   \max\{1-\pi - \hat{\lambda}_R/2,0\} - a(x_H)$, \\
      & & $\tau/\gamma \geq \max\{1-\pi - \hat{\lambda}_S/2,0\} - a(x_H)$,    \\
      & & $\tau/\gamma \geq -(\max\{1-\pi - \hat{\lambda}_S/2,0\} - a(x_H))$    \\
      \hline
      ${\pmb{\color{teal}x_L | \theta_H}}$ and ${\pmb{\color{teal}x_L | \theta_L}}$  &  ${\pmb{\color{teal}x_H | \theta_H}}$ and ${\pmb{\color{teal}x_H | \theta_L}}$ 
      & $\tau/\gamma \leq \max\{1-\pi - \hat{\lambda}_R/2,0\} - a(x_H)$, \\
      & & $\tau/\gamma \leq -(\max\{1-\pi - \hat{\lambda}_S/2,0\} - a(x_H))$    \\
\bottomrule
\end{tabular}
\begin{scriptsize}
\caption{The list of pure-strategy motivated equilibria and their corresponding conditions. Teal, bold text: S's motivated equilibrium strategy differs from the strategy that R plays a best response to.}
\label{motivated-equilibria}
\end{scriptsize}
\end{table}

\renewcommand\arraystretch{1}
\end{proposition}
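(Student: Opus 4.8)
The plan is to reduce the three-part motivated-equilibrium conditions to a pair of scalar inequalities in $\tau/\gamma$ for each candidate pure-strategy profile, then enumerate the profiles and check exhaustiveness. The engine is the sender's binary choice in each state: facing ratings $a(x_H),a(x_L)$, the sender sends the truthful message in state $\theta_H$ iff $\tau/\gamma \ge a(x_L)-a(x_H)$, and the truthful message in state $\theta_L$ iff $\tau/\gamma \ge a(x_H)-a(x_L)$. The only wrinkle is that the two senders in the definition evaluate these gaps with different biases: the hypothetical sender that pins down R's strategy (conditions two and three of the definition) perceives R's ratings through $\hat{\lambda}_{R}$, whereas the actual sender (condition one) perceives them through $\hat{\lambda}_{S}$. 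I would therefore compute, for each candidate BNE of R, the on-path Bayesian posteriors from the relevant $p_H,p_L$, substitute into \eqref{eq:br-receiver}, and keep the off-path rating as a free parameter fixed by R's unrestricted off-path belief.

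First I would establish condition (A), consistency of R's strategy: for each of the three pure BNE that the $\hat{\lambda}_{R}$-game admits --- truthful, $x_H$-pooling, $x_L$-pooling --- I would write down the rating vector and verify that the hypothetical $\hat{\lambda}_{R}$-sender's best response reproduces the conjectured message distribution, yielding the inequality involving $\hat{\lambda}_{R}$ in each row (e.g. $\tau/\gamma \ge \hat{\lambda}_{R}/2$ for truthful, $\tau/\gamma \le \min\{\pi+\hat{\lambda}_{R}/2,1\}-a(x_L)$ for $x_H$-pooling). Next I would impose condition (B): holding R's ratings fixed but re-perceived through $\hat{\lambda}_{S}$, I would solve the actual sender's two binary choices to obtain her best-response strategy and its defining inequality in $\hat{\lambda}_{S}$. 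Cross-tabulating R's three consistent strategies against the sender's three possible pure best responses gives nine combinations; I would discard the three that are infeasible --- R-truthful with the sender pooling on $x_L$, and R-$x_H$-pooling with the sender playing truthful or pooling on $x_L$ --- since each would require the sender to send a false message into a rating weakly dominated for her (the false message earns a rating no higher than the truthful one while forfeiting $\tau$), leaving exactly the six rows of \Cref{motivated-equilibria}. The teal rows are precisely the profiles where $\hat{\lambda}_{S}>\hat{\lambda}_{R}$ opens a wedge, e.g. $\tau/\gamma\in[\hat{\lambda}_{R}/2,\hat{\lambda}_{S}/2]$, in which R's consistency condition and the actual sender's best response disagree.

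For existence in pure strategies I would split on $\tau/\gamma$. If $\tau/\gamma \ge \hat{\lambda}_{R}/2$ the truthful profile is a $\hat{\lambda}_{R}$-BNE, so R's strategy is fixed and the actual sender plays either truthfully (row one, when $\tau/\gamma \ge \hat{\lambda}_{S}/2$) or pools on $x_H$ (row four, when $\tau/\gamma \le \hat{\lambda}_{S}/2$); these overlap only on a boundary and jointly cover the region. If $\tau/\gamma < \hat{\lambda}_{R}/2$, then since $\hat{\lambda}_{R}/2 < 1 \le \min\{\pi+\hat{\lambda}_{R}/2,1\}$ the $x_H$-pooling BNE exists for R (take the off-path belief giving $a(x_L)=0$), and the same inequality makes pooling on $x_H$ the sender's best response, delivering row two. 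Hence some row's condition holds for every parameter vector, so a pure-strategy motivated equilibrium always exists.

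I expect the main obstacle to be bookkeeping rather than a single hard idea: correctly tracking which bias, $\hat{\lambda}_{R}$ or $\hat{\lambda}_{S}$, enters each inequality, and handling the off-path ratings $a(x_H),a(x_L)$ in the pooling rows, which are fixed by R's unrestricted off-path beliefs and must be chosen so that the consistency condition and the sender's best response hold simultaneously. Verifying that the three discarded combinations are genuinely infeasible for \emph{all} admissible off-path beliefs --- not merely the convenient ones --- is the step most likely to require care.
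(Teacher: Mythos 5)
Your architecture is the same as the paper's: the paper first proves \Cref{lem:bne}, characterizing the three pure-strategy BNE of the game for an arbitrary bias parameter (your condition (A), instantiated at $\hat{\lambda}_R$), then computes the actual sender's best response to those rating vectors with $\hat{\lambda}_S$ substituted for $\hat{\lambda}_R$ (your condition (B)), and enumerates the consistent pairs. Your cross-tabulation, the identity of the three discarded cells, the six surviving rows, and the covering argument for existence all coincide with the paper's proof; indeed your existence argument is more explicit than the paper's, which leaves the covering of the $\tau/\gamma$ line implicit. One small slip there: the chain $\hat{\lambda}_R/2 < 1 \le \min\{\pi + \hat{\lambda}_R/2, 1\}$ is false in general (the second inequality requires $\pi + \hat{\lambda}_R/2 \ge 1$); what you need, and what is true, is $\hat{\lambda}_R/2 \le \min\{\pi + \hat{\lambda}_R/2, 1\}$, which holds because $\pi \ge 0$ and $\hat{\lambda}_R < 2$, so the conclusion survives.

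The one genuine gap is your reason for discarding the cell (R plays a best response to $x_H$-pooling, S plays truthful). Your stated criterion --- the sender would have to ``send a false message into a rating weakly dominated for her'' --- cannot apply there, because the truthful strategy sends no false message; under that criterion this cell would survive, and your proof would then list seven candidate equilibria rather than six. The correct argument is the reverse domination, and it runs through monotonicity in the bias: R's consistency requires the hypothetical $\hat{\lambda}_R$-sender to prefer $x_H$ in state $\theta_L$, i.e.\ $\tau/\gamma \le \min\{\pi + \hat{\lambda}_R/2, 1\} - a(x_L)$; since $\hat{\lambda}_S \ge \hat{\lambda}_R$ implies $\min\{\pi + \hat{\lambda}_S/2, 1\} \ge \min\{\pi + \hat{\lambda}_R/2, 1\}$, the actual sender perceives an even larger rating premium for $x_H$, so she too prefers the false message $x_H$ in state $\theta_L$, and truth-telling is not her best response. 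This is exactly the paper's ``by the same argument'' step in the $x_H$-pooling case. Note that the same monotonicity logic, not domination alone, is what bounds the off-path rating $a(x_L)$ when you discard (R-$x_H$-pooling, S-$x_L$-pooling); you correctly flagged the off-path bookkeeping as delicate, but for the (R-$x_H$-pooling, S-truthful) cell the care required is not additional verification under your criterion --- the criterion itself must be replaced.
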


Motivated equilibria can be categorized into three strategy profiles in which the receiver strategy is a best response to the sender's strategy as per \Cref{eq:br-receiver}, and three strategy profiles in which the receiver's strategy is not a best response. I discuss the equilibria below and how their existence depends on the parameters in the game, and the details of the proof are relegated to \Cref{appendix-a}.

The first category includes a truthful equilibrium, an $x_H$-pooling equilibrium, and an $x_L$-pooling equilibrium, just as in the BNE case. Here, the truthful equilibrium only occurs when $\gamma \cdot \hat{\lambda}_S$ is sufficiently small. This is because the sender expects the receiver to give a higher rating to $x_H$ than $x_L$, so she faces a tradeoff between honesty $\tau$ and incentives $\gamma$. When $\gamma$ is sufficiently small, the truthful equilibrium exists and is unique, but when $\gamma$ is sufficiently large, no truthful equilibrium exists.

Motivated equilibria in the second category are possible for intermediate values of $\gamma$: The receiver plays a best response to a sender who believes $\lambda$ is $\hat{\lambda}_R$, while the sender --- who believes $\lambda$ is $\hat{\lambda}_S > \hat{\lambda}_R$ --- plays a different strategy. In each of these equilibria, the receiver plays a best response to a sender who plays $x_L$ in some state of the world, while the sender actually plays $x_H$ in that state. This form of naivete has a similar flavor to cognitive hierarchy (\cite{CHC04}) and cursed equilibrium (\cite{ER05}). 

Of these three equilibria, two involve the receiver playing a best response to a sender who plays an $x_L$-pooling strategy, and one involves the receiver playing a best response to a sender who plays a truthful strategy. The latter equilibrium is particularly of interest because of its psychological plausibility; the receiver expects the sender to play a truthful strategy because he thinks she thinks he is not very biased, while the sender instead always sends the $x_H$ message. An example of possible motivated equilibria using values estimated in the experiment, and where higher-order beliefs are varied, is presented in \Cref{fig:eq-viz}.

I now summarize the predictions of the model in order to generate testable hypotheses for the primary experiment. 

\textbf{Prediction 1:} Increasing the sender's incentives from $\gamma = 0$ to $\gamma \in (2\tau / \hat{\lambda}_S, 2\tau / \hat{\lambda}_R)$ leads to a motivated equilibrium in which fewer truthful messages are sent by the sender, but there is no change in strategy by the receiver. 

\textbf{Prediction 2:} In such an equilibrium, senders asymmetrically ``deviate'' in the high direction. That is, the receiver rates $x_H$ messages as being more likely to be truthful than $x_L$ messages, but the sender is actually \textit{less} truthful when sending $x_H$. 

\textbf{Prediction 3:} When the sender has no incentives $\gamma=0$ to be perceived as truthful, she does not condition messages on $\hat{\lambda}_S$. 

It is worth noting that senders condition on $\pi$ in some ME as well, and are weakly more likely to send $x_H$ as $\pi$ increases. When $\gamma$ is low, equilibria do not depend on $\pi$; however, when $\gamma$ is intermediate, the existence of an $x_H$-pooling ME strategy depends on $\pi$ being sufficiently large and the existence of an $x_L$-pooling ME strategy depends on $\pi$ being sufficiently small. Note also that the specification of higher-order beliefs is not without loss of generality.\footnote{For instance, another possibility is that $\hat{\lambda}_{S,k} = \hat{\lambda}_{S} \text{ and } \hat{\lambda}_{R,k} = \hat{\lambda}_{S} \text{ for each }k=2,3,....$ In this case, both $S$ and $R$ would believe $S$ will play the strategy of the BNE in which $\lambda = \hat{\lambda}_{S}$, while $R$ expects himself to actually play as if $\lambda = \hat{\lambda}_{R}$, leading to different ME than in the model.}

\section{Primary Experiment}
\label{experiment1}

\subsection{Design}\label{expt1-design}
Below, I outline the timing, treatment arms, and main hypotheses of the game, which follows the setup of the model. Screenshots for the pages subjects see are in \Cref{screenshots1}. 

Subjects are randomly assigned to be senders or receivers. Receivers state their prior likelihood that the answer to a factual question is greater or less than a target number. Senders learn whether the answer is actually greater or less and choose (via the strategy method) whether to send a message that says ``The answer is greater than [the target]'' or ``The answer is less than [the target].'' For each message, receivers state (via the strategy method) how likely it is that their sender's answer is truthful.

To fix ideas, consider the following question that subjects see in the experiment: 

\vspace{5mm}
\begin{addmargin}[1cm]{1cm}
\textit{The U.S. has seen a sharp rise in the share of undocumented immigrants over the past several years. Some people believe that undocumented immigrants are more likely to commit violent crimes, while others believe that undocumented immigrants are less likely to commit violent crimes.}

\textit{Texas is the only state that directly compares crime rates for US-born citizens to undocumented immigrants, and provided felony data from 2012-2018. During this time period, the felony violent crime rate was 213 per 100,000 U.S. citizens.}

\textit{This question asks about the felony violent crime rate for undocumented immigrants. Do you think it is more likely that this rate was greater or less than 213 per 100,000?}
\end{addmargin}
\vspace{5mm}

\noindent Republicans and Democrats disagree about the answer to this question, and subjects may expect Republicans to be motivated to believe the crime rate is higher, and Democrats to be motivated to believe the crime rate is lower.\footnote{Indeed, the two parties have differing prior beliefs. Republican receivers' average prior is 56 percent (subject-level clustered s.e. 3.5 pp) and Democratic receivers' average prior is 35 percent (s.e. 3.1 pp).} As such, I code the question as one on which Republicans are motivated to believe ``greater than'' is more likely than ``less than'' to be correct, and Democrats are motivated to believe ``less than'' is more likely than ``greater than'' to be correct.\footnote{If this coding is incorrect, it will work against most of the main treatment effects.} For a full detailing of topics and hypothesized motives, see \Cref{topic-motives}; for the full text of each question, see \Cref{1-question-wordings}.

For each question in the main treatment block (Block 1), the following data are elicited from senders and receivers. Details on incentives are described later.

\begin{enumerate}
\item \textbf{Receivers: Prior beliefs.} Receivers are asked and incentivized to guess the percent chance that the answer to questions like the one above is ``greater than'' or ``less than'' a target number using a scale from 0 to 100. Their reports are restricted to be in multiples of ten: 0 percent, 10 percent, 20 percent ..., 100 percent. They are incentivized to state their true belief.

\item \textbf{Senders: Message choice.} Senders are provided with the correct answer to the question and are asked to choose one of two messages to send to a receiver. The message they can send either says ``the answer is \textbf{greater than} [the target number]'' or ``the answer is \textbf{less than} [the target number].'' Senders make choices via the strategy method, choosing one message \textit{for each possible prior belief that the receiver can report}. That is, senders choose 11 messages for each question. They choose one message if the receiver has a prior of 0 percent, one message if the receiver has a prior of 10 percent, and so on.

\item \textbf{Receivers: Message rating.} Receivers are asked to rate the percent chance that each message was truthful using the strategy method and a scale from 0 to 100. That is, they rate two messages for each question. They are incentivized to state their true belief.
\end{enumerate}
\noindent For senders, the main outcome of interest is whether they choose to send the true message or the false message. For receivers, the main outcome of interest is how they rate the truthfulness of the senders' messages.

There is also a supplementary treatment block (Block 2), which has a similar structure but with two differences. First, receivers are now guessing the percent chance that a given quote from Joe Biden or Donald Trump is ``accurate'' or ``inaccurate'' (and senders choose a message that says ``the statement is \textit{accurate}'' or ``the statement is \textit{inaccurate}''). Second, senders cannot condition on, and do not learn, the receiver's prior. Senders choose one message for each question.

\subsubsection*{Treatments}

Senders and receivers are each randomized into several treatments. The main treatment arms are:
\begin{enumerate}
    \item \textbf{Topic arm:} For both senders and receivers, the topics are varied within subject. They are either political or neutral.
    \item \textbf{Information arm:} For senders, information about the receiver's party is varied within subject. Either senders know the receiver's party or they do not. In addition, senders are randomly matched with Democratic and Republican receivers, so the matched receiver's political motives are randomly either aligned with the true message or aligned with the false message. % Receivers are honestly told whether their own party is or is not revealed to the sender (and this is randomized between subjects). 
    \item \textbf{Incentives arm:} The senders' incentives are randomly varied between subjects. In the main treatment arms, the sender is incentivized to have their message be rated as truthful by the receiver or the sender is unincentivized. Receivers are honestly told what the sender's incentives are. 
\end{enumerate}
For senders, this is a 2x3x2 design; for receivers, this is a 2x2x2 design. There are two additional treatment arms that are not used for primary analyses (as specified in the pre-analysis plan): (1) There is a competition treatment in which receivers compare the truthfulness of two senders over the course of the experiment, and senders are incentivized to be rated as more truthful than their competitor; (2) There is a receiver-knowledge arm in which the receiver randomly either knows or does not know what party the sender is.

Subjects play the rounds in the order as described in \Cref{tab:timing-r}. Within a block, the questions are presented in a random order.
\renewcommand\arraystretch{1.5}

\begin{table}
\centering
\begin{tabular}{l l l}  
\toprule
\textbf{Round} & \textbf{Receivers' Block} & \textbf{Senders' Block} \\
\midrule
   Pre-randomization  &  Practice, demographics 1 & Practice, demographics 1  \\
   1-7  &  Prior beliefs Block 1 & Send messages Block 1  \\
   1-7  &  News rating Block 1 & ---  \\
   8  &  Attention check & Attention check  \\
   9-12  &  Prior beliefs Block 2 & Send messages Block 2 \\
   9-12  &  News rating Block 2 & ---   \\
   End of experiment  &  Belief elicitations, survey questions & Belief elicitations, survey questions \\
   End of experiment  &  Demographics 2, solutions & Demographics 2, solutions  \\
\bottomrule
\end{tabular}
\begin{scriptsize}
\caption{The timing of treatments in the primary experiment.}
\label{tab:timing-r}
\end{scriptsize}
\end{table}
\renewcommand\arraystretch{1}
Before their role is revealed, senders and receivers give their demographic information, including political party preference.\footnote{Other demographics are age, gender, politics, education, and race. They are also asked to do a three-item cognitive reflection task (modified from \cite{F05}).} Subject select a party or party-lean. Their party is defined as Republican (including leaner) if they state that they are Republican or an independent who leans towards the Republican party, and their party is defined as Democratic (including leaner) otherwise. Next, all subjects play a practice round, playing in both the sender and receiver roles for the given incentives arm they are in. After the practice round, subjects' roles are revealed. The timing of the practice round was chosen to ensure that subjects would be familiar with both roles in the experiment instead of only focusing on their specific role. 

In Block 1 (rounds 1-7), senders can condition on receivers' prior beliefs. In Block 2 (rounds 9-12), senders cannot condition on receivers' prior beliefs. Since, senders may infer receivers' prior beliefs from their party, Block 2 will overestimate the role that party plays relative to priors. Thus, as pre-specified, the main analyses restrict the sample to Block 1.\footnote{As discussed later, main effects are larger when Block 2 is included as well.}

At the end of the experiment, participants' beliefs about the behavior of other players are elicited. This is done first in an incentivized manner (senders predict receivers' ratings and receivers predict senders' truthfulness), and with unincentivized Likert scale questions (both about oneself and about players in the other role). After all sections are completed, correct answers and source links are provided to all participants.

\subsubsection*{Incentives}
\label{1_incentives}

I first describe the mapping of ``points'' into payoffs and then describe how subjects earn points. All subjects are incentivized using a version of the binarized scoring rule (\cite{HO13}).\footnote{This earnings system is a version of the most broadly incentive-compatible one discussed in \textcite{ACH18}. However, as discussed by \textcite{BHL-WP}, it may still lead to some ambiguity-averse hedging.} Throughout the experiment, subjects earn between 0 and 100 points for each incentivized response (e.g. an answer to a question, rating of a message, or choice of a message) they give. At the end of the experiment, ten subjects are randomly selected to receive a bonus payment. If a subject is selected, they either receive \$10 or \$100 depending on their responses. One response is randomly selected to determine payment; the percent chance that they win the bonus is equal to their points earned by this response. When a response is chosen for payment, a sender and receiver are randomly matched for the relevant question, and only the relevant choices in the strategy method are used. 

Receivers' prior beliefs and ratings are incentivized by the quadratic scoring rule. For each question, they report a prior $\pi \in [0,1]$ about the answer. If the answer is ``greater than,'' they earn $100(1-(1-\pi)^2)$ points; if the answer is ``less than,'' they earn $100(1 - \pi^2)$ points. A similar scoring rule is used to incentivize their ratings about their matched senders' messages; receivers who state that a message is truthful with probability $a$ earn $100(1-(1-a)^2)$ points if the message is truthful and $100(1 - a^2)$ points if the message is false.\footnote{Receivers in the competition treatment who state that Sender X is more truthful than Sender Y earn $100(1-(1-a)^2)$ points if Sender X has chosen more truthful messages over the course of the experiment and earn $100(1 - a^2)$ if Sender Y has.} Receivers maximize expected points by stating the closest multiple of 0.1 to their true belief. They are given a table with the points earned as a function of each rating and news type and are told that providing honest ratings is the best way to maximize expected earnings. 

Senders in the incentivized condition are incentivized based on the receivers' ratings; the points they earn for a given choice equals their matched receiver's rating of the percent chance their message was truthful. Senders in the unincentivized condition do not have this round chosen for payment.\footnote{Instead, unincentivized senders are solely paid based on their responses to the beliefs questions.} Subjects in all roles and treatments are incentivized to give accurate answers to the beliefs questions (whose answers are between 0 and 100). They are incentivized using a quadratic scoring rule. If they guess $g$ and the correct answer is $c$, they earn max($0, 100-(c-g)^2$) points.

This design makes substantial use of the strategy method. The strategy method has two clear advantages and two clear disadvantages in this design.\footnote{\textcite{BC11} is a classic paper that discusses differences in behavior between direct elicitation and the strategy method in other games, generally finding that such differences are modest.} The first advantage is that it removes much of the role that other subjects' perceptions play for reasons outside the model. For instance, a sender may otherwise send messages that they would like the receiver to see because they want the receiver to find her entertaining or likeable (e.g. \cite{SG-WP}).\footnote{Relatedly, with this version of the strategy method, subjects will not use others' behavior in previous rounds to predict messages in the current round, limiting learning effects.} The second advantage is statistical power in detecting effects; senders choose eleven messages on each question (instead of one). The data can also estimate, within-subject and within-question, the effect of priors on messages sent. The first disadvantage is that the strategy method does not reflect how people send messages in practice, but it is not clear why this would affect the role of hypothesized mechanisms. The second disadvantage is choice overload, which may lead subjects to attend less to any particular choice of message. Choice overload would likely push the rate of false news closer to 1/2, and would dampen treatment effects, if senders behave more randomly.\footnote{Effects do not noticeably change over the course of the experiment, suggesting that choice overload is not a primary confound.}

% \subsection{Hypotheses}

% The theory predicts that the sender will send more false messages when:
% \begin{enumerate}
%     \item The topic evokes motivated reasoning;
%     \item The prior is aligned with the false message;
%     \item The motives are aligned with the false message; and
%     \item The sender is incentivized to be perceived as truthful.
% \end{enumerate}

\subsection{Data}

750 subjects were recruited from Prolific (\url{prolific.co}) in September 2021 and passed a simple attention check.\footnote{13 subjects (2 percent) were excluded for incorrectly answering the attention check. Results are robust to the inclusion of these subjects.} Prolific is an online platform that was designed by social scientists in order to attain more representative subject samples; it has been shown to perform well relative to other subject pools (\cite{GRW-WP}). The subject pool was restricted to regular social media users: Prolific asks platform users which websites they use ``on a regular basis (at least once a month),'' and the study was only available to Prolific users who say they regularly use Facebook, Twitter, Instagram, Reddit, or LinkedIn. Subjects were recruited so that 375 were Democrats or Independents who lean towards the Democratic Party and the other 375 were Republicans or Independents who lean towards the Republican Party. Subjects were required to have had prior experience on the platform.\footnote{All subjects needed to have completed 100 prior studies and have at least a 90-percent approval rating. Most subjects were required to have registered for Prolific prior to July 2021. These specifications were preregistered; however, due to an unexpectedly slow sign-up rate, 19 percent of subjects were eventually recruited from a larger sample that included registrations after July 2021.}

The subjects were split evenly into 375 senders and 375 receivers. Within each role, subjects were randomly chosen to be in each of the three incentives treatments. Overall, there were 254 subjects (34 percent) in the unincentivized treatment, 248 (33 percent) in the main incentives treatment, and 248 (33 percent) in the competition treatment.

Over the course of the experiment, senders made 30,340 choices of messages.\footnote{This is 99.9 percent of the targeted number of 30,375 choices. The remaining 0.1 percent are instances where the subject did not select an answer.} Most analyses restrict to subjects in the unincentivized treatment (10,107 choices) and the incentivized treatment (10,036 choices). Note that for analyses in this paper, I include all messages \textit{chosen} by senders in the data. That is, I include each strategy-method choice in analyses: \{Send $x$ if R has prior of 0/10, send $x$ if R has prior of 1/10, send $x$ if R has prior of 2/10, ...\}. An alternative approach would be to restrict analyses to messages that correspond to the receiver's actual prior or to focus on specific priors. This approach has the drawback of limiting the number of observations substantially (typically by a factor of 11), but robustness checks indicate that results look similar. 

Receivers gave 4,125 prior beliefs and 8,237 news ratings.\footnote{Priors constituted 100 percent of the targeted number of 4,125. News ratings constituted 99.8 percent of the targeted number of 8,250.} Most analyses focus on the unincentivized treatment (2,832 ratings) and the incentivized treatment (2,726 ratings). As with senders, I include all messages rated by receivers in the data. That is, I include both \{Rate truthfulness of message if it says $x_H$\} and \{Rate truthfulness of message if it says $x_L$\}. I analyze treatment balance for the main incentives treatment for senders (Appendix \Cref{balance-table-s}) and receivers (Appendix \Cref{balance-table-r}). These tables show no evidence that there are significant differences by treatment.

\subsection{Main results}

The effects of the receiver's prior and party on senders are evident from the raw data. \Cref{1-raw-data} shows that both the receiver's party and prior affect incentivized senders' truthfulness in the predicted direction. Consistent with Prediction 2, senders send fewer false messages when the receiver's prior is aligned with the truth and send fewer false messages --- conditional on the receiver's prior --- when the receiver's party is aligned with the truth.

\begin{figure}[htb!]
    \caption{The Effect of the Receiver's Prior and Party on Sending False News}
    \label{1-raw-data}
    \centering
\includegraphics[width=.92\textwidth]{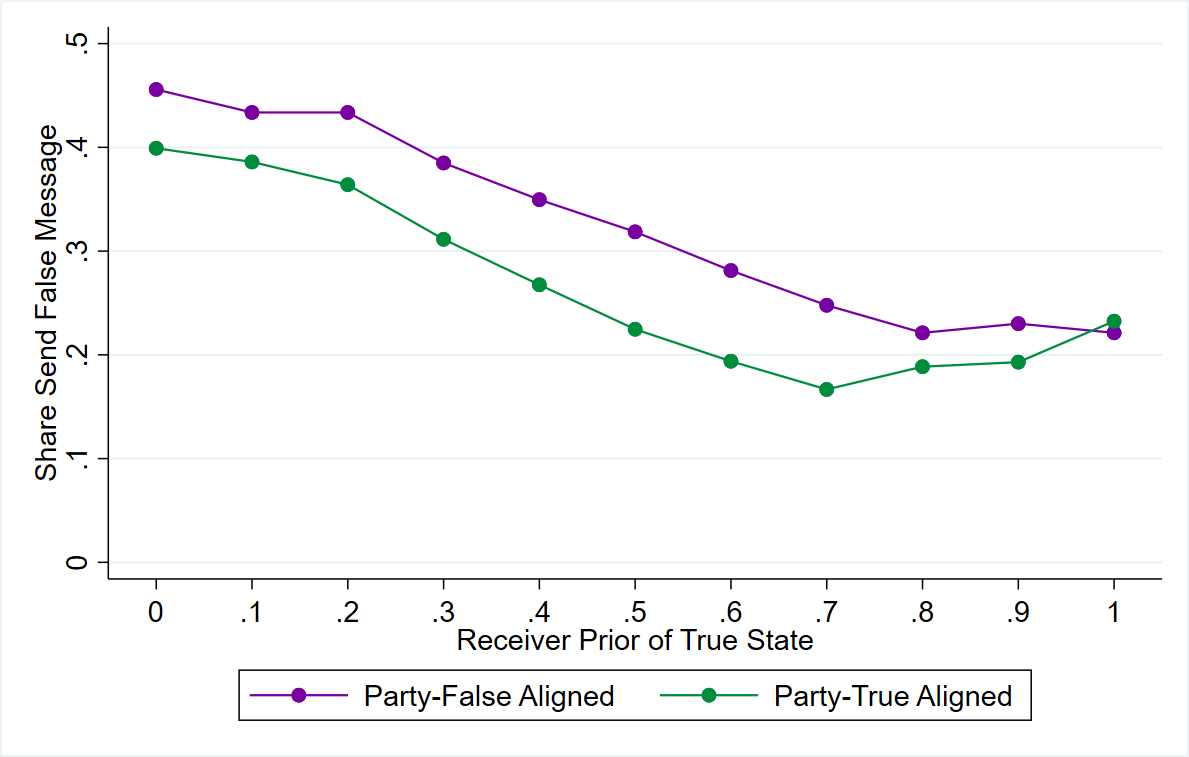}
\begin{threeparttable}
\begin{tablenotes}
\begin{scriptsize}
\vspace{-4mm}
\item \textbf{Notes:} Receiver Prior of True State: the receiver's belief that the true state is correct. Party-True Aligned: indicator for the receiver's party being revealed and aligned with the true message. Party-False Aligned: indicator for the receiver's party being revealed and aligned with the false message. Observations restricted to senders who are incentivized and learn the party of the receiver.
\end{scriptsize}
\end{tablenotes}
\end{threeparttable}
\end{figure}

To causally identify the effect of the receiver's party and prior on the senders' behavior, I run a set of within-subject regressions. The main specification regresses an indicator for sending the false message on an indicator that equals one if the false message is aligned with the receiver's party (Party-False Aligned) for each subject $i$, question topic $q$, and round $r$. I control for the receiver's prior belief and include fixed effects for $i$, $q$, and $r$. 
\begin{equation*}
\text{\textit{SendFalse}}_{iqr} = \alpha + \beta_1 \cdot 1(\text{\textit{PartyFalse}})_{iqr} + \beta_2 \cdot \text{\textit{PriorFalse}}_{iqr} + \nu FE_i + \delta FE_q + \zeta FE_r + \epsilon_{iqr}
\end{equation*}
\Cref{among-incentivized} shows, consistent with \Cref{1-raw-data}, that incentivized senders are more likely to send the false message when good news for the receiver's party is false than in other treatments, controlling for the receiver's prior over the true state.

The specifications differ in the group that Party-False Aligned is compared to. The first column shows that subjects send more false messages when Party-False is aligned than when Party-True is aligned on political topics. The second column shows that subjects send more false messages when Party-False is aligned than when the receiver's party is unknown. The third column suggests that subjects send more false messages when Party-False is aligned than they do on neutral topics.\footnote{Note that since I am comparing political to neutral topics, the last specification does not include question-level fixed effects.}

\begin{center}
\begin{threeparttable}[htbp!]
\begin{footnotesize}
\caption{Factors that lead incentivized subjects to send false messages}
{
\def\sym#1{\ifmmode^{#1}\else\(^{#1}\)\fi}
\begin{tabular}{l*{3}{c}}
\hline\hline
                    &\multicolumn{1}{c}{Vs. Party-True Aligned}         &\multicolumn{1}{c}{Vs. No Info}         &\multicolumn{1}{c}{Vs. Neutral Topics}         \\
\hline
Party-False Aligned &    0.071\sym{**} &    0.072\sym{***}&    0.049\sym{*}  \\
                    &  (0.032)         &  (0.023)         &  (0.029)         \\
Prior-False Aligned &    0.248\sym{***}&    0.229\sym{***}&    0.226\sym{***}\\
                    &  (0.053)         &  (0.059)         &  (0.060)         \\
Question FE         &$\checkmark$         &$\checkmark$         &                  \\
Subject FE          &$\checkmark$         &$\checkmark$         &$\checkmark$         \\
Round FE            &$\checkmark$         &$\checkmark$         &$\checkmark$         \\
Vs. Party-True Aligned\hspace{10mm} &$\checkmark$         &                  &                  \\
Vs. No Info         &                  &$\checkmark$         &                  \\
Vs. Neutral Topics  &                  &                  &$\checkmark$         \\
\hline
Observations        &     4990         &     4705         &     4055         \\
Subjects            &      124         &      123         &      124         \\
Mean                &    0.296         &    0.297         &    0.322         \\
\hline\hline
\multicolumn{4}{l}{\footnotesize Subject-level clustered standard errors in parentheses}\\
\multicolumn{4}{l}{\footnotesize \sym{*} \(p<0.10\), \sym{**} \(p<0.05\), \sym{***} \(p<0.01\)}\\
\end{tabular}
}

\label{among-incentivized}
\end{footnotesize}
\begin{tablenotes}
\begin{scriptsize}
\vspace{-2mm}
\item \textbf{Notes:} OLS, errors clustered at subject level. Dependent variable: indicator for sender choosing the false message. Each column represents a different subset of the data, and Vs. lines indicate the comparison group. Party-False Aligned: indicator for the receiver's party being revealed and aligned with the false message. Prior-False Aligned: the receiver's prior belief that the incorrect answer is true. Party-True Aligned: indicator for the receiver's party being revealed and aligned with the true message. No Info: indicator for the receiver's party not being revealed. By chance, one subject happened to always learn the receiver's party. Only includes questions in Block 1.
\end{scriptsize}
\vspace{5mm}
\end{tablenotes}
\end{threeparttable}
\end{center}

\Cref{among-incentivized} also shows that incentivized senders are more likely to send the false message when receivers' priors are false. The first column shows that the treatment effect of having the receiver's party aligned with the false state rather than the true state is equivalent to the treatment effect of the receiver's prior changing by 0.071 / 0.248 = 29 pp.

The effect that senders expect receivers' party to have on inference is similar when beliefs are explicitly elicited. Senders are asked to predict the average rating of receivers whose prior is 1/2 when they receive good news for their party or bad news for their party. Incentivized senders estimate that receivers will have an average gap of 30 pp (s.e. 3 pp).

Next, we turn to the effect that incentives have on the truthfulness of messages. \Cref{incentives-effect} compares behavior of the incentivized senders to behavior of the unincentivized senders using a between-subject specification. This specification regresses $\text{\textit{SendFalse}}_{iqr}$ on an indicator for the incentives treatment and a set of demographic and treatment controls ($X_i$) as well as fixed effects for $q$ and $r$.\footnote{Controls include gender, race age, own party, education, score on a cognitive reflection task (CRT), and whether the receiver knows the sender's party.}
\begin{equation*}
\text{\textit{SendFalse}}_{iqr} = \alpha + \beta \cdot 1(\text{\textit{Incentivized}})_{i} + \eta X_i + \delta FE_q + \zeta FE_r + \epsilon_{iqr}.
\end{equation*}

Consistent with Prediction 1, \Cref{incentives-effect} shows that there is a negative effect of incentives on message truthfulness. In particular, the first column of \Cref{incentives-effect} shows that senders are more likely to send the false message when they are incentivized to be perceived as truthful. As shown by the second column, this treatment effect is more pronounced when constrained to the questions on which receivers' party is misaligned with the truth. The final column shows that incentivized senders send more false messages when receivers' party is misaligned with the truth (as in \Cref{among-incentivized}), but that unincentivized senders are not directionally affected by this condition. Next, we explore this unincentivized group further.

\begin{center}
\begin{threeparttable}[htbp!]
\begin{footnotesize}
\caption{The effect of incentives on sending false messages}
{
\def\sym#1{\ifmmode^{#1}\else\(^{#1}\)\fi}
\begin{tabular}{l*{3}{c}}
\hline\hline
                    &\multicolumn{1}{c}{All Pol. Questions}        &\multicolumn{1}{c}{Party-False Aligned}         &\multicolumn{1}{c}{Interaction}         \\
\hline
Incentivized        &       0.073\sym{***}&    0.107\sym{***}&    0.051\sym{*}  \\
                    &  (0.026)           &  (0.037)          &  (0.028)         \\
Party-False Aligned x Incentivized  &            &                   &    0.062\sym{**} \\
     &                  &                    &  (0.026)         \\
Party-False Aligned x Unincentivized &                  &                &   -0.003         \\
   &                  &                  &   (0.027)         \\
Question FE         &$\checkmark$           &$\checkmark$         &$\checkmark$         \\
Round FE            &$\checkmark$           &$\checkmark$         &$\checkmark$         \\
Subject controls   &$\checkmark$     &$\checkmark$       &$\checkmark$         \\
All Questions       &$\checkmark$         &                &$\checkmark$         \\
Only Party-False Aligned &                   &$\checkmark$         &                  \\
\hline
Observations        &    14248         &      4702        &    14248         \\
Subjects            &      249              &      220          &      249         \\
Mean                &    0.250         &    0.272       &    0.250         \\
\hline\hline
\multicolumn{4}{l}{\footnotesize Subject-level standard errors in parentheses}\\
\multicolumn{4}{l}{\footnotesize \sym{*} \(p<0.10\), \sym{**} \(p<0.05\), \sym{***} \(p<0.01\)}\\
\end{tabular}
}

\label{incentives-effect}
\end{footnotesize}
\begin{tablenotes}
\begin{scriptsize}
\vspace{-2mm}
\item \textbf{Notes:} OLS, errors clustered at subject level. Dependent variable: indicator for sender choosing the false message. Subject controls: Gender, race, age, own party, education, CRT score, and whether the receiver knows the sender's party. Only includes observations in Block 1.
\end{scriptsize}
\vspace{5mm}
\end{tablenotes}
\end{threeparttable}
\end{center}

% \subsection{Other results}

% TBD

\subsection{Unincentivized senders and own-party effects}

An alternative explanation for the results among incentivized senders is that senders have an innate preference to be rated as truthful. Another concern is that the way the experiment is designed pushes senders to condition on receivers' party. For instance, perhaps the highly politicized topics, the varying of receivers' party and priors within subject, and the salience of the text may all push senders towards the observed treatment effects. 

However, these issues arise across all the incentive treatments, and treatment effects are entirely driven by the explicit incentives. In particular, the main treatments have little effect on unincentivized subjects, as shown in Appendix \Cref{among-unincentivized}, which replicates \Cref{among-incentivized} among the unincentivized senders. Consistent with Prediction 3, there is no statistically significant effect of receivers' party in any specification, and estimates are close to zero. The effect of receivers' priors is positive, but significantly lower than the effect for the incentivized senders. This result demonstrates the important role that incentives play; unincentivized senders do not seem to inherently value aligning their messages with the receiver's motivated beliefs.

\begin{center}
\begin{threeparttable}[htbp!]
\begin{footnotesize}
\caption{Factors that lead unincentivized senders to send false messages}
{
\def\sym#1{\ifmmode^{#1}\else\(^{#1}\)\fi}
\begin{tabular}{l*{3}{c}}
\hline\hline
                    &\multicolumn{1}{c}{Vs. Party-True Aligned}         &\multicolumn{1}{c}{Vs. No Info}         &\multicolumn{1}{c}{Vs. Neutral Topics}         \\
\hline
Party-False Aligned &    0.005         &   -0.038         &   -0.011         \\
                    &  (0.032)         &  (0.030)         &  (0.024)         \\
Prior-False Aligned &    0.063         &    0.074\sym{*}  &    0.071         \\
                    &  (0.039)         &  (0.040)         &  (0.045)         \\
Question FE         &$\checkmark$         &$\checkmark$         &                  \\
Subject FE          &$\checkmark$         &$\checkmark$         &$\checkmark$         \\
Round FE            &$\checkmark$         &$\checkmark$         &$\checkmark$         \\
Vs. Party-True Aligned\hspace{10mm} &$\checkmark$         &                  &                  \\
Vs. No Info         &                  &$\checkmark$         &                  \\
Vs. Neutral Topics  &                  &                  &$\checkmark$         \\
\hline
Observations        &     4636         &     4619         &     3920         \\
Subjects            &      125         &      125         &      125         \\
Mean                &    0.211         &    0.215         &    0.203         \\
\hline\hline
\multicolumn{4}{l}{\footnotesize Subject-level clustered standard errors in parentheses}\\
\multicolumn{4}{l}{\footnotesize \sym{*} \(p<0.10\), \sym{**} \(p<0.05\), \sym{***} \(p<0.01\)}\\
\end{tabular}
}

\label{among-unincentivized}
\end{footnotesize}
\begin{tablenotes}
\begin{scriptsize}
\item \textbf{Notes:} OLS, errors clustered at subject level. Dependent variable: indicator for sender choosing the false message. Each column represents a different subset of the data, and Vs. lines indicate the comparison group. Party-False Aligned: indicator for the receiver's party being revealed and aligned with the false message. Prior-False Aligned: the receiver's prior belief that the incorrect answer is true. Party-True Aligned: indicator for the receiver's party being revealed and aligned with the true message. No Info: indicator for the receiver's party not being revealed. 
\end{scriptsize}
\end{tablenotes}
\end{threeparttable}
\end{center}

However, there is still a non-negligible share of unincentivized senders who choose false messages, at 21 percent. Part of this 21 percent could be due to randomness: for instance, sometimes senders click randomly, do not read the answer correctly, or misclick. In addition, even pure coordination games may lead to communication problems if players disagree about the meaning of messages (e.g. \cite{FR96}). Finally, some senders may send false information because of expressive preferences: they prefer to send news that aligns with their own party, even when it is false. \Cref{1-ownparty} provides supporting evidence for this expressive-preferences mechanism, showing that unincentivized senders (column 2) send false messages significantly more often when it aligns with their own party.\footnote{Meanwhile, incentivized senders (column 1) are not statistically significantly affected by their own party, and are (as described above) instead affected by the other subject's party. However, these estimates are not precisely measured and should not be overinterpreted.} In the raw data, unincentivized senders are 38 percent more likely to send false messages when the truth is misaligned with their party than when it is aligned.

\begin{center}
\begin{threeparttable}[htbp!]
\begin{footnotesize}
\caption{The effect of senders' own party on messages}
{
\def\sym#1{\ifmmode^{#1}\else\(^{#1}\)\fi}
\begin{tabular}{l*{3}{c}}
\hline\hline
                    &\multicolumn{1}{c}{Incentivized}         &\multicolumn{1}{c}{Unincentivized}         &\multicolumn{1}{c}{Interaction}         \\
\hline
Own Party-False Aligned    &    0.010         &    0.053\sym{**} &    0.054\sym{**} \\
             &  (0.025)         &  (0.021)         &  (0.021)         \\
Other's Party-False Aligned &    0.092\sym{***}&    0.005         &    0.008         \\
             &  (0.028)         &  (0.028)         &  (0.027)         \\
Own Party-False Aligned  x Incentivized    &                  &                  &   -0.039         \\
 &                  &                  &  (0.033)         \\
Other's Party-False Aligned x Incentivized &                  &                  &    0.088\sym{**} \\
 &                  &                  &  (0.040)         \\
Question FE         &$\checkmark$         &$\checkmark$         &$\checkmark$         \\
Subject FE          &$\checkmark$         &$\checkmark$         &$\checkmark$         \\
Round FE            &$\checkmark$         &$\checkmark$         &$\checkmark$         \\
Incentivized subjects &$\checkmark$         &                  &$\checkmark$         \\
Unincentivized subjects\hspace{30mm} &                  &$\checkmark$         &$\checkmark$         \\
\hline
Observations        &     5486         &     5136         &    10622         \\
Subjects            &      124         &      125         &      249         \\
Mean                &    0.296         &    0.212         &    0.255         \\
\hline\hline
\multicolumn{4}{l}{\footnotesize Standard errors in parentheses}\\
\multicolumn{4}{l}{\footnotesize \sym{*} \(p<0.10\), \sym{**} \(p<0.05\), \sym{***} \(p<0.01\)}\\
\end{tabular}
}

\label{1-ownparty}
\end{footnotesize}
\begin{tablenotes}
\begin{scriptsize}
\vspace{-2mm}
\item \textbf{Notes:} OLS, errors clustered at subject level. Dependent variable: indicator for sender choosing the false message. Own Party-False Aligned: indicator for the sender's party being aligned with the false message. Other's Party-False Aligned: indicator for the receiver's party being revealed and aligned with the false message. 
\vspace{5mm}
\end{scriptsize}
\end{tablenotes}
\end{threeparttable}
\end{center}

In other exploratory analyses, I also find evidence that false news effects are stronger when senders and receivers are of the same party. On political questions in the unincentivized and incentivized groups, senders send false news to their own party 23.0 percent of the time (s.e. 1.7 pp) and false news to the opposing party 28.0 percent of the time (s.e. 1.7 pp); that is, they send 5.0 pp more true messages to their own party (s.e. 1.9 pp, $p=0.009$). Similarly, receivers rate political messages that come from their own party as being 6.0 pp more likely to be truthful (s.e. 0.8 pp, $p<0.001$). Such a finding is consistent with participants having in-group pro-social biases (such as in \cite{CL09}).\footnote{Since who the sender is matched with is randomized within subject, previous results do not qualitatively change when controlling for whether the two players are of the same party.}

\subsection{Receivers' behavior and higher-order beliefs}

Next, we turn to receivers' actual ratings of these messages and discuss both players' higher-order beliefs. Consistent with Prediction 2, receivers who are matched with incentivized senders rate pro-party news to be true with probability 60.3 percent (subject-level clustered s.e. 1.4 pp) and anti-party news to be true with probability 48.8 percent (s.e. 1.5 pp). This gap is 11.4 pp (s.e. 2.3 pp) and statistically significant ($p < 0.001$).\footnote{These numbers restrict to questions in Block 1, where senders can condition on receivers' priors.}

It is worth noting that this gap does not necessarily rule out Bayesian updating since there are differences in receivers' priors and receivers may have different beliefs about senders' behavior. Across all treatments, receivers believe that the pro-party state is true with probability 58.7 percent (s.e. 0.6 pp), which is statistically significantly larger than 50 percent ($p<0.001$).\footnote{Appendix \Cref{1-priors} shows that prior beliefs favor the pro-party state on essentially every topic. Once priors are controlled for, the gap in ratings is positive but significantly reduced.}
%\footnote{This estimate comes from regressing ratings on pro-party vs. anti-party within-subject, controlling flexibly for prior beliefs.} The positive-but-noisy effect is consistent with other findings in the literature.\footnote{As summarized by \textcite{B19}, designs in which people receive informative signals are mixed in their ability to detect motivated reasoning. \textcite{MNNR-WP}; \textcite{ER11}; \textcite{CD17}; and \textcite{Mayraz-WP} find statistically-significant evidence that people update more from good news than bad news in their contexts, while \textcite{E11}; \textcite{K14}; \textcite{BGvdW18}; \textcite{C18}; and \textcite{B20} do not find such evidence.} 
In addition, the sender-receiver setting may confound identification of motivated reasoning from strategic considerations. Because of these confounds, \Cref{experiment2} uses a different experimental strategy that is able to more cleanly identify receivers' motivated reasoning from Bayesian updating.

Next, we consider receivers' beliefs about senders' strategies. Perhaps surprisingly, receivers' ratings are not directionally affected by senders' incentives. This result occurs despite the fact that receivers have played in the role of sender in a practice round and are fully informed about their sender's incentives in every round. \Cref{s-r-incentives} shows the treatment effect on senders and receivers graphically. Receivers who are matched with incentivized (versus unincentivized) senders give ratings that are 0.3 pp (s.e. 1.2 pp) higher, even though incentivized senders are 7.3 pp \textit{less} likely to be truthful. In the condition where treatment effects are largest (when the truth or message is misaligned with the receivers' politics), senders' incentives lead receivers to decrease ratings by 1.1 pp (s.e. 1.8 pp), while they lead senders to be 10.7 pp less likely to tell the truth. Each of these differences are statistically significant, and sizeable in magnitude. 

\begin{figure}[ht]
    \caption{Senders' Incentives Affect Messages Sent but Not How They Are Rated}
    \label{s-r-incentives}
    \centering
\includegraphics[width=.8\textwidth]{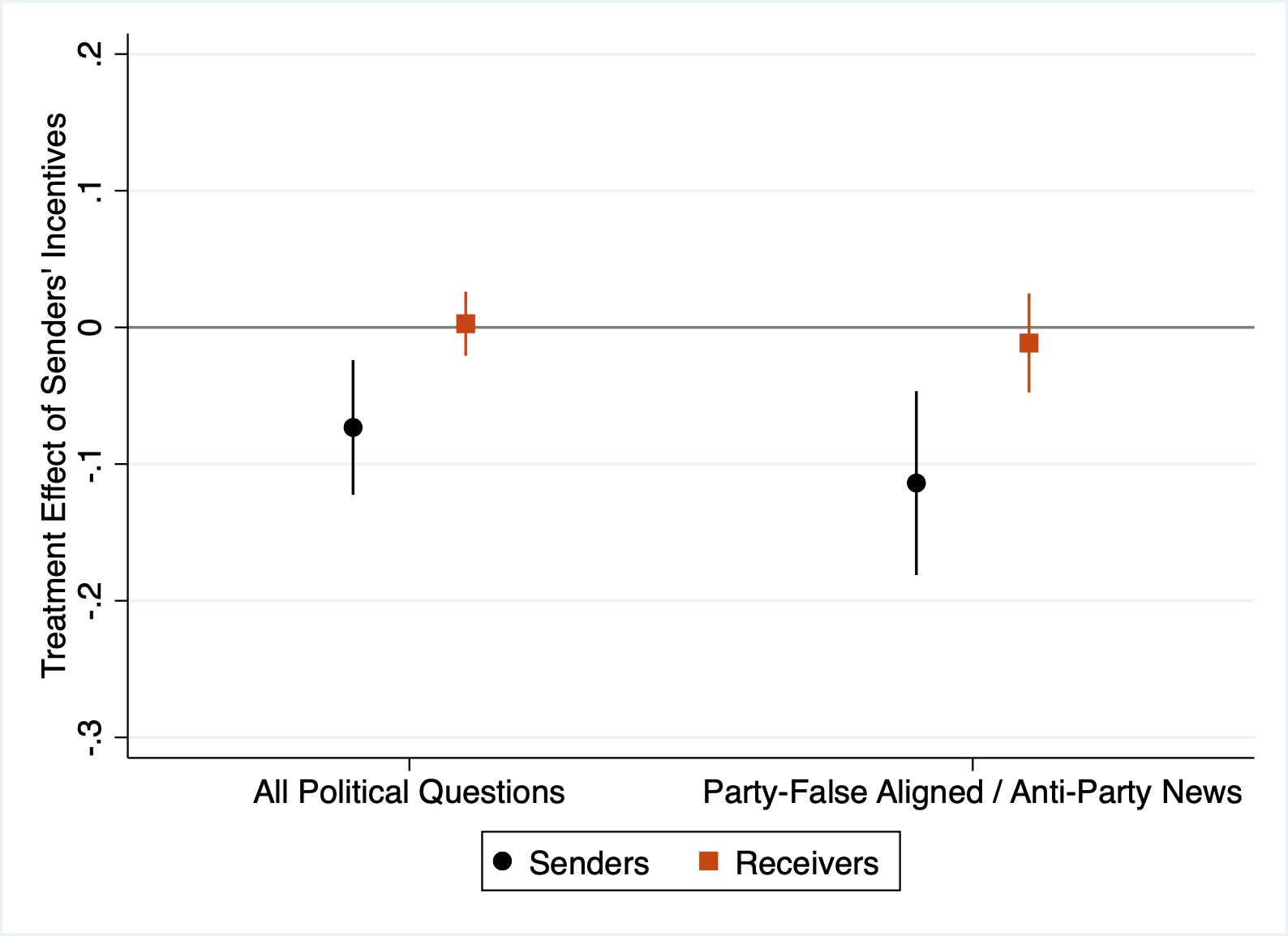}
\begin{threeparttable}
\begin{tablenotes}
\begin{scriptsize}
\vspace{-4mm}
\item \textbf{Notes:} OLS regression coefficients, errors clustered at subject level. Coefficients are from a regression of message truthfulness (for senders) or ratings about message truthfulness (for receivers) on being in the S-incentivized treatment. Controls for age, race, gender, education, CRT score, own party, and fixed effects for question and round number are included. Also included is a control for whether the sender (receiver) knows the party of the receiver (sender) they are matched with. This figure shows that senders choose more false messages when incentivized, while receivers do not anticipate more false messages when the sender is incentivized. Error bars correspond to 95 percent confidence intervals. 
\end{scriptsize}
\end{tablenotes}
\end{threeparttable}
\end{figure}

These results are consistent with receiver naivete with respect to higher-order beliefs. This evidence is consistent with the motivated equilibrium of the model in which receivers believe their motivated reasoning parameter $\lambda$ is low, expect senders to believe $\lambda$ is low, and so on, but senders believe that $\lambda$ is high. We saw in \Cref{theory} that a motivated equilibrium exists in which raising the sender's incentives from low to moderate would lead her to be less truthful but not affect the receiver's play.

An alternative explanation is that receivers believe that incentivized senders are less truthful, but are being altruistic towards them by giving higher ratings. As such, I use exploratory analyses on survey questions to study what senders and receivers believe determines the behavior of each player in the game. Appendix \Cref{effects-survey} shows that incentivized senders report that they rely less on the truth and more on the party of the receiver, as compared to unincentivized senders. However, consistent with naivete, receivers do not state significantly different reports when they have been faced with incentivized or unincentivized senders. Neither senders nor receivers respond significantly differently about receivers' behavior when they are incentivized versus unincentivized, though the effects are noisy, weakly suggesting that both players believe receivers will not respond much to senders' incentives.

An alternative explanation for the null effect of senders' incentives on receivers is that receivers are simply ignoring information they see about senders. Contrary to this hypothesis, receivers do take into account \textit{other} information they have about senders, namely their party. As mentioned above, receivers give significantly higher ratings to senders of their own party than to senders of the opposing party. Instead, it seems that receivers are specifically insensitive to the \textit{incentives} that senders face. 

Next, I find systematic differences between receivers' behavior and senders' beliefs about receivers' behavior, as senders overstate the relative role of politics versus priors in receivers' inference. On average, senders are asked to state their beliefs about Republican and Democratic receivers' ratings of pro-Republican and pro-Democratic messages when the receivers have a prior of 1/2. Senders believe that the gap between the pro-party and the anti-party ratings will be 30 pp. These beliefs are predictive of the treatment effects of incentives, as shown in Appendix \Cref{1-incentives-beliefs}. 

Senders treat the effect of the receiver's party as being similar to an effect of a change of 29 pp in the receiver's prior, but their beliefs are substantially larger than the true effect of party on receivers. While it is not possible to determine what the optimal \textit{level} of truth-telling is for a sender, these results suggest that, given receivers' behavior, senders would be better off by being \textit{relatively} more sensitive to priors and less sensitive to politics.

In the context of the model, measured higher-order beliefs are consistent with the model in which $\hat{\lambda}_S > \lambda \approx \hat{\lambda}_R$. That is, receivers may engage in motivated reasoning to a small extent, and they are aware of this, but senders overstate the bias. When senders' incentives go from $\gamma=0$ to an intermediate level $\gamma > 0$, receivers do not realize that senders overstate the bias, leading to the form of motivated equilibrium described in \Cref{theory} in which this increase in incentives affect senders but not receivers.

\subsection{Discussion and Robustness}

In this section, I further explore a number of additional drivers of the experimental results. First, I consider whether results are driven by senders' capability to condition on receivers' priors. Next, I look at individual-level distributions. Then, additional types of incentives are discussed. Robustness exercises are conducted to argue that the set of results seen are not likely to be explained by senders believing that receivers systematically misreport their priors, the details of the strategy method specification, or social signaling. I argue that some forms of higher-order beliefs are more challenging to rule out, however, so an additional experiment is helpful. 

The previous results have focused on Block 1, where senders can condition on receivers' prior beliefs. One concern is that senders who are asked to answer via the strategy method focus more on differences in receiver features. However, in Block 2, senders can no longer condition on receivers' priors, and results look similar. Appendix \Cref{by-block-by-inc} compares the main effects by incentives and by block. For incentivized senders, the effect of receivers' party on senders' messages is larger, indicating that senders want to condition on receivers' prior beliefs, and infer their priors from their party. For unincentivized senders, there is again a small and statistically insignificant effect of receivers' party on senders' messages. When the data from Block 1 and Block 2 are combined, we see similar results. Appendix \Cref{among-incentivized-noprior} repeats the analysis in \Cref{among-incentivized} for incentivized senders, but includes data from both Block 1 and Block 2, and does not control for the receiver's prior. Appendix \Cref{among-unincentivized-noprior} repeats the analysis in Appendix \Cref{among-incentivized} for unincentivized senders. In each of these cases, the main results look similar. 

We have so far seen aggregate-level results, so we next explore differences at the individual level. \Cref{1-indiv-data} plots these data with CDFs by incentivized condition and by the alignment of the receiver's party. \Cref{1-indiv-data} shows that incentive effects are primarily driven by the more-truthful part of the sender distribution: 43 percent of unincentivized senders never send a false message, while only 21 percent of incentivized senders are always truthful. Meanwhile, the effect that the receiver's party has on senders is more evenly dispersed. One potential explanation for this difference is that some senders do not think that receivers' motivated beliefs are the ones hypothesized in \Cref{topic-motives}.

There are no distinguishable effects of the receiver's party on unincentivized senders at the individual level. Appendix \Cref{1-indiv-data-unin} plots the same CDF as the one in the second panel of \Cref{1-indiv-data} but restricts to observations from unincentivized senders. In Appendix \Cref{1-indiv-data-unin}, the CDFs lie on top of each other, indicating that there are little distributional differences for unincentivized senders by the receiver's party-truth alignment. In addition, the median share of false messages in each condition is zero, indicating that the majority of unincentivized senders never or rarely send false messages.

We can use other types of incentives to better unpack what drives senders' behavior. The theory posits that senders choose messages to send in order for those particular messages to be rated as truthful. An alternative theory could have senders differ in their fundamental propensity to tell the truth, and senders then choose messages in order to signal that they are truthful \textit{in general}. In order to test between these theories, I consider an additional treatment with \textit{competition} incentives, in which receivers rate which of two senders they believe was more truthful over the course of the experiment. Each sender is incentivized to be rated as more truthful overall than her competitor.\footnote{In particular, receivers predict the likelihood that each sender is more truthful if the two senders choose the same message and if the two senders choose opposite messages. Mirroring the main incentives treatment, senders earn points equal to the receiver's rating of the percent chance they are more truthful.} I find that these incentives do not have a significant impact on the overall truthfulness of messages (Appendix \Cref{1-competition-effect}), and the receiver's party plays less significant of a role, though priors still have some affect (Appendix \Cref{among-competition}). While the null effects are noisy, these results suggest that incentives for having messages appear as truthful are more impactful than incentives for appearing truthful overall. 

Senders may believe that receivers systematically misreport their priors or report their priors with noise. If senders believe that receivers' priors are accentuated more in ways that favor their party, the effects of prior may be overestimated and the effects of party may be understated for senders. On the other hand, senders may expect a Democratic and Republican receiver who each give the same prior to have different beliefs because the prior is rounded to the nearest 0.1. This would affect estimates in the direction that overestimates the party effect. However, the magnitude of the party effect is large enough that, to fully account for these effects, priors would need to be biased by 15 pp.\footnote{In particular, I run the main specification from \Cref{incentives-effect}, but replace the prior $\pi$ with $\min\{\pi+x,1\}$ for party-true aligned and with $\max\{\pi-x,0\}$ for party-false aligned beliefs. The coefficient of the party effect necessarily decreases in $x$, and crosses zero at $x = 0.15$.} Similarly, if senders believe that receivers' priors are stated with noise, then they may infer something about the receiver from his party. In either case, to fully explain these results, senders would need to expect, on average, that a Democratic and Republican receiver who each state the same $\hat{\pi}$ actually hold priors $\pi_D$ and $\pi_R$ that differ by 29 pp. Senders may also believe that receivers' ratings do not reflect their true beliefs, but rather a form of expressive preferences. In this case, the results can be interpreted as saying that senders cater to their beliefs about receivers' expressive beliefs. 

Results are not driven by which control variables are included or by the strategy-method design that senders face. I show in Appendix \Cref{1-alt-specs} that the main treatment effects do not significantly move when alternative specifications are used. For instance, when specifications include no control variables, or additional sender-level control variables, the point estimates hardly change. This latter specification includes controls for whether the sender's party is aligned with the truth and whether the sender's party is aligned with the receiver's party, both of which affect senders. Next, another approach to averaging across all of senders' choices is to weight choices based on the likelihood that they would be matched with a receiver.\footnote{For instance, a sender in a given treatment who is twice as likely to be matched with a receiver with prior 7/10 than a receiver with prior 3/10 has observations that are weighted twice as high for their choice conditional on a 7/10 receiver prior than their choice conditional on a 3/10 receiver prior.} In \Cref{1-alt-specs}, the second set of spikes shows no notable difference when such a weighting is used. In addition, I test whether restricting to senders' choices when faced with a receiver who has a prior of 1/2 affects the estimates. In each of these alternate specifications, point estimates and statistical significance are similar to those in the main specifications.

Without incentives, senders do not condition on receivers' party, but it is still possible that senders care about what others think of them and socially signal, as has been shown in other domains (e.g. \cite{DLMR17}). The effects are ambiguous in this experiment, as implicit reputation incentives could either serve to increase truthtelling overall (to signal honesty to the experimenter) or increase sending particular falsehoods. Importantly, if social signaling were driving these patterns, their effects would have to be present both when senders face incentives (where they incorporate receivers' party to send false information), and when they do not face incentives (where they do not incorporate receivers' party). 

Lastly, while the main results provide evidence that senders use the direction of receivers' political beliefs in determining what messages to send them, it is difficult to isolate what form of motivated reasoning is the root cause. For instance, senders may believe that receivers are more \textit{accurate} at rating senders' messages when they are aligned with the receivers' party. If senders have other-regarding preferences, caring about the accuracy of receivers' ratings, senders would distort in the direction of the receivers' party. (However, this cannot fully explain why incentives have an effect on senders.) In addition, there may be higher-order belief explanations for these results; for instance, senders may believe that receivers believe that senders distort how they send messages towards being misaligned with the receivers' party. Such a mechanism is similar to the one discussed in \textcite{M01}.

Therefore, in order to isolate senders' beliefs about the particular bias that receivers have in processing information, the additional experiment complements the above analyses by (1) identifying receivers' motivated reasoning using ratings of messages sent by a computer and (2) studying the behavior of senders choosing which computer messages to be paid for.

\section{Additional Experiment}
\label{experiment2}

\subsection{Design}
The additional experiment is designed to isolate the role that beliefs about others' motivated reasoning play in choosing messages. Receivers are asked to rate the truthfulness of messages from a computer in a task that is designed to elicit their motivated reasoning. Senders decide which of the two computers' messages they want to choose; the messages affect their payoff through the receivers' ratings, but do not affect what the receivers see or how many points they score. That is, senders are essentially betting on messages. Breaking the interaction between senders and receivers allows for identification of motivated reasoning among receivers, which in turn enables us to disentangle the strategic effects present in \Cref{experiment1} from the motivated-reasoning effects of senders' choice of news. The game is described below. Screenshots for the pages subjects see are in \Cref{screenshots2}. 

Receivers play a simplified version of the game in \textcite{T-WPa}. They play the game in two steps: 
\begin{enumerate}
\item \textbf{Median belief:} Receivers are given questions like the one in \Cref{expt1-design}. The only difference is that, instead of being asked for the probability that the answer is greater or less than a particular number, they are asked to guess what the number actually is. I elicit their median beliefs; receivers report a guess $\mu$ such that they believe that the answer is equally likely to be greater than or less than $\mu$. (Details on incentives are below.)
\item \textbf{News rating:} After stating their median beliefs, receivers are given two messages from a computer. One message says ``the answer is \textbf{greater than} [median]'' and the other message says ``the answer is \textbf{less than} [median].'' One of these messages is true and one is false, and receivers do not know which one is true. They are asked whether they believe that the first message is more likely to be true, the second message is more likely to be true, or believe that both messages are equally likely to be true.\footnote{This setup has two differences from the setup in \textcite{T-WPa}, both of which serve to simplify the environment. First, in this experiment I elicit beliefs about the relative likelihood of the two messages instead of beliefs about only one of the two messages. Second, in this experiment I only ask receivers to choose which message is more truthful instead of the probability that a given message is truthful.}
\end{enumerate}

\noindent Senders are asked to choose one of the two computer messages on each question. Specifically, on each question senders are matched with two receivers, one Democrat and one Republican, who have the same median belief. Senders are told the true answer and then asked to choose either: ``the answer is \textbf{greater than} [receiver's median]'' or ``the answer is \textbf{less than} [receiver's median].''

\subsubsection*{Treatments}
The main treatment arms for senders are similar to those in the primary experiment. 
\begin{enumerate}
    \item \textbf{Topic arm:} The topics are varied within subject. They are either political or neutral.
    \item \textbf{Information arm:} Information about the receiver's party is varied within subject. Either senders know the receiver's party or they do not. In addition, senders are randomly matched with Democratic and Republican receivers (with the same median belief), so the matched receiver's political motives are randomly either aligned with the true message or aligned with the false message.
    \item \textbf{Incentives arm:} The senders' incentives are randomly varied between subjects. The sender is either incentivized to choose a computer message that is rated as truthful by the receiver or the sender is unincentivized.
\end{enumerate}
This is a 2x3x2 design. Senders play the rounds in the order as described in \Cref{tab:timing-s2}, and see questions in a random order.

\renewcommand\arraystretch{1.5}

\begin{table}[ht]
\centering
\begin{tabular}{l l}  
\toprule
\textbf{Round} & \textbf{Senders' Block}  \\
\midrule
   1  &  Sample question in role of Receiver  \\
   2-7  &  Choose messages  \\
   8-11  &  Demand for information \\
   12  &  Attention check \\
   End of experiment &  Belief elicitation  \\
   End of experiment &  Demographics and solutions \\
\bottomrule
\end{tabular}
\begin{scriptsize}
\caption{The timing of treatments for senders in the additional experiment.}
\label{tab:timing-s2}
\end{scriptsize}
\end{table}
\renewcommand\arraystretch{1}

After senders play six rounds in which they choose messages, they play four rounds in which they choose whether to ``purchase information'' about the receiver by conditioning their message on the receiver's political party. In these rounds, senders see the receiver's question and are asked to choose one of the following two options: (1) Be able to condition their message choice on the receivers' party with probability 1/2 and receive a slightly-higher payoff, or (2) Be able to condition their message choice on the receivers' party with probability 1 and receive a slightly-lower payoff. They are asked the information-purchasing questions after the main treatment block in order to enable them to have a chance to determine how much they value party information.

In two of the four rounds, senders can condition on the receiver's party regardless of their choice; in the other two rounds, they can condition on the receiver's party if and only if they have purchased the information.

% When senders choose messages whether to send a false or true computer message. This choice is implemented

% There are three parts to the experiment
% \begin{enumerate}
% \item \textbf{Message choice:} Senders choose whether to send a false or true computer message. This choice is implemented
% \item \textbf{Demand for information:} Senders choose whether or not to receive information about their receiver's party for a small cost.
% \item \textbf{Predictions:} Senders are asked and incentivized to state their beliefs about receivers' inference in the motivated-reasoning task.
% \end{enumerate}

\subsubsection*{Incentives}

Subjects earn points in the experiment; points translate into earnings using the same binarized scoring rule as in \Cref{experiment1}. The only difference is that in this experiment, five subjects win a bonus instead of ten.

Receivers' median beliefs are incentivized by a linear scoring rule. For each question, they give a guess $\mu$ about the answer. They earn $\max\{100 - |\mu - \text{answer}|, 0\}$ for their guess. Receivers' news ratings are incentivized by a simple concave scoring rule. Receivers who guess that Message X is more likely to be true than Message Y earn 100 points if they are correct and 0 points if they are incorrect; receivers who guess that both messages are equally likely to be true earn 55 points regardless of the true answer. They maximize points by guessing that X is true iff they believe P(X true) $\geq 0.55$, by guessing that Y is true iff they believe P(X true) $\leq 0.45$, and by guessing that they are equally likely iff P(X true) $\in [0.45,0.55]$. A Bayesian, whose belief remains at 1/2, would guess that they are equally likely. Systematic differences in news source ratings are attributed here to motivated reasoning.

Senders in the incentivized condition are incentivized to choose the one of the two messages that the receiver was more likely to think is true. They earn 100 points if the receiver guesses their message is true, 50 points if the receiver guesses both messages are equally likely, and 0 points if the receiver guesses the other message is true. All senders are given these incentives in the demand-for-information rounds. Subjects in both treatments are incentivized to give accurate answers to the beliefs questions (whose answers are between 0 and 100). They are incentivized using a quadratic scoring rule. If they guess $g$ and the correct answer is $c$, they earn max($0, 100-(c-g)^2$) points.

\subsubsection*{Comparing the two experimental designs}

The main difference in the experiments is that, while in the primary experiment the sender and receiver both affect each others' payoffs, in the additional experiment the sender does not impact the receiver. The primary experiment is able to identify the role of higher-order beliefs and receivers' beliefs about senders more broadly. There are a few additional differences. While the primary experiment uses fixed target numbers, the additional experiment uses median beliefs. As such, while the primary experiment elicits senders' beliefs about receivers' belief updating, the additional experiment elicits senders' beliefs about receivers' motivated reasoning. Only the primary experiment studies receivers' beliefs about senders. More subtly, the primary experiment varies, within topic, the effect of answers being too Dem or too Rep, while the additional experiment only does this between topics. Lastly, only the additional experiment looks at senders' demand for information about receivers.

Given the relative contributions of the two studies, the emphasis in the analysis of the additional experiment is on senders' behavior. For a deeper discussion on receivers in a similar context, and what the requirements are for this design to be able to identify motivated reasoning, see \textcite{T-WPa}, which uses an expanded version of this design with a sample of approximately 1,000 receivers. What is important to know for this paper is that, assuming that receivers report their true median beliefs, a Bayesian would always say that the two messages were equally likely to be true.

\subsection{Data}

550 subjects were recruited from Prolific (\url{prolific.co}) in May-June 2021 and passed a simple attention check. The subject pool included the general United States population, and the pool was restricted to subjects who had prior experience on the platform.\footnote{All subjects needed to have completed 100 prior studies and have at least a 90-percent approval rating.} To emphasize that the focus is on sender behavior, the sample consists of 500 senders and 50 receivers. Receivers participated first. After receivers took the experiment, on each question I chose a median belief that both a Democratic receiver and a Republican receiver stated.\footnote{On each question, such a median belief existed. I would not have included a question if there was no median that both a Democratic and a Republican receiver had stated.} Half of median beliefs were too far in the Democratic direction, and the other half were too far in the Republican direction. These were the median beliefs that were presented to senders. \Cref{screenshots2} includes the list of topics, median beliefs, and which messages would be truthful. 

48 receivers (96 percent) and 492 senders (98 percent) stated a preference or lean for one party versus the other.\footnote{Unlike in the primary experiment, the party question included an option to select ``Independent (do not lean towards either party).''} Of the receivers, 26 (52 percent) were Republicans and 22 (44 percent) were Democrats. Of the senders, 244 (49 percent) were Republicans and 248 (50 percent) were Democrats. Analyses are restricted to receivers with a party preference; these receivers made 383 news ratings on political topics. In the total sample of senders, there were 2,999 messages chosen.\footnote{The targeted sample was 3,000; one subject did not choose a message on one question.} Senders are split into the incentives treatment and the unincentivized treatment, with 245 (49 percent) being incentivized. In Appendix \Cref{balance-table-s2}, I show the balance table for the incentives treatment among senders. I find modest political differences, but no other substantial differences, by treatment. 

\subsection{Main results}

Senders in the additional experiment choose computer messages in a similar manner to how senders in the primary experiment choose messages to send to receivers. The nearly-exact replication of these results suggests that beliefs about motivated reasoning are an important determinant in understanding the results in the full sender-receiver game.

\Cref{2-among-incentivized} tests the impact of the receiver's party on the truthfulness of senders' choices. The specification is identical to the specification in Appendix \Cref{among-incentivized-noprior}. It is similar to \Cref{among-incentivized}, but there are no receiver priors to control for since senders see the same median belief regardless of the receiver's party.

\begin{center}
\begin{threeparttable}[htbp!]
\begin{footnotesize}
\caption{Factors that lead incentivized senders to choose false computer messages}
{
\def\sym#1{\ifmmode^{#1}\else\(^{#1}\)\fi}
\begin{tabular}{l*{3}{c}}
\hline\hline
                    &\multicolumn{1}{c}{Vs. Party-True Aligned}         &\multicolumn{1}{c}{Vs. No Info}         &\multicolumn{1}{c}{Vs. Neutral Topics}         \\
\hline
Party-False Aligned &    0.145\sym{***}&    0.098\sym{**} &    0.090\sym{**} \\
                    &  (0.045)         &  (0.041)         &  (0.041)         \\
Question FE         &$\checkmark$         &$\checkmark$         &                  \\
Subject FE          &$\checkmark$         &$\checkmark$         &$\checkmark$         \\
Round FE            &$\checkmark$         &$\checkmark$         &$\checkmark$         \\
Vs. Party-True Aligned\hspace{20mm} &$\checkmark$         &                  &                  \\
Vs. No Info         &                  &$\checkmark$         &                  \\
Vs. Neutral Topics  &                  &                  &$\checkmark$         \\
\hline
Observations        &      779         &      789         &      638         \\
Subjects            &      229         &      223         &      206         \\
Mean                &    0.336         &    0.335         &    0.352         \\
\hline\hline
\multicolumn{4}{l}{\footnotesize Standard errors in parentheses}\\
\multicolumn{4}{l}{\footnotesize \sym{*} \(p<0.10\), \sym{**} \(p<0.05\), \sym{***} \(p<0.01\)}\\
\end{tabular}
}

\label{2-among-incentivized}
\end{footnotesize}
\begin{tablenotes}
\begin{scriptsize}
\vspace{-2mm}
\item \textbf{Notes:} OLS, errors clustered at subject level. Dependent variable: indicator for sender choosing the false message. Each column represents a different subset of the data, and Vs. lines indicate the comparison group. Party-False Aligned: indicator for the receiver's party being revealed and aligned with the false message. Prior-False Aligned: the receiver's prior belief that the incorrect answer is true. Party-True Aligned: indicator for the receiver's party being revealed and aligned with the true message. No Info: indicator for the receiver's party not being revealed. 
\end{scriptsize}
\vspace{5mm}
\end{tablenotes}
\end{threeparttable}
\end{center}

\Cref{2-among-incentivized} shows that the effects of party-false alignment are qualitatively identical to the effects in the primary experiment; party-false alignment leads to more false computer messages chosen in each comparison group. 

Appendix \Cref{2-among-unincentivized} shows that, as in Appendix \Cref{among-unincentivized}, unincentivized senders are not affected by these treatments. Subjects are not statistically-significantly more likely to send false computer messages when the messages are aligned with receiver's party in any comparison. As in the primary experiment, there is still a non-negligible share of unincentivized senders who choose false computer messages, at 24 percent. Appendix \Cref{2-ownparty} shows that part of this effect is driven by a form of expressive preferences in which senders prefer to select the option of their own party. As in \Cref{1-ownparty}, this effect plays a significant role in the unincentivized condition and does not play a significant role in the incentivized condition.

% The overall share of false messages chosen is modestly larger in both conditions in the additional experiment compared to the primary experiment, suggesting that the strategic element present in the primary experiment tampers false news. However, the false message rate in both experiments is substantially below 1/2, indicating that senders still have a preference for choosing truthful messages, even though in the additional experiment the messages are not sent to receivers.

\Cref{2-incentives-effect} shows that, as in \Cref{incentives-effect}, the incentives treatment causes senders to choose more false computer messages. This effect is again largely driven by the condition in which party and false messages are aligned.

\begin{center}
\begin{threeparttable}[htbp!]
\begin{footnotesize}
\caption{The effect of incentives on choosing false computer messages}
{
\def\sym#1{\ifmmode^{#1}\else\(^{#1}\)\fi}
\begin{tabular}{l*{3}{c}}
\hline\hline
                    &\multicolumn{1}{c}{All Pol. Questions}         &\multicolumn{1}{c}{Party-False Aligned}         &\multicolumn{1}{c}{Interaction}         \\
\hline
Incentivized        &    0.078\sym{***}&    0.167\sym{***}&    0.035         \\
                    &  (0.022)         &  (0.036)         &  (0.025)         \\
Party-False Aligned x Incentivized &                  &                  &    0.135\sym{***}\\
      &                  &                  &  (0.032)         \\
Party-False Aligned x Unincentivized &                  &                  &    0.006         \\
    &                  &                  &  (0.025)         \\
Question FE         &$\checkmark$         &$\checkmark$         &$\checkmark$         \\
Round FE            &$\checkmark$         &$\checkmark$         &$\checkmark$         \\
Subject controls    &$\checkmark$         &$\checkmark$         &$\checkmark$         \\
All Questions       &$\checkmark$         &                  &$\checkmark$         \\
Only Party-False Aligned &                  &$\checkmark$         &                  \\
\hline
Observations        &     2421         &      822         &     2421         \\
Subjects            &      500         &      429         &      500         \\
Mean                &    0.274         &    0.318         &    0.274         \\
\hline\hline
\multicolumn{4}{l}{\footnotesize Standard errors in parentheses}\\
\multicolumn{4}{l}{\footnotesize \sym{*} \(p<0.10\), \sym{**} \(p<0.05\), \sym{***} \(p<0.01\)}\\
\end{tabular}
}

\label{2-incentives-effect}
\end{footnotesize}
\begin{tablenotes}
\begin{scriptsize}
\vspace{-2mm}
\item \textbf{Notes:} OLS, errors clustered at subject level. Dependent variable: indicator for sender choosing the false message. Subject controls: Gender, race, age, own party, education, and CRT score. Own party takes 0.5 if subject does not lean towards either party.
\end{scriptsize}
\vspace{5mm}
\end{tablenotes}
\end{threeparttable}
\end{center}

\subsection{Receivers' behavior and senders' beliefs}
Receivers rate pro-party news to be more likely than anti-party news to be true 38.4 percent of the time (subject-level clustered s.e. 2.5 pp), less likely to be true 29.2 percent of the time (s.e. 2.3 pp), and equally likely to be true 32.4 percent of the time (s.e. 2.4 pp). The difference between pro-party and anti-party news is statistically significant, and the point estimate is comparable to that of the substantially-larger sample in \textcite{T-WPa} (9.1 pp; s.e. 4.2 pp; $p=0.036$). Deviations in the opposite direction may reflect noise in answers or indicate that the predicted direction of motivated reasoning is incorrect. 

Senders are asked to predict each of these three percentages. They predict that receivers will rate pro-party news to be more likely to be true true 42.3 percent of the time (s.e. 0.8 pp), less likely to be true 29.8 percent of the time (s.e. 0.7 pp), and equally likely 27.9 percent of the time (s.e. 0.7 pp). The difference between pro-party and anti-party news is statistically significant (12.5 pp; s.e. 1.4 pp; $p<0.001$), and the point estimate is suggestively larger than that of receivers' behavior. The results are qualitatively similar, but not as stark, as those in the primary experiment. Differences may be due to senders' different perceptions about the two tasks or due to different behavior from receivers.

\subsection{Demand for information}
The majority of senders choose to pay to condition on the party of the receivers on each of the political questions. There are no sizeable differences between topics; on every political topic, between 55.6 percent and 63.1 percent of senders choose to pay. Since there is no effect on other parts of the experiment, this result indicates that senders value this information for instrumental reasons. Meanwhile, less than half of senders choose to pay on either of the neutral topics, indicating that senders particularly value this information on political topics. The gap in the share demanding information between political and neutral topics is 17.7 pp (s.e. 2.7 pp; $p<0.001$).

Senders' information choices are predictive of their choice of messages. Recall that in two of the four rounds, senders are allowed to condition on the receiver's party regardless of their choice to the demand-for-information question. Comparing behavior from high-demand and low-demand senders, there is a correlation between demand for information and willingness to choose false computer messages. Senders who demand the information choose false messages 40.6 percent of the time (s.e. 1.2 pp), and senders who do not demand the information choose false messages 30.1 percent of the time (s.e. 1.7 pp). The difference is large and statistically significant (10.5 pp; s.e. 2.0 pp; $p<0.001$). Suggestively, senders who do not demand information on political topics choose false messages a similar amount to subjects who send messages on neutral topics (30.8 percent). 

In addition to these correlations, there is causal evidence from the main treatment block that speaks to this relationship. Incentivized senders who randomly receive the receiver's party information are 6.5 pp more likely to choose false messages (s.e. 2.9 pp; $p=0.026$). These results suggest that senders causally condition on the party of their receiver to strategically choose more of the false computer messages.

\section{Conclusion}
\label{conclusion}

Understanding the root causes of disinformation is critical in determining how best to combat it. These experiments have demonstrated two complementary reasons that senders intentionally transmit false information: Incentives to be perceived as truthful by receivers, and a belief that receivers engage in politically-motivated reasoning. Across many different versions of this experiment --- when senders can or cannot condition on receivers' prior beliefs, when senders bet on messages sent by a computer to receivers instead of playing a game with them directly, and when senders just answer questions about their strategies --- the results consistently show three patterns: (1) When senders have incentives to be rated as truthful by receivers, they are less truthful; (2) when senders have these incentives, they condition messages on their receiver's party; (3) and when senders do not have these incentives, they do not systematically condition messages on their receiver's party. In addition, the primary experiment shows that receivers are naive about senders' strategic messaging. 

These findings suggest a number of potential avenues for future work. The experimental designs provided in this paper can be portable across domains, allowing applied researchers to test whether senders believe that receivers motivatedly reason on any issue, and then to test whether senders asymmetrically send receivers false information. They can also be used to study how these effects play out in field settings on social media or traditional media. Estimating persuasion rates on motivated topics (e.g. \cite{DG10}) and unpacking the welfare effects of different incentive structures would be fruitful next steps. 

Understanding why people knowingly send false information is an important issue on social media, but further work could extend these analyses to consider the tradeoffs senders face between informativeness and slant. For instance, in many traditional media environments, media outlets often choose which information to report, rather than whether to outright disinform. One valuable extension would be to study whether selective reporting is also affected by incentives and beliefs about motivated reasoning in the same manner.

In addition, it would be natural to extend these predictions from a static to a dynamic environment. That is, what happens when incentives to be rated as truthful are replaced with repeated interactions in a reputation game? The effects are ambiguous: on the one hand, dynamic incentives may lead senders to continue to tailor messages to receivers' motivated beliefs, but on the other hand, the impact of naive higher-order beliefs may be corrected if receivers learn about past messages from senders. 

Lastly, these results suggest potential levers for reducing disinformation in politicized environments, either via changes to the structure of incentives for news suppliers or via changes to the perceived impact that motivated reasoning plays. Even when \textit{incentives} are fixed in news environments, these results suggest that \textit{beliefs} may be malleable. Increasing receiver sophistication or reducing their bias in a way that changes senders' higher-order beliefs may reduce both trust in, and the supply of, false information. Furthermore, while senders correctly forecast that receivers will trust political ``good news'' more than ``bad news,'' these results suggest that they often overstate this gap. Correcting senders' misperceptions of receivers may reduce how frequently they supply false information.

\vspace{10mm}

%\newpage

\nocite{*} 
\AtNextBibliography{\small}
{\singlespace\printbibliography} 

\newpage 

\appendix

\section{Additional Theoretical Results} \label{appendix-a}

\subsection{Proposition 2}

To prove \Cref{prop:list-eq}, we first calculate the BNE for arbitrary $\lambda$ in \Cref{lem:bne}, and then construct the motivated equilibria (ME) generated from each BNE. 
\begin{lemma}
\label{lem:bne}
In a game with parameter $\lambda$, parameters allow for three possible BNE in pure strategies:
\begin{enumerate}
    \item A separating BNE where S plays $x_H|\theta_H$ and $x_L|\theta_L$, R plays $a(x_H) = 1$ and $a(x_L) = \max\{1-\lambda/2, 0\} = 1-\lambda/2$ (since $\lambda < 2$ by assumption).

    \item A pooling BNE where S plays $x_H|\theta_H$ and $x_H|\theta_L$, R plays $a(x_H) = \min\{\pi + \lambda/2,1\}$ and some suitable off-path $a(x_L)$.

    \item A pooling BNE where S plays $x_L|\theta_H$ and $x_L|\theta_L$, R plays $a(x_L) = \max\{1-\pi - \lambda/2,0\}$ and some suitable off-path $a(x_H)$.
\end{enumerate}
\end{lemma}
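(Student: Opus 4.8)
The plan is to work backward from the receiver and then test the sender's incentive constraints against each of the sender's four pure strategies. First I would record the receiver's best response, which is already furnished by \eqref{eq:br-receiver}: for any sender strategy inducing $(p_H,p_L)$, the receiver sets $a(x_H)=\min\{\lambda/2+P(H\mid x_H),1\}$ and $a(x_L)=\max\{P(L\mid x_L)-\lambda/2,0\}$, where the posteriors are the Bayesian ones computed from $(p_H,p_L)$ and $\pi$. These come from the first-order conditions of the receiver's objective, which is strictly concave in $a$ on each information set, so the best response is unique whenever it is interior and pinned to a corner otherwise.

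Next I would enumerate the four pure sender strategies and, for each, compute the induced posteriors and hence the on-path ratings. The truthful strategy gives $p_H=p_L=1$, so $P(H\mid x_H)=1$ and $P(L\mid x_L)=1$, yielding $a(x_H)=\min\{1+\lambda/2,1\}=1$ and $a(x_L)=\max\{1-\lambda/2,0\}=1-\lambda/2$ using $0\le\lambda<2$; this is the separating BNE of item 1. The always-$x_H$ strategy gives $P(H\mid x_H)=\pi$ on path, so $a(x_H)=\min\{\pi+\lambda/2,1\}$ as in item 2, with $x_L$ off path; symmetrically the always-$x_L$ strategy gives $P(L\mid x_L)=1-\pi$, so $a(x_L)=\max\{1-\pi-\lambda/2,0\}$ as in item 3.

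The one strategy I must rule out is the reversed profile $x_L\mid\theta_H$, $x_H\mid\theta_L$. Here I would show it can never be a best response: it induces $P(H\mid x_H)=0$ and $P(L\mid x_L)=0$, so $a(x_H)=\lambda/2\ge 0$ and $a(x_L)=0$, and then in state $\theta_H$ the sender earns $\gamma a(x_L)=0$ from the prescribed $x_L$ but $\gamma a(x_H)+\tau=\gamma\lambda/2+\tau>0$ from the truthful deviation to $x_H$, a contradiction since $\tau>0$. This is exactly why the earlier restriction $p_H\ge 1-p_L$ is without loss of generality.

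The remaining work, and the main obstacle, is to pin down when each of the three surviving profiles is actually sustainable, i.e.\ to verify the sender's no-deviation constraints and to supply admissible off-path beliefs for the two pooling profiles. For the truthful profile the only binding constraint is in state $\theta_L$, where refraining from the false $x_H$ requires $\gamma a(x_L)+\tau\ge\gamma a(x_H)$, i.e.\ $\tau/\gamma\ge\lambda/2$. For each pooling profile the binding constraint is in the state where the truthful message is off path, and I would choose the off-path rating as low as Bayes-plausible beliefs permit (the belief assigning the deviation to the ``wrong'' state, giving $a(x_L)=0$ or $a(x_H)=\lambda/2$) so that the no-deviation inequality can hold; these are precisely the inequalities that reappear in \Cref{motivated-equilibria}. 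The only real care needed is bookkeeping of the $\min/\max$ corners in \eqref{eq:br-receiver}: I would separately confirm the interior-versus-corner status of each on-path rating (in particular $a(x_H)=1$ in the separating case, and $a(x_L)=1-\lambda/2$ interior), since the lemma's stated equalities rely on $0\le\lambda<2$.
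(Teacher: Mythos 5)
Your proof is correct and takes essentially the same approach as the paper's: compute the receiver's motivated best response \eqref{eq:br-receiver} under each of the sender's four pure strategies, rule out the reversed profile $x_L|\theta_H$, $x_H|\theta_L$ via the profitable truthful deviation to $x_H$ in state $\theta_H$ (using $\tau>0$), and then check the sender's no-deviation constraints for the three surviving profiles, which yields the same conditions ($\tau/\gamma \geq \lambda/2$ for the separating profile and the corresponding pooling inequalities). Your additional step of pinning the off-path ratings to their extremal Bayes-plausible values ($a(x_L)=0$, $a(x_H)=\lambda/2$) is a harmless refinement of what the paper leaves generic as ``some suitable off-path'' rating.
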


\begin{proof}[\textbf{Proof of \Cref{lem:bne}}]
We show when the above strategy profiles are in a BNE: 
\begin{enumerate}
    \item In the separating BNE: S will never profitably deviate to $x_L|\theta_H$ and will not profitably deviate to $x_H|\theta_L$ iff $\tau + \gamma(1-\lambda/2) \geq \gamma$, i.e. when $\tau/\gamma \geq \lambda / 2$.

    \item In the $x_H$-pooling BNE: If S can profitably deviate to $x_L|\theta_H$, then she can also profitably deviate to $x_L|\theta_L$, so we only need to consider the latter deviation. This deviation will not be profitable iff $\gamma \cdot \min\{\pi + \lambda/2,1\} \geq \tau + \gamma a(x_L)$, i.e. when $\tau / \gamma \leq \min\{\pi + \lambda/2,1\} - a(x_L)$. 

    \item In the $x_L$-pooling BNE: If S can profitably deviate to $x_H|\theta_L$, then she can also profitably deviate to $x_H|\theta_H$, so we only need to consider the latter deviation. This deviation will not be profitable iff $\gamma \cdot \max\{1-\pi - \lambda/2,0\} \geq \tau + \gamma a(x_H)$, i.e. when $\tau / \gamma \leq \max\{1-\pi - \lambda/2,0\} - a(x_H)$. 
\end{enumerate}

No separating BNE exists where S plays $x_L|\theta_H$ and $x_H|\theta_L$. R would play $a(x_H) = \lambda/2$ and $a(x_L) = 0$. S would have a profitable deviation to $x_H | \theta_H$. As such, we have computed all the possible BNE. %\hspace{121mm} $\blacksquare$
\end{proof}

\begin{proof}[\textbf{Proof of \Cref{prop:list-eq}}]

First, we observe that R's strategy in an ME involves the strategies described in \Cref{lem:bne}, but with $\hat{\lambda}_R$ replacing $\lambda$. 

Next, we consider S's behavior in an ME. S will calculate her best response to the above strategies but with motivated reasoning parameter $\hat{\lambda}_S$. 

\begin{enumerate}
\item In the ME generated from the separating BNE: 
\begin{itemize}
    \item Given $\theta_H$, S will have a best response of $x_H$ iff $\tau + \gamma \geq \gamma(1-\hat{\lambda}_S/2)$, which is always true.

    \item Given $\theta_L$, S will have a best response of $x_H$ iff $\gamma \geq \tau + \gamma(1-\hat{\lambda}_S/2)$, i.e. when $\tau / \gamma \leq \hat{\lambda}_S/2$.

    \item For a ME in which R plays a best response to $x_H | \theta_H$ and $x_L | \theta_L$, and S plays $x_H | \theta_H$ and $x_L | \theta_L$, we need for $\tau / \gamma \geq \hat{\lambda}_R/2$ and $\tau / \gamma \leq \hat{\lambda}_S/2$. Since $\hat{\lambda}_S > \hat{\lambda}_R$, this condition reduces to: 
    \[ \tau / \gamma \leq \hat{\lambda}_S/2.\]
    
    \item For a ME in which R plays a best response to $x_H | \theta_H$ and $x_L | \theta_L$, but S plays $x_H | \theta_H$ and $x_H | \theta_L$, we need for $\tau / \gamma \geq \hat{\lambda}_R/2$ and $\tau / \gamma \leq \hat{\lambda}_S/2$. That is: 
    \[ \tau / \gamma \in [\hat{\lambda}_R/2,\hat{\lambda}_S/2].\]
\end{itemize}

\item In the ME generated from the $x_H$-pooling BNE: 
\begin{itemize}
    \item Given $\theta_H$, S will have a best response of $x_H$ iff $\tau + \gamma \cdot \min\{\pi + \hat{\lambda}_S/2,1\} \geq \gamma a(x_L)$, i.e. when $\tau / \gamma \geq -(\min\{\pi + \hat{\lambda}_S/2,1\} - a(x_L))$.

    \item Given $\theta_L$, S will have a best response of $x_H$ iff $\gamma \cdot \min\{\pi + \hat{\lambda}_S/2,1\} \geq \tau + \gamma a(x_L)$, i.e. when $\tau / \gamma \leq \min\{\pi + \hat{\lambda}_S/2,1\} - a(x_L)$.

    \item For a ME in which R plays a best response to $x_H | \theta_H$ and $x_H | \theta_L$, and S plays $x_H | \theta_H$ and $x_H | \theta_L$, we need for $\tau / \gamma \leq \min\{\pi + \hat{\lambda}_R/2,1\} - a(x_L)$, and both $\tau / \gamma \geq -(\min\{\pi + \hat{\lambda}_S/2,1\} - a(x_L))$ and $\tau / \gamma \leq \min\{\pi + \hat{\lambda}_S/2,1\} - a(x_L)$. Since $\hat{\lambda}_S \geq \hat{\lambda}_R$, both of these conditions are implied, so this condition reduces to 
    \[ \tau / \gamma \leq \min\{\pi + \hat{\lambda}_R/2,1\} - a(x_L). \] 

    \item By the same argument, there are no ME in which R plays a best response to $x_H | \theta_H$ and $x_L | \theta_L$, but S does not play $x_H | \theta_H$ and $x_H | \theta_L$.

% NEED TO CHECK ALL OF THIS

%     \item For a ME in which S plays $x_H$ in both states and R plays a best response to this strategy, we need for $\tau / \gamma \leq \min\{\pi + \hat{\lambda}_R/2,1\} - a(x_L)$ and either $\tau / \gamma \geq \min\{\pi + \hat{\lambda}_S/2,1\} + a(x_L)$ or $\tau / \gamma \leq \min\{\pi + \hat{\lambda}_S/2,1\} - a(x_L)$. Since $\hat{\lambda}_S > \hat{\lambda}_R$, only the latter is possible, so this condition simplifies to $\tau / \gamma \leq \min\{\pi + \hat{\lambda}_S/2,1\} - a(x_L)$.

%     \item For a ME in which S plays $x_H$ in both states and R plays a best response to a sender's strategy of $x_H | \theta_H$ and $x_L | \theta_L$, we need for $\tau / \gamma \leq \min\{\pi + \hat{\lambda}_R/2,1\} - a(x_L)$ and either $\tau / \gamma \leq \min\{\pi + \hat{\lambda}_S/2,1\} + a(x_L)$ or $\tau / \gamma \geq \min\{\pi + \hat{\lambda}_S/2,1\} - a(x_L)$. Since $\hat{\lambda}_S > \hat{\lambda}_R$, ... $\tau / \gamma \in [\hat{\lambda}_R/2,\hat{\lambda}_S/2]$.
\end{itemize}
% Note that only one of these conditions can hold at once.

\item In the ME generated from the $x_L$-pooling BNE: 
\begin{itemize}
    \item Given $\theta_H$, S will have a best response of $x_H$ iff $\tau + \gamma \cdot a(x_H) \geq \gamma \max\{1-\pi - \hat{\lambda}_S/2,0\}$, i.e. when $\tau / \gamma \geq \max\{1-\pi - \hat{\lambda}_S/2,0\} - a(x_H)$.

    \item Given $\theta_L$, S will have a best response of $x_H$ iff $\gamma \cdot a(x_H) \geq \tau + \gamma \cdot \max\{1-\pi - \hat{\lambda}_S/2, 0 \}$, i.e. when $\tau / \gamma \leq -(\max\{1-\pi - \hat{\lambda}_S/2,0\} - a(x_H))$.

    \item For a ME in which R plays a best response to $x_L | \theta_H$ and $x_L | \theta_L$, and S plays $x_L | \theta_H$ and $x_L | \theta_L$, we need for $\tau / \gamma \leq \max\{1-\pi - \hat{\lambda}_R/2,0\} - a(x_H)$, and both $\tau / \gamma \leq \max\{1-\pi - \hat{\lambda}_S/2,0\} - a(x_H)$ and $\tau / \gamma \geq -(\max\{1-\pi - \hat{\lambda}_S/2,0\} - a(x_H))$. Since $\hat{\lambda}_S \geq \hat{\lambda}_R$ and $\tau/\gamma \geq 0$, these conditions can be reduced to: 
    \[ \tau / \gamma \leq \max\{1-\pi - \hat{\lambda}_S/2,0\} - a(x_H).\footnote{Note that if $\max\{1-\pi - \hat{\lambda}_S/2,0\} < -(\max\{1-\pi - \hat{\lambda}_S/2,0\} - a(x_H))$, then it is not possible for $\tau/\gamma$ to be less than the first and greater than the second term. If $\max\{1-\pi - \hat{\lambda}_S/2,0\} \geq -(\max\{1-\pi - \hat{\lambda}_S/2,0\} - a(x_H))$, then $-(\max\{1-\pi - \hat{\lambda}_S/2,0\} - a(x_H)) \leq 0$.} \]

    \item For a ME in which R plays a best response to $x_L | \theta_H$ and $x_L | \theta_L$, and S plays $x_H | \theta_H$ and $x_L | \theta_L$, we need for 
    \begin{align*}
    \tau / \gamma &\leq \max\{1-\pi - \hat{\lambda}_R/2,0\} - a(x_H), \\
    \tau / \gamma &\geq \max\{1-\pi - \hat{\lambda}_S/2,0\} - a(x_H), \\
    \text{ and }\tau / \gamma &\geq -(\max\{1-\pi - \hat{\lambda}_S/2,0\} - a(x_H)).
    \end{align*}  
    
    \item For a ME in which R plays a best response to $x_L | \theta_H$ and $x_L | \theta_L$, and S plays $x_H | \theta_H$ and $x_H | \theta_L$, we need for $\tau / \gamma \leq \max\{1-\pi - \hat{\lambda}_R/2,0\} - a(x_H)$, $\tau / \gamma \geq \max\{1-\pi - \hat{\lambda}_S/2,0\} - a(x_H)$, and $\tau / \gamma \leq -(\max\{1-\pi - \hat{\lambda}_S/2,0\} - a(x_H))$. Since $\tau/\gamma \geq 0$, the second inequality is implied by the third inequality, so these conditions can be reduced to: 
    \begin{align*}
        \tau / \gamma &\leq \max\{1-\pi - \hat{\lambda}_R/2,0\} - a(x_H) \\ 
        \text{ and } \tau / \gamma &\leq -(\max\{1-\pi - \hat{\lambda}_S/2,0\} - a(x_H)).
    \end{align*}
\end{itemize}
%Note that at least one of these conditions must hold.    
\end{enumerate}

% We can use the assumption that $\hat{\lambda}_S > \hat{\lambda}_R$ to simplify the set of these inequalities, and thus generate the ME enumerated in \Cref{motivated-equilibria}.

% \begin{enumerate}
%     \item We need for $\tau / \gamma \leq \hat{\lambda}_S/2$ and $\tau / \gamma \geq \hat{\lambda}_R/2$.
    
%     \item \begin{itemize}
%         \item If $\tau / \gamma \geq \min\{\pi + \hat{\lambda}_S/2,1\} + a(x_L)$, then ?

%         \item If $\tau / \gamma \leq \min\{\pi + \hat{\lambda}_S/2,1\} - a(x_L)$, then 

%         ...$\tau / \gamma \leq \min\{\pi + \lambda/2,1\} - a(x_L)$
%         \end{itemize}
%     \item \begin{itemize}
%         \item If $\tau / \gamma \geq \max\{1-\pi - \hat{\lambda}_S/2,0\} - a(x_H)$, then 
%         \item If $\tau / \gamma \leq -(\max\{1-\pi - \hat{\lambda}_S/2,0\} - a(x_H))$, then 
%         \end{itemize}

%         ...$\tau / \gamma \leq \max\{1-\pi - \lambda/2,0\} - a(x_H)$
% \end{enumerate}

These enumerate all the ME in the game. \hspace{77mm} $\blacksquare$

\end{proof}

\newpage 

\subsection{Visualizing Proposition 2}

%check parameters, add info

In \Cref{fig:eq-viz} below, we see how the different equilibria depend on higher-order beliefs. I plot the range of values for $\tau/\gamma$ for which each of the strategies are in a motivated equilibrium. For other parameters, I set $\pi = 0.587$ (the average prior in \Cref{1-priors}), receivers' motivated beliefs to be $\lambda = \hat{\lambda}_R = 0.114$ (corresponding to the average gap in ratings of pro- and anti-party news), and senders' beliefs about receivers' motivated beliefs to be $\hat{\lambda}_S = 0.30$ (taken from their average predictions). $\tau$ is set to 1; this choice is arbitrary as behavior only depends on $\tau$ through $\tau/\gamma$. When relevant, I assume full punishment $a(x_H) = a(x_L) = 0$ for off-path play.

We see that if $\gamma$ is small, there is only one equilibrium, and the sender always reveals the state. As $\gamma$ increases, pooling strategies are in equilibrium as well. For $\gamma \in [6.7,17.5]$, truthtelling is no longer an equilibrium, but a motivated equilibrium exists. For $\gamma > 17.5$, only pooling strategies survive as equilibria. 

If instead, $\hat{\lambda}_S = \hat{\lambda}_R$, then only BNE remain; if both players agree that $\lambda$ is low, the truthtelling equilibrium includes the full range of the non-BNE ME; if they agree that $\lambda$ is high, truthtelling is only an equilibrium for the initial levels of $\gamma$, and only pooling strategies survive for higher $\gamma$.

\newpage 

\begin{figure}[ht]
    \centering

\caption{Visualizing Motivated Equilibria}
\label{fig:eq-viz}

\vspace{5mm}
\textbf{Panel A:} $\hat{\lambda}_S = 0.30$ and $\lambda = \hat{\lambda}_R = 0.114$
\vspace{7mm}

\begin{tikzpicture}[scale=.5]
\draw[latex-latex] (0,0) -- (30,0) ; %edit here for the axis
\foreach \x in {0,5,10,15,20,25,30} % edit here for the vertical lines
\draw[shift={(\x,0)},color=black] (0pt,3pt) -- (0pt,-3pt);
\foreach \x in {0,5,10,15,20,25,30} % edit here for the numbers
\draw[shift={(\x,0)},color=black] (0pt,0pt) -- (0pt,-3pt) node[below] 
{$\x$};
\draw[shift={(0,3.2)}, *-*, color=black, very thick] (-0.25,0) -- (6.92,0);
%\draw[shift={(0,0.01)}, very thick] (0.2,0) -- (0.35,0);
\draw[shift={(0,2.6)}, *->, color=gray] (1.40,0) -- (30,0);
%\draw[shift={(0,0.02)}, very thick] (0.2,0) -- (0.35,0);
\draw[shift={(0,2.0)}, *->, color=lightgray] (3.65,0) -- (30,0);
%\draw[shift={(0,0.03)}, very thick] (0.2,0) -- (0.35,0);
\draw[shift={(0,1.4)}, *-*, color=cyan, very thick] (6.42,0) -- (17.79,0);
%\draw[shift={(0,0.03)}, very thick] (0.2,0) -- (0.35,0);
\draw[shift={(0,0.8)}, *-*, color=teal] (2.66,0) -- (3.95,0);
%\draw[shift={(0,0.03)}, very thick] (0.2,0) -- (0.35,0);
\end{tikzpicture}

\vspace{8mm}

\textbf{Panel B:} $\hat{\lambda}_S = 0.114$ and $\lambda = \hat{\lambda}_R = 0.114$
\vspace{7mm}

%     \caption{Visualization of motivated equilibria. From top to bottom: Full truthtelling; always send $x_H$; always send $x_L$; S sends $x_H$ but R plays a best response to truthtelling; S sends $x_L$ but R plays a best response to truthtelling.}
%     \label{fig:eq-viz}
% \end{figure}

% \begin{figure}[ht]
%     \centering

\begin{tikzpicture}[scale=.5]
\draw[latex-latex] (0,0) -- (30,0) ; %edit here for the axis
\foreach \x in {0,5,10,15,20,25,30} % edit here for the vertical lines
\draw[shift={(\x,0)},color=black] (0pt,3pt) -- (0pt,-3pt);
\foreach \x in {0,5,10,15,20,25,30} % edit here for the numbers
\draw[shift={(\x,0)},color=black] (0pt,0pt) -- (0pt,-3pt) node[below] 
{$\x$};
\draw[shift={(0,2.0)}, *-*, color=black, very thick] (-0.25,0) -- (17.79,0);
%\draw[shift={(0,0.01)}, very thick] (0.2,0) -- (0.35,0);
\draw[shift={(0,1.4)}, *->, color=gray] (1.40,0) -- (30,0);
%\draw[shift={(0,0.02)}, very thick] (0.2,0) -- (0.35,0);
\draw[shift={(0,0.8)}, *->, color=lightgray] (2.66,0) -- (30,0);
%\draw[shift={(0,0.03)}, very thick] (0.2,0) -- (0.35,0);
%\draw[shift={(0,1.4)}, *-*, color=teal, very thick] (6.42,0) -- (28.42,0);
%\draw[shift={(0,0.03)}, very thick] (0.2,0) -- (0.35,0);
%\draw[shift={(0,0.8)}, *-*, color=teal] (2.31,0) -- (3.57,0);
%\draw[shift={(0,0.03)}, very thick] (0.2,0) -- (0.35,0);
\end{tikzpicture}

%     \caption{Visualization of motivated equilibria. From top to bottom: Full truthtelling; always send $x_H$; always send $x_L$; S sends $x_H$ but R plays a best response to truthtelling; S sends $x_L$ but R plays a best response to truthtelling.}
%     \label{fig:eq-viz}
% \end{figure}

% \begin{figure}[ht]
%     \centering

\vspace{8mm}

\textbf{Panel C:} $\hat{\lambda}_S = 0.30$ and $\lambda = \hat{\lambda}_R = 0.30$
\vspace{7mm}

\begin{tikzpicture}[scale=.5]
\draw[latex-latex] (0,0) -- (30,0) ; %edit here for the axis
\foreach \x in {0,5,10,15,20,25,30} % edit here for the vertical lines
\draw[shift={(\x,0)},color=black] (0pt,3pt) -- (0pt,-3pt);
\foreach \x in {0,5,10,15,20,25,30} % edit here for the numbers
\draw[shift={(\x,0)},color=black] (0pt,0pt) -- (0pt,-3pt) node[below] 
{$\x$};
\draw[shift={(0,2.0)}, *-*, color=black, very thick] (-0.25,0) -- (6.92,0);
%\draw[shift={(0,0.01)}, very thick] (0.2,0) -- (0.35,0);
\draw[shift={(0,1.4)}, *->, color=gray] (1.30,0) -- (30,0);
%\draw[shift={(0,0.02)}, very thick] (0.2,0) -- (0.35,0);
\draw[shift={(0,0.8)}, *->, color=lightgray] (3.65,0) -- (30,0);
%\draw[shift={(0,0.03)}, very thick] (0.2,0) -- (0.35,0);
%\draw[shift={(0,1.4)}, *-*, color=teal, very thick] (6.42,0) -- (28.42,0);
%\draw[shift={(0,0.03)}, very thick] (0.2,0) -- (0.35,0);
%\draw[shift={(0,0.8)}, *-*, color=teal] (2.31,0) -- (3.57,0);
%\draw[shift={(0,0.03)}, very thick] (0.2,0) -- (0.35,0);
\end{tikzpicture}
\begin{threeparttable}
\begin{tablenotes}
\begin{scriptsize}
\vspace{-2mm}
\item \textbf{Notes:} Visualization of which sets of strategies are in each motivated equilibria as $\gamma$ changes. Panels vary in their values of $\hat{\lambda}_S$ and $\hat{\lambda}_R$. Within each panel, the ME from top to bottom are: \textbf{BNE with full truthtelling}; {\color{gray}{BNE where S always sends $x_H$}}; {\color{lightgray}{BNE where S always sends $x_L$}}; \textbf{{\color{cyan}{S sends $x_H|\theta_H$ and $x_H|\theta_L$ but R plays a best response to $x_H|\theta_H$ and $x_L|\theta_L$}}}; {\color{teal}{S sends $x_L|\theta_H$ and $x_L|\theta_L$ but R plays a best response to $x_H|\theta_H$ and $x_L|\theta_L$}}.
\end{scriptsize}
\end{tablenotes}
\end{threeparttable}
\end{figure}

\vspace{5mm}

\noindent
    
\newpage

\section{Additional Tables and Figures for the Primary Experiment}

\begin{center}
\begin{threeparttable}
\begin{footnotesize}
\caption{Balance Table for Senders}
\def\sym#1{\ifmmode^{#1}\else\(^{#1}\)\fi}
\begin{tabular}{p{3.4cm}*{4}{c}}
\hline\hline
&\multicolumn{1}{c}{Incentivized}&\multicolumn{1}{c}{Unincentivized}&\multicolumn{1}{c}{Inc. vs. Uninc.}&\multicolumn{1}{p{2.2cm}}{\centering p-value}\\
\hline
Age  & 35.431 & 33.227 & 2.165 & 0.148 \\   
& (1.074) & (1.038) & (1.490) & \\
White & 0.766 & 0.745 & 0.022 & 0.694 \\   
& (0.038) & (0.039) & (0.055) & \\
Female  & 0.516 & 0.624 & -0.108 & 0.087 \\   
& (0.045) & (0.044) & (0.063) & \\
Education  & 15.064 & 14.849 & 0.215 & 0.402 \\   
& (0.177) & (0.186) & (0.257) & \\
CRT score  & 1.387 & 1.554 & -0.167 & 0.262 \\   
& (0.106) & (0.104) & (0.148) & \\
Party  & 0.508 & 0.512 & -0.004 & 0.944 \\   
& (0.045) & (0.045) & (0.063) & \\
Other's party revealed  & 0.693 & 0.657 & 0.036 & 0.108 \\   
 & (0.016) & (0.016) & (0.022) & \\
Own party revealed  & 0.654 & 0.672 & -0.019 & 0.754 \\   
& (0.043) & (0.042) & (0.060) & \\
Others' have  & 0.500 & 0.516 & -0.156 & 0.584 \\   
party-truth aligned & (0.022) & (0.018) & (0.028) & \\
Self has  & 0.512 & 0.526 & -0.015 & 0.344 \\   
party-truth aligned & (0.011) & (0.011) & (0.016) & \\
\hline
\(N\)  & 10,036 & 10,107 & 20,143 & \\
\hline
\hline
%\multicolumn{5}{p{15cm}}{\footnotesize Standard errors in parentheses. Rep vs. Dem is the rating of the Republican Party minus the rating of the Democratic Party and is between -1 and 1. Partisanship is the absolute difference in these ratings. Education is in years. Cook PVI is Cook's Partisan Voting Index of the subject's state. WTP elicited is 1 if the subject is in the willingness-to-pay treatment and 0 if in the second-guess treatment. Given prior is 1 if the subject is told that P(True News) is 1/2 and 0 if not.}\\
\end{tabular}
\label{balance-table-s}
\end{footnotesize}
% \vspace{3mm}
\begin{tablenotes}
\item \begin{scriptsize} \textbf{Notes:} Standard errors in parentheses. Education is in years. CRT score is number of correct answers on the cognitive reflection task. Party is 1 if subject is Republican or Republican-leaning. Other party known only pertains to Rounds 1-7 (in later rounds the party is always revealed). Party-truth alignment is defined in the main text. Party-truth alignment restricted to observations where party is known.
\end{scriptsize}
\end{tablenotes}
% \begin{scriptsize} \textbf{Notes:} Standard errors in parentheses. Rep vs. Dem is the rating of the Republican Party minus the rating of the Democratic Party and is between -1 and 1. Partisanship is the absolute difference in these ratings. Education is in years. Religious is 1 if subject in any religious group. Red State is 1 if state voted for Trump in 2016 election. WTP elicited is 1 if subject in the WTP group and 0 if in the second-guess group. Told 1/2 True is 1 if subject is told that P(True News) is 1/2 and 0 if subject is not. \end{scriptsize}
\end{threeparttable}
\end{center}

\begin{center}
\begin{threeparttable}
\begin{footnotesize}
\caption{Balance Table for Receivers}
\def\sym#1{\ifmmode^{#1}\else\(^{#1}\)\fi}
\begin{tabular}{p{3.4cm}*{4}{c}}
\hline\hline
&\multicolumn{1}{c}{Incentivized}&\multicolumn{1}{c}{Unincentivized}&\multicolumn{1}{c}{Inc. vs. Uninc.}&\multicolumn{1}{p{2.2cm}}{\centering p-value}\\
\hline
Age  & 34.096 & 33.635 & 0.461 & 0.772 \\   
& (1.172) & (1.075) & (1.587) & \\
White & 0.718 & 0.729 & -0.012 & 0.837 \\   
& (0.041) & (0.039) & (0.056) & \\
Female  & 0.572 & 0.558 & 0.014 & 0.823 \\   
& (0.045) & (0.044) & (0.062) & \\
Education  & 15.072 & 14.828 & 0.244 & 0.327 \\   
& (0.173) & (0.179) & (0.248) & \\
CRT score  & 1.354 & 1.296 & 0.058 & 0.679 \\   
& (0.098) & (0.102) & (0.141) & \\
Party  & 0.500 & 0.512 & -0.012 & 0.849 \\   
& (0.045) & (0.044) & (0.063) & \\
Other's party revealed  & 0.681 & 0.650 & 0.019 & 0.206 \\   
 & (0.018) & (0.016) & (0.024) & \\
Own party revealed  & 0.670 & 0.682 & -0.012 & 0.834 \\   
& (0.042) & (0.041) & (0.059) & \\
\hline
\(N\)  & 2,726 & 2,832 & 5,558 & \\
\hline
\hline
%\multicolumn{5}{p{15cm}}{\footnotesize Standard errors in parentheses. Rep vs. Dem is the rating of the Republican Party minus the rating of the Democratic Party and is between -1 and 1. Partisanship is the absolute difference in these ratings. Education is in years. Cook PVI is Cook's Partisan Voting Index of the subject's state. WTP elicited is 1 if the subject is in the willingness-to-pay treatment and 0 if in the second-guess treatment. Given prior is 1 if the subject is told that P(True News) is 1/2 and 0 if not.}\\
\end{tabular}
\label{balance-table-r}
\end{footnotesize}
% \vspace{3mm}
\begin{tablenotes}
\item \begin{scriptsize} \textbf{Notes:} Standard errors in parentheses. Education is in years. CRT score is number of correct answers on the cognitive reflection task. Party is 1 if subject is Republican or Republican-leaning. Other party known only pertains to Rounds 1-7 (in later rounds the party is always revealed). 
\end{scriptsize}
\end{tablenotes}
% \begin{scriptsize} \textbf{Notes:} Standard errors in parentheses. Rep vs. Dem is the rating of the Republican Party minus the rating of the Democratic Party and is between -1 and 1. Partisanship is the absolute difference in these ratings. Education is in years. Religious is 1 if subject in any religious group. Red State is 1 if state voted for Trump in 2016 election. WTP elicited is 1 if subject in the WTP group and 0 if in the second-guess group. Told 1/2 True is 1 if subject is told that P(True News) is 1/2 and 0 if subject is not. \end{scriptsize}
\end{threeparttable}
\end{center}

\clearpage

\begin{center}
\begin{threeparttable}[htbp!]
\begin{footnotesize}
\caption{Comparison of effects in Block 1 and Block 2}
{
\def\sym#1{\ifmmode^{#1}\else\(^{#1}\)\fi}
\begin{tabular}{l*{4}{c}}
\hline\hline
                    &\multicolumn{1}{c}{Block 1, Incentives}         &\multicolumn{1}{c}{Block 2, Incentives}         &\multicolumn{1}{c}{Block 1, No Inc.}         &\multicolumn{1}{c}{Block 2, No Inc.}         \\
\hline
Party-False Aligned &    0.071\sym{**} &    0.214\sym{***}&    0.005         &    0.018         \\
                    &  (0.032)         &  (0.044)         &  (0.032)         &  (0.033)         \\
Prior-False Aligned &    0.248\sym{***}&                  &    0.063         &                  \\
                    &  (0.053)         &                  &  (0.039)         &                  \\
Question FE         &$\checkmark$         &$\checkmark$         &$\checkmark$         &$\checkmark$         \\
Subject FE          &$\checkmark$         &$\checkmark$         &$\checkmark$         &$\checkmark$         \\
Round FE            &$\checkmark$         &$\checkmark$         &$\checkmark$         &$\checkmark$         \\
Vs. Party-True Aligned &$\checkmark$         &$\checkmark$         &$\checkmark$         &$\checkmark$         \\
Incentivized        &$\checkmark$         &$\checkmark$         &                  &                  \\
Block 1             &$\checkmark$         &                  &$\checkmark$         &                  \\
Block 2             &                  &$\checkmark$         &                  &$\checkmark$         \\
\hline
Observations        &     4990         &      496         &     4636         &      500         \\
Subjects            &      124         &      124         &      125         &      125         \\
Mean                &    0.296         &    0.298         &    0.211         &    0.218         \\
\hline\hline
\multicolumn{5}{l}{\footnotesize Standard errors in parentheses}\\
\multicolumn{5}{l}{\footnotesize \sym{*} \(p<0.10\), \sym{**} \(p<0.05\), \sym{***} \(p<0.01\)}\\
\end{tabular}
}

\label{by-block-by-inc}
\end{footnotesize}
\begin{tablenotes}
\begin{scriptsize}
\item \textbf{Notes:} OLS, errors clustered at subject level. Dependent variable: indicator for sender choosing the false message. Each column represents a different subset of the data, and Vs. lines indicate the comparison group. Party-False Aligned: indicator for the receiver's party being revealed and aligned with the false message. Prior-False Aligned: the receiver's prior belief that the incorrect answer is true. Block 1: Senders condition on receivers' priors; Block 2; Senders cannot condition on receivers' priors.
\end{scriptsize}
\end{tablenotes}
\end{threeparttable}
\end{center}

\clearpage

\begin{center}
\begin{threeparttable}[htbp!]
\begin{footnotesize}
\caption{Factors that lead incentivized senders to send false messages: Block 1 and Block 2}
{
\def\sym#1{\ifmmode^{#1}\else\(^{#1}\)\fi}
\begin{tabular}{l*{3}{c}}
\hline\hline
                    &\multicolumn{1}{c}{Vs. Party-True Aligned}         &\multicolumn{1}{c}{Vs. No Info}         &\multicolumn{1}{c}{Vs. Neutral Topics}         \\
\hline
Party-False Aligned &    0.092\sym{***}&    0.075\sym{***}&    0.051\sym{*}  \\
                    &  (0.028)         &  (0.023)         &  (0.028)         \\
Question FE         &$\checkmark$         &$\checkmark$         &                  \\
Subject FE          &$\checkmark$         &$\checkmark$         &$\checkmark$         \\
Round FE            &$\checkmark$         &$\checkmark$         &$\checkmark$         \\
Vs. Party-True Aligned\hspace{20mm} &$\checkmark$         &                  &                  \\
Vs. No Info         &                  &$\checkmark$         &                  \\
Vs. Neutral Topics  &                  &                  &$\checkmark$         \\
\hline
Observations        &     5486         &     4963         &     4313         \\
Subjects            &      124         &      124         &      124         \\
Mean                &    0.296         &    0.303         &    0.327         \\
\hline\hline
\multicolumn{4}{l}{\footnotesize Standard errors in parentheses}\\
\multicolumn{4}{l}{\footnotesize \sym{*} \(p<0.10\), \sym{**} \(p<0.05\), \sym{***} \(p<0.01\)}\\
\end{tabular}
}

\label{among-incentivized-noprior}
\end{footnotesize}
\begin{tablenotes}
\begin{scriptsize}
\item \textbf{Notes:} OLS, errors clustered at subject level. Dependent variable: indicator for sender choosing the false message. Each column represents a different subset of the data, and Vs. lines indicate the comparison group. Party-False Aligned: indicator for the receiver's party being revealed and aligned with the false message. Party-True Aligned: indicator for the receiver's party being revealed and aligned with the true message. No Info: indicator for the receiver's party not being revealed. Only incentivized senders included. Observations include questions in both Block 1 (where senders condition on receiver priors) and Block 2 (where they do not). 
\end{scriptsize}
\end{tablenotes}
\end{threeparttable}
\end{center}

\clearpage

\begin{center}
\begin{threeparttable}[htbp!]
\begin{footnotesize}
\caption{Factors that lead unincentivized senders to send false messages: Block 1 and Block 2}
{
\def\sym#1{\ifmmode^{#1}\else\(^{#1}\)\fi}
\begin{tabular}{l*{3}{c}}
\hline\hline
                    &\multicolumn{1}{c}{Vs. Party-True Aligned}         &\multicolumn{1}{c}{Vs. No Info}         &\multicolumn{1}{c}{Vs. Neutral Topics}         \\
\hline
Party-False Aligned &    0.006         &   -0.034         &   -0.009         \\
                    &  (0.028)         &  (0.030)         &  (0.024)         \\
Question FE         &$\checkmark$         &$\checkmark$         &                  \\
Subject FE          &$\checkmark$         &$\checkmark$         &$\checkmark$         \\
Round FE            &$\checkmark$         &$\checkmark$         &$\checkmark$         \\
Vs. Party-True Aligned\hspace{20mm} &$\checkmark$         &                  &                  \\
Vs. No Info         &                  &$\checkmark$         &                  \\
Vs. Neutral Topics  &                  &                  &$\checkmark$         \\
\hline
Observations        &     5136         &     4888         &     4189         \\
Subjects            &      125         &      125         &      125         \\
Mean                &    0.212         &    0.215         &    0.204         \\
\hline\hline
\multicolumn{4}{l}{\footnotesize Standard errors in parentheses}\\
\multicolumn{4}{l}{\footnotesize \sym{*} \(p<0.10\), \sym{**} \(p<0.05\), \sym{***} \(p<0.01\)}\\
\end{tabular}
}

\label{among-unincentivized-noprior}
\end{footnotesize}
\begin{tablenotes}
\begin{scriptsize}
\item \textbf{Notes:} OLS, errors clustered at subject level. Dependent variable: indicator for sender choosing the false message. Each column represents a different subset of the data, and Vs. lines indicate the comparison group. Party-False Aligned: indicator for the receiver's party being revealed and aligned with the false message. Party-True Aligned: indicator for the receiver's party being revealed and aligned with the true message. No Info: indicator for the receiver's party not being revealed. Only unincentivized senders included. Observations include questions in both Block 1 (where senders condition on receiver priors) and Block 2 (where they do not). 
\end{scriptsize}
\end{tablenotes}
\end{threeparttable}
\end{center}

\clearpage

\begin{figure}[htb!]
    \caption{Receivers' Prior Beliefs by Topic}
    \label{1-priors}
    \centering
\includegraphics[width=\textwidth]{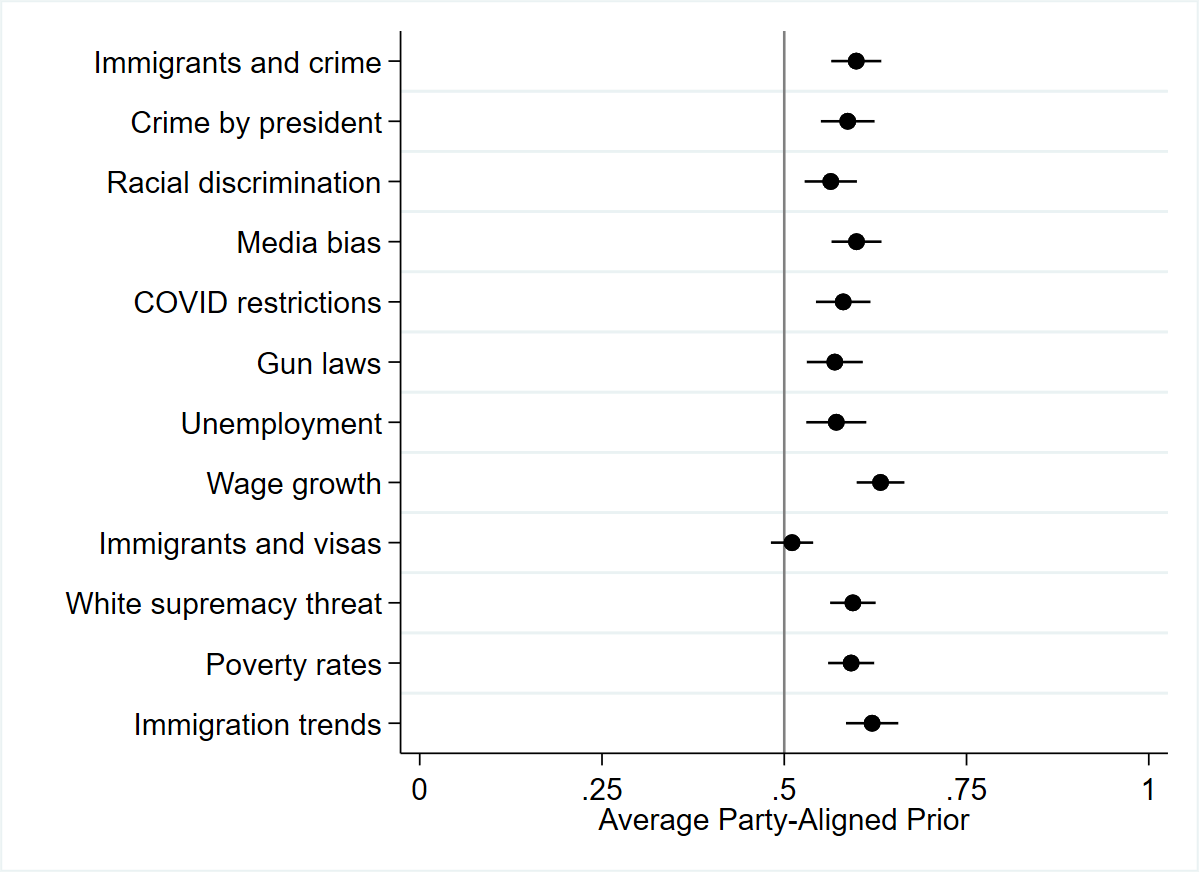}
\begin{threeparttable}
\begin{tablenotes}
\begin{scriptsize}
\vspace{-4mm}
\item \textbf{Notes:} Standard errors clustered at subject level. Party-Aligned Prior denotes the prior belief that the receiver has that the pro-party state is true, as described in Table 1. Error bars correspond to 95 percent confidence intervals.
\end{scriptsize}
\end{tablenotes}
\end{threeparttable}
\end{figure}

\clearpage

\begin{figure}[htb!]
    \caption{The Effect of Incentives on Survey Beliefs about Senders and Receivers}
    \label{effects-survey}
    \centering
\includegraphics[width=.77\textwidth]{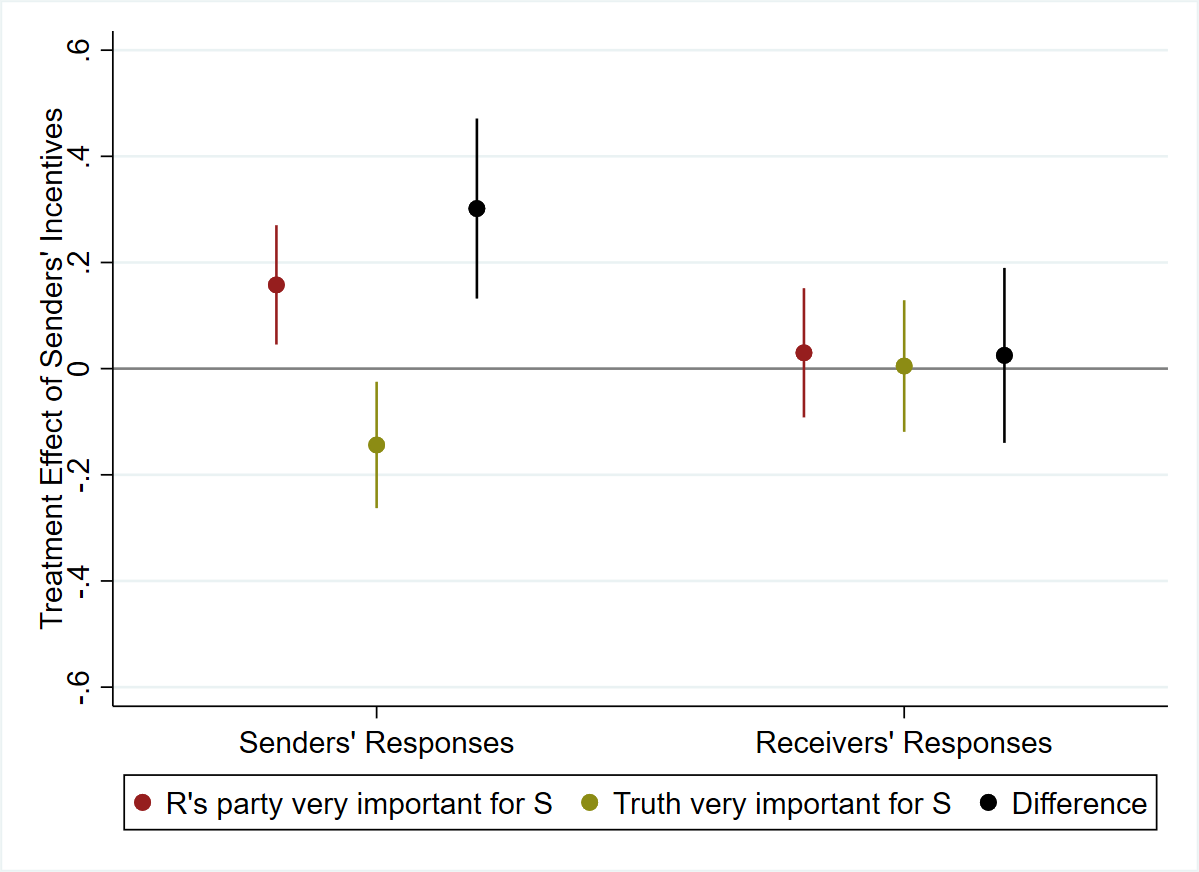}
\includegraphics[width=.77\textwidth]{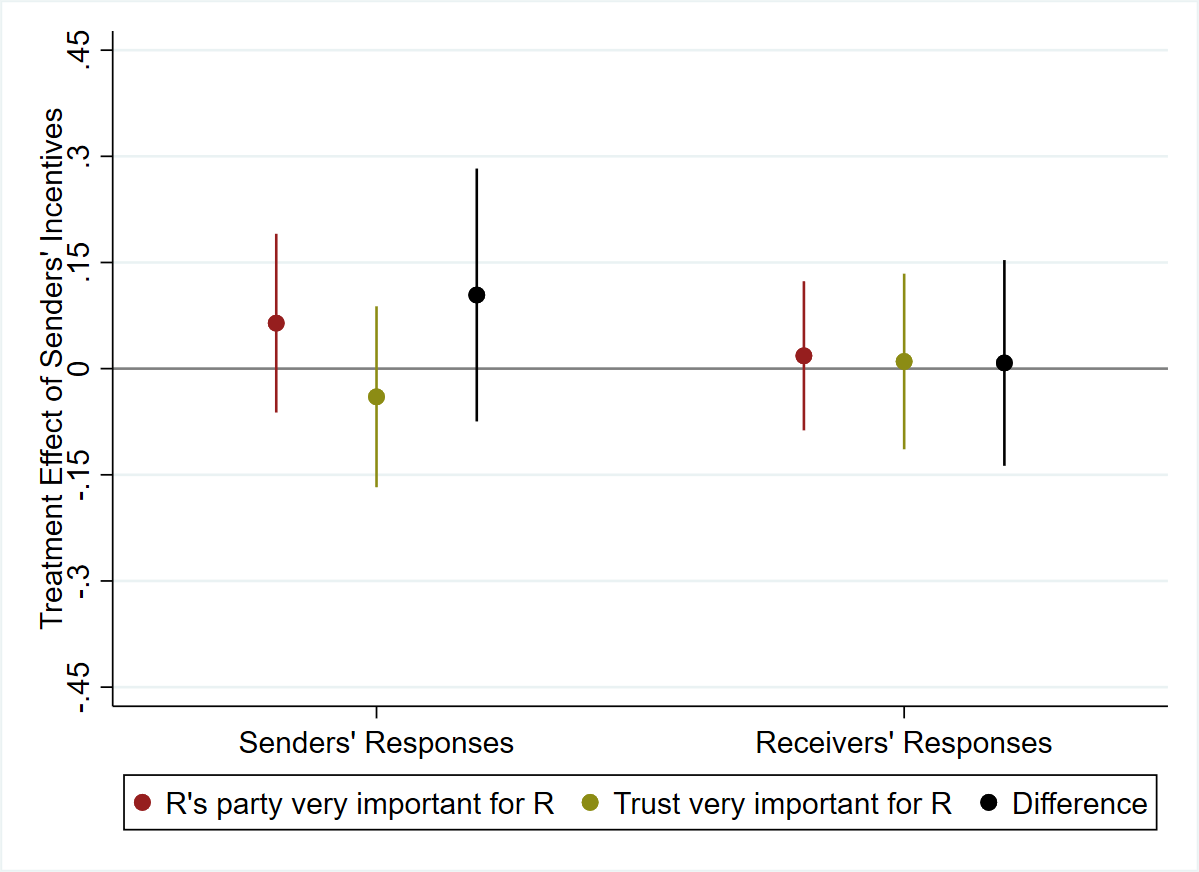}
\begin{threeparttable}
\begin{tablenotes}
\begin{scriptsize}
\vspace{-4mm}
\item \textbf{Notes:} OLS regression coefficients, robust standard errors. DV takes 1 if subject answers ``very important'' or ``extremely important'' and 0 otherwise. Exact questions are provided in the experimental materials. Controls for age, race, gender, education, CRT score, and own party are included. The top panel shows that incentivized Ss believe that R's party matters more, and the truth matters less, in their decisions, while Rs are unaffected by S's incentives. The bottom panel shows that incentives do not significantly affect senders' or receivers' beliefs about the impact of R's party and R's trust on R behavior. Error bars correspond to 95 percent confidence intervals. 
\end{scriptsize}
\end{tablenotes}
\end{threeparttable}
\end{figure}

\clearpage

\begin{center}
\begin{threeparttable}[htbp!]
\begin{footnotesize}
\caption{The interaction between beliefs and incentives}
{
\def\sym#1{\ifmmode^{#1}\else\(^{#1}\)\fi}
\begin{tabular}{l*{4}{c}}
\hline\hline
                    &\multicolumn{1}{c}{All Questions}         &\multicolumn{1}{c}{50-50 Priors}         &\multicolumn{1}{c}{Party-False Aligned}         &\multicolumn{1}{c}{50-50 \& Party-False}         \\
\hline
Incentivized        &    0.021         &   -0.008         &    0.018         &    0.000         \\
                    &  (0.036)         &  (0.042)         &  (0.054)         &  (0.074)         \\
S's Belief of R's Updating  &   -0.121\sym{**} &   -0.132\sym{**} &   -0.198\sym{**} &   -0.206\sym{*}  \\
            &  (0.059)         &  (0.063)         &  (0.088)         &  (0.107)         \\
Incentivized x S's Belief &    0.195\sym{**} &    0.239\sym{**} &    0.323\sym{**} &    0.447\sym{**} \\
              &  (0.082)         &  (0.097)         &  (0.124)         &  (0.172)         \\
Question FE         &$\checkmark$         &$\checkmark$         &$\checkmark$         &$\checkmark$         \\
Round FE            &$\checkmark$         &$\checkmark$         &$\checkmark$         &$\checkmark$         \\
Subject controls    &$\checkmark$         &$\checkmark$         &$\checkmark$         &$\checkmark$         \\
All Questions       &$\checkmark$         &$\checkmark$         &                  &                  \\
Only Party-False Aligned &                  &                  &$\checkmark$         &$\checkmark$         \\
Only 50-50 Priors   &                  &$\checkmark$         &                  &$\checkmark$         \\
\hline
Observations        &    14248         &     1294         &     4702         &      427         \\
Subjects            &      249         &      249         &      220         &      220         \\
Mean                &    0.250         &    0.216         &    0.250         &    0.216         \\
\hline\hline
\multicolumn{5}{l}{\footnotesize Standard errors in parentheses}\\
\multicolumn{5}{l}{\footnotesize \sym{*} \(p<0.10\), \sym{**} \(p<0.05\), \sym{***} \(p<0.01\)}\\
\end{tabular}
}

\label{1-incentives-beliefs}
\end{footnotesize}
\begin{tablenotes}
\begin{scriptsize}
\vspace{-2mm}
\item \textbf{Notes:} OLS, errors clustered at subject level. Dependent variable: indicator for sender choosing the false message. S's belief: sender's belief about party gap in receiver ratings. Subject controls: Gender, race, age, own party, education, and CRT score. Only 50-50 Priors: only observations where the receiver's prior is 1/2. Only includes questions in Block 1.
\end{scriptsize}
\vspace{5mm}
\end{tablenotes}
\end{threeparttable}
\end{center}

\clearpage

\begin{figure}[ht]
    \caption{CDF of Individual-Level False News by Senders' Incentives and Receivers' Party}
    \label{1-indiv-data}
    \centering
\includegraphics[width=.9\textwidth]{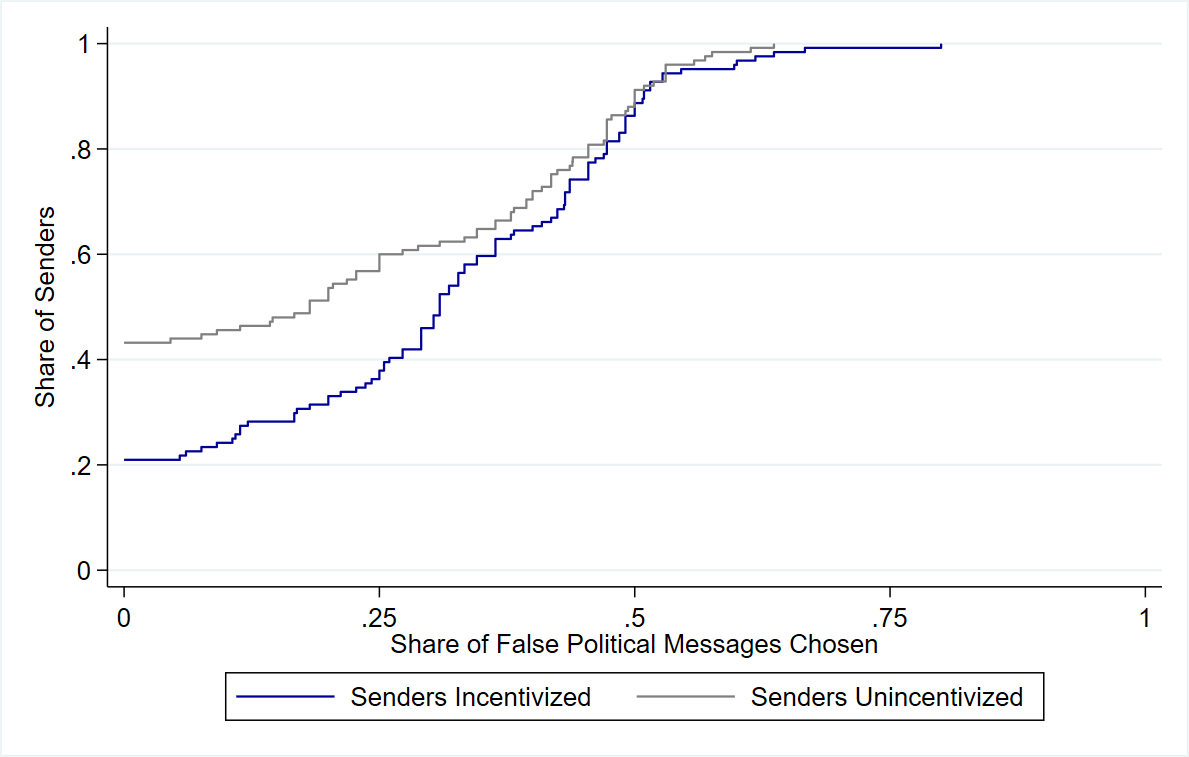}
\includegraphics[width=.9\textwidth]{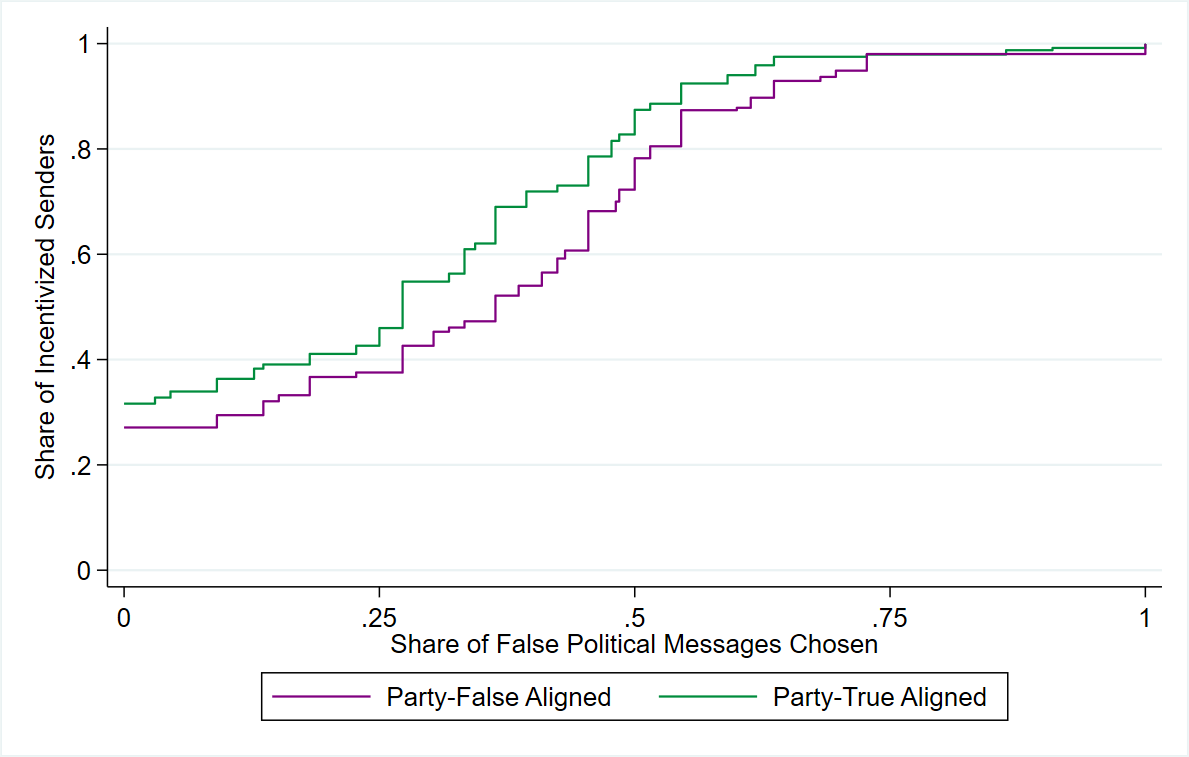}
\begin{threeparttable}
\begin{tablenotes}
\begin{scriptsize}
\vspace{-4mm}
\item \textbf{Notes:} CDF plots of the average share of messages chosen by senders. For instance, the top panel shows that half of incentivized senders send false messages at least 31 percent of the time. Party-True Aligned: indicator for the receiver's party being revealed and aligned with the true message. Party-False Aligned: indicator for the receiver's party being revealed and aligned with the false message. Both panels restrict to political questions in Block 1. The bottom panel restricts to senders who are incentivized and learn the party of the receiver.
\end{scriptsize}
\end{tablenotes}
\end{threeparttable}
\end{figure}

\clearpage

\begin{center}
\begin{threeparttable}[htbp!]
\begin{footnotesize}
\caption{The effect of competition incentives on choosing false messages}
{
\def\sym#1{\ifmmode^{#1}\else\(^{#1}\)\fi}
\begin{tabular}{l*{3}{c}}
\hline\hline
                    &\multicolumn{1}{c}{All Pol. Questions}         &\multicolumn{1}{c}{Party-False Aligned}         &\multicolumn{1}{c}{Interaction}         \\
\hline
Competition Incentives        &    0.027         &   -0.005         &    0.030         \\
          &  (0.037)         &  (0.051)         &  (0.039)         \\
Party-False Aligned x Incentivized &                  &                  &   -0.002         \\
      &                  &                  &  (0.027)         \\
Party-False Aligned x Unincentivized &                  &                  &    0.007         \\
    &                  &                  &  (0.026)         \\
Question FE         &$\checkmark$         &$\checkmark$         &$\checkmark$         \\
Round FE            &$\checkmark$         &$\checkmark$         &$\checkmark$         \\
Subject controls    &$\checkmark$         &$\checkmark$         &$\checkmark$         \\
All Questions       &$\checkmark$         &                  &$\checkmark$         \\
Only Party-False Aligned &                  &$\checkmark$         &                  \\
\hline
Observations        &    14214         &     4682         &    14214         \\
Subjects            &      251         &      225         &      251         \\
Mean                &    0.224         &    0.223         &    0.224         \\
\hline\hline
\multicolumn{4}{l}{\footnotesize Standard errors in parentheses}\\
\multicolumn{4}{l}{\footnotesize \sym{*} \(p<0.10\), \sym{**} \(p<0.05\), \sym{***} \(p<0.01\)}\\
\end{tabular}
}

\label{1-competition-effect}
\end{footnotesize}
\begin{tablenotes}
\begin{scriptsize}
\vspace{-2mm}
\item \textbf{Notes:} OLS, errors clustered at subject level. Dependent variable: indicator for sender choosing the false message. Subject controls: Gender, race, age, own party, education, and CRT score. Only includes questions in Block 1.
\end{scriptsize}
\vspace{5mm}
\end{tablenotes}
\end{threeparttable}
\end{center}

\clearpage

\begin{center}
\begin{threeparttable}[htbp!]
\begin{footnotesize}
\caption{Factors that lead senders with competition incentives to send false messages}
{
\def\sym#1{\ifmmode^{#1}\else\(^{#1}\)\fi}
\begin{tabular}{l*{3}{c}}
\hline\hline
                    &\multicolumn{1}{c}{Vs. Party-True Aligned}         &\multicolumn{1}{c}{Vs. No Info}         &\multicolumn{1}{c}{Vs. Neutral Topics}         \\
\hline
Party-False Aligned &    0.012         &    0.019         &    0.040         \\
                    &  (0.028)         &  (0.028)         &  (0.026)         \\
Prior-False Aligned &    0.167\sym{***}&    0.202\sym{***}&    0.166\sym{***}\\
                    &  (0.043)         &  (0.045)         &  (0.046)         \\
Question FE         &$\checkmark$         &$\checkmark$         &                  \\
Subject FE          &$\checkmark$         &$\checkmark$         &$\checkmark$         \\
Round FE            &$\checkmark$         &$\checkmark$         &$\checkmark$         \\
Vs. Party-True Aligned\hspace{20mm} &$\checkmark$         &                  &                  \\
Vs. No Info         &                  &$\checkmark$         &                  \\
Vs. Neutral Topics  &                  &                  &$\checkmark$         \\
\hline
Observations        &     4750         &     4891         &     4123         \\
Subjects            &      126         &      125         &      124         \\
Mean                &    0.230         &    0.238         &    0.222         \\
\hline\hline
\multicolumn{4}{l}{\footnotesize Standard errors in parentheses}\\
\multicolumn{4}{l}{\footnotesize \sym{*} \(p<0.10\), \sym{**} \(p<0.05\), \sym{***} \(p<0.01\)}\\
\end{tabular}
}

\label{among-competition}
\end{footnotesize}
\begin{tablenotes}
\begin{scriptsize}
\item \textbf{Notes:} OLS, errors clustered at subject level. Dependent variable: indicator for sender choosing the false message. Each column represents a different subset of the data, and Vs. lines indicate the comparison group. Party-False Aligned: indicator for the receiver's party being revealed and aligned with the false message. Prior-False Aligned: the receiver's prior belief that the incorrect answer is true. Party-True Aligned: indicator for the receiver's party being revealed and aligned with the true message. No Info: indicator for the receiver's party not being revealed. 
\end{scriptsize}
\end{tablenotes}
\end{threeparttable}
\end{center}

\clearpage

% \begin{center}
% \begin{threeparttable}[htbp!]
% \begin{footnotesize}
% \caption{Factors that lead senders with competition incentives to send false messages: No controls for prior}
% \input{1_competition_noprior.tex}
% \label{among-competition-noprior}
% \end{footnotesize}
% \begin{tablenotes}
% \begin{scriptsize}
% \item \textbf{Notes:} OLS, errors clustered at subject level. Dependent variable: indicator for sender choosing the false message. Each column represents a different subset of the data, and Vs. lines indicate the comparison group. Prior-False Aligned: the receiver's prior belief that the incorrect answer is true. Party-True Aligned: indicator for the receiver's party being revealed and aligned with the true message. No Info: indicator for the receiver's party not being revealed. 
% \end{scriptsize}
% \end{tablenotes}
% \end{threeparttable}
% \end{center}

% \clearpage

\begin{figure}[htb!]
    \caption{CDF of Individual-Level False News when Senders are Unincentivized}
    \label{1-indiv-data-unin}
    \centering
\includegraphics[width=\textwidth]{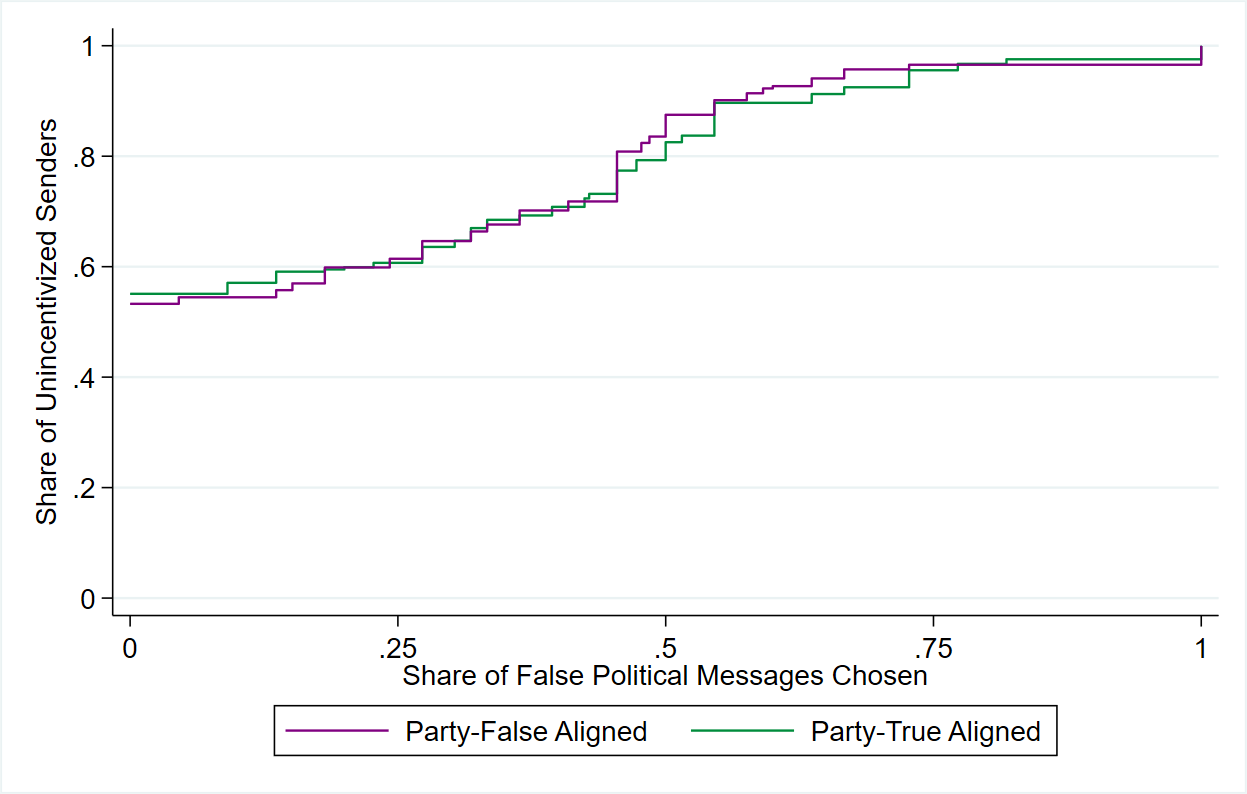}
\begin{threeparttable}
\begin{tablenotes}
\begin{scriptsize}
\vspace{-4mm}
\item \textbf{Notes:} CDF plot of the average share of messages chosen by senders. Party-True Aligned: indicator for the receiver's party being revealed and aligned with the true message. Party-False Aligned: indicator for the receiver's party being revealed and aligned with the false message. Data are restricted to political questions for which senders condition on receivers' priors. Senders are included if they are unincentivized and learn the party of the receiver.
\end{scriptsize}
\end{tablenotes}
\end{threeparttable}
\end{figure}

\clearpage

\begin{figure}[htb!]
    \caption{Alternative Specifications}
    \label{1-alt-specs}
    \centering
\includegraphics[width=\textwidth]{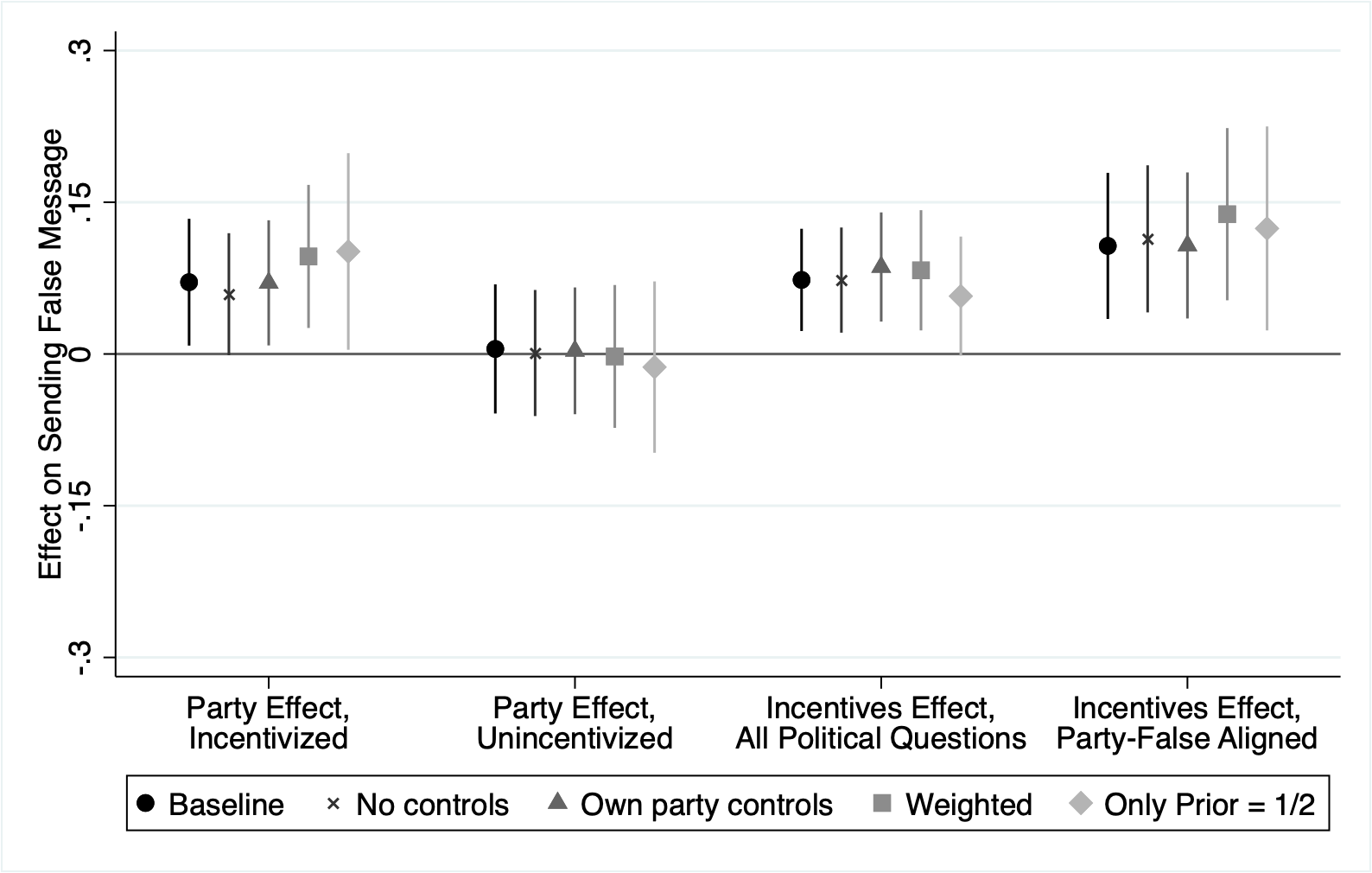}
\begin{threeparttable}
\begin{tablenotes}
\begin{scriptsize}
\vspace{-4mm}
\item \textbf{Notes:} Coefficients from alternative specifications. The first two categories correspond to the main treatment effects of the receiver's party on sending false messages in \Cref{among-incentivized} and \Cref{among-unincentivized}. The second two categories correspond to the main treatment effects of the sender's incentives on sending false messages in \Cref{incentives-effect}. All Political Questions: \Cref{incentives-effect}, column (2); Party-False Aligned: \Cref{incentives-effect}, column (4). No controls: No FE for round and question, no individual-level FE for within-subject tests and no demographic controls for between-subject tests. Own party controls: Inclusion of controls for sender's party alignment with the truth and sender's party alignment with the receiver. Weighted: Observations are weighted based on the frequency of senders being matched with receivers for each possible receiver's prior. For instance, if a sender in a given treatment is twice as likely to be matched with a receiver in their treatment group with prior 3/10 than they are to be matched with a receiver with prior 7/10, then their message choice given a receiver prior of 3/10 is weighted twice as much as their choice given a receiver prior of 7/10. Only Prior = 1/2: Observations restricted to senders' choices when faced with receivers with a prior of 1/2. 
\end{scriptsize}
\end{tablenotes}
\end{threeparttable}
\end{figure}

\clearpage

\section{Additional Tables and Figures for the Additional Experiment}

\begin{center}
\begin{threeparttable}
\begin{footnotesize}
\caption{Balance Table for Senders}
\def\sym#1{\ifmmode^{#1}\else\(^{#1}\)\fi}
\begin{tabular}{p{3.4cm}*{4}{c}}
\hline\hline
&\multicolumn{1}{c}{Incentivized}&\multicolumn{1}{c}{Unincentivized}&\multicolumn{1}{c}{Inc. vs. Uninc.}&\multicolumn{1}{p{2.2cm}}{\centering p-value}\\
\hline
Age  & 38.706 & 38.344 & 0.362 & 0.766 \\   
& (0.878) & (0.839) & (1.213) & \\
White & 0.767 & 0.788 & -0.021 & 0.578 \\   
& (0.027) & (0.026) & (0.037) & \\
Female  & 0.469 & 0.545 & -0.076 & 0.089 \\   
& (0.032) & (0.031) & (0.045) & \\
Education  & 15.261 & 15.243 & 0.019 & 0.920 \\   
& (0.132) & (0.130) & (0.185) & \\
CRT score  & 1.514 & 1.467 & 0.047 & 0.651 \\   
& (0.075) & (0.073) & (0.105) & \\
Party  & 0.551 & 0.443 & 0.108 & 0.015 \\   
& (0.032) & (0.031) & (0.044) & \\
Other's party revealed  & 0.659 & 0.664 & -0.005 & 0.791 \\   
 & (0.013) & (0.012) & (0.018) & \\
Others' have  & 0.492 & 0.488 & 0.005 & 0.856 \\   
party-truth aligned & (0.019) & (0.017) & (0.025) & \\
Self has  & 0.499 & 0.490 & 0.008 & 0.547 \\   
party-truth aligned & (0.009) & (0.010) & (0.014) & \\
\hline
\(N\)  & 1,470 & 1,529 & 2,999 & \\
\hline
\hline
%\multicolumn{5}{p{15cm}}{\footnotesize Standard errors in parentheses. Rep vs. Dem is the rating of the Republican Party minus the rating of the Democratic Party and is between -1 and 1. Partisanship is the absolute difference in these ratings. Education is in years. Cook PVI is Cook's Partisan Voting Index of the subject's state. WTP elicited is 1 if the subject is in the willingness-to-pay treatment and 0 if in the second-guess treatment. Given prior is 1 if the subject is told that P(True News) is 1/2 and 0 if not.}\\
\end{tabular}
\label{balance-table-s2}
\end{footnotesize}
% \vspace{3mm}
\begin{tablenotes}
\item \begin{scriptsize} \textbf{Notes:} Standard errors in parentheses. Education is in years. CRT score is number of correct answers on the cognitive reflection task. Party is 1 if subject is Republican or Republican-leaning and 1/2 if subject is Independent (no lean). Party-truth alignment is defined in the main text. Party-truth alignment is restricted to observations where party is revealed.
\end{scriptsize}
\end{tablenotes}
% \begin{scriptsize} \textbf{Notes:} Standard errors in parentheses. Rep vs. Dem is the rating of the Republican Party minus the rating of the Democratic Party and is between -1 and 1. Partisanship is the absolute difference in these ratings. Education is in years. Religious is 1 if subject in any religious group. Red State is 1 if state voted for Trump in 2016 election. WTP elicited is 1 if subject in the WTP group and 0 if in the second-guess group. Told 1/2 True is 1 if subject is told that P(True News) is 1/2 and 0 if subject is not. \end{scriptsize}
\end{threeparttable}
\end{center}

\clearpage

\begin{center}
\begin{threeparttable}[htbp!]
\begin{footnotesize}
\caption{Factors that lead unincentivized senders to choose false computer messages}
{
\def\sym#1{\ifmmode^{#1}\else\(^{#1}\)\fi}
\begin{tabular}{l*{3}{c}}
\hline\hline
                    &\multicolumn{1}{c}{Vs. Party-True Aligned}         &\multicolumn{1}{c}{Vs. No Info}         &\multicolumn{1}{c}{Vs. Neutral Topics}         \\
\hline
Party-False Aligned &   -0.022         &    0.025         &   -0.065\sym{*}  \\
                    &  (0.028)         &  (0.034)         &  (0.033)         \\
Question FE         &$\checkmark$         &$\checkmark$         &                  \\
Subject FE          &$\checkmark$         &$\checkmark$         &$\checkmark$         \\
Round FE            &$\checkmark$         &$\checkmark$         &$\checkmark$         \\
Vs. Party-True Aligned\hspace{20mm} &$\checkmark$         &                  &                  \\
Vs. No Info         &                  &$\checkmark$         &                  \\
Vs. Neutral Topics  &                  &                  &$\checkmark$         \\
\hline
Observations        &      803         &      812         &      695         \\
Subjects            &      235         &      243         &      224         \\
Mean                &    0.241         &    0.230         &    0.253         \\
\hline\hline
\multicolumn{4}{l}{\footnotesize Standard errors in parentheses}\\
\multicolumn{4}{l}{\footnotesize \sym{*} \(p<0.10\), \sym{**} \(p<0.05\), \sym{***} \(p<0.01\)}\\
\end{tabular}
}

\label{2-among-unincentivized}
\end{footnotesize}
\begin{tablenotes}
\begin{scriptsize}
\item \textbf{Notes:} OLS, errors clustered at subject level. Dependent variable: indicator for sender choosing the false message. Each column represents a different subset of the data, and Vs. lines indicate the comparison group. Party-False Aligned: indicator for the receiver's party being revealed and aligned with the false message. Prior-False Aligned: the receiver's prior belief that the incorrect answer is true. Party-True Aligned: indicator for the receiver's party being revealed and aligned with the true message. No Info: indicator for the receiver's party not being revealed. 
\end{scriptsize}
\end{tablenotes}
\end{threeparttable}
\end{center}

\clearpage

\begin{center}
\begin{threeparttable}[htbp!]
\begin{footnotesize}
\caption{The effect of own party on messages}
{
\def\sym#1{\ifmmode^{#1}\else\(^{#1}\)\fi}
\begin{tabular}{l*{3}{c}}
\hline\hline
                    &\multicolumn{1}{c}{Incentivized}         &\multicolumn{1}{c}{Unincentivized}         &\multicolumn{1}{c}{Interaction}         \\
\hline
Own Party-False Aligned    &    0.070\sym{*}  &    0.190\sym{***}&    0.189\sym{***}\\
             &  (0.039)         &  (0.029)         &  (0.029)         \\
Other's Party-False Aligned &    0.155\sym{***}&   -0.023         &   -0.017         \\
             &  (0.045)         &  (0.027)         &  (0.028)         \\
Own Party-False Aligned x Incentivized    &                  &                  &   -0.108\sym{**} \\
&                  &                  &  (0.049)         \\
Other's Party-False Aligned x Incentivized &                  &                  &    0.165\sym{***}\\
&                  &                  &  (0.053)         \\
Question FE         &$\checkmark$         &$\checkmark$         &$\checkmark$         \\
Subject FE          &$\checkmark$         &$\checkmark$         &$\checkmark$         \\
Round FE            &$\checkmark$         &$\checkmark$         &$\checkmark$         \\
Incentivized subjects &$\checkmark$         &                  &$\checkmark$         \\
Unincentivized subjects\hspace{30mm} &                  &$\checkmark$         &$\checkmark$         \\
\hline
Observations        &      763         &      790         &     1553         \\
Subjects            &      225         &      231         &      456         \\
Mean                &    0.339         &    0.234         &    0.286         \\
\hline\hline
\multicolumn{4}{l}{\footnotesize Standard errors in parentheses}\\
\multicolumn{4}{l}{\footnotesize \sym{*} \(p<0.10\), \sym{**} \(p<0.05\), \sym{***} \(p<0.01\)}\\
\end{tabular}
}

\label{2-ownparty}
\end{footnotesize}
\begin{tablenotes}
\begin{scriptsize}
\vspace{-2mm}
\item \textbf{Notes:} OLS, errors clustered at subject level. Dependent variable: indicator for sender choosing the false message. Prior-False Aligned: the receiver's prior belief that the incorrect answer is true. Party-True Aligned: indicator for the receiver's party being revealed and aligned with the true message. 
\end{scriptsize}
\vspace{5mm}
\end{tablenotes}
\end{threeparttable}
\end{center}

\clearpage

\setcounter{page}{1}

\section{Online Appendix: Study Materials for the Primary Experiment}

\subsection{Question Wordings}
\label{1-question-wordings}

\begin{small}

\subsubsection*{Crime Under Trump}
\vspace{-1mm}

The Trump administration campaigned on tough-on-crime policies. Some people believe that the Trump administration's policies were effective at reducing violent crime, while others believe that his rhetoric provoked more violence.

This question asks how violent crime rates changed during the Trump administration. In 2016 (before Trump became president), the violent crime rate was 386.6 per 100,000 Americans.

In 2020 (at the end of Trump's presidency), do you think it is more likely that the violent crime rate was greater or less than [300 or 500] per 100,000 Americans?

\vspace{1mm}

\textit{Correct answer: 366.7 per 100,000}

\textit{Source linked on results page: \url{http://bit.ly/us-crime-rate}}

\subsubsection*{Undocumented Immigrants}
\vspace{-1mm}

The U.S. has seen a sharp rise in the share of undocumented immigrants over the past several years. Some people believe that undocumented immigrants are more likely to commit violent crime, while others believe that undocumented immigrants are less likely to commit violent crimes.

Texas is the only state that directly compares crime rates for US-born citizens to undocumented immigrants, and provided felony data from 2012-2018. During this time period, the felony violent crime rate was 213 per 100,000 U.S. citizens.

This question asks about the felony violent crime rate for undocumented immigrants. Do you think it is more likely that this rate was greater or less than [90 or 213] per 100,000?

\vspace{1mm}

\textit{Correct answer: 96.2 per 100,000}

\textit{Source linked on results page: \url{http://bit.ly/crime-by-immigrant-status}}

\subsubsection*{Racial Discrimination}
\vspace{-1mm}

In the United States, white Americans have higher salaries than black Americans on average. Some people attribute these differences in income to differences in education, training, and culture, while others attribute them more to racial discrimination.

In a study, researchers sent fictitious resumes to respond to thousands of help-wanted ads in newspapers. The resumes sent had identical skills and education, but the researchers gave half of the (fake) applicants stereotypically White names such as Emily Walsh and Greg Baker, and gave the other half of the applicants stereotypically Black names such as Lakisha Washington and Jamal Jones.

This question asks how the callback rates differed between White- and Black-sounding names. 9.65 percent of the applicants with White-sounding names received a call back. Do you think it is more likely that the percent of the applicants with Black-sounding names who received a call back was greater or less than [5.0 or 8.5] percent?

\vspace{1mm}

\textit{Correct answer: 6.45 percent}

\textit{Source linked on results page: \url{http://bit.ly/labor-market-discrimination}}

\subsubsection*{Media Bias}
\vspace{-1mm}

Some people believe that the media is filled with Democrats and unfairly biased towards the Democratic Party, while some believe the media is more balanced, and others believe it is biased towards Republicans.

This question asks whether journalists are significantly more likely to be Democrats than Republicans.

A representative sample of journalists were asked about their party affiliation. Compared to the number of Republicans, do you think it is more likely that the number of journalists who said they were Democrats was greater or less than [2 or 5] times as much?

\vspace{1mm}

\textit{Correct answer: 4 times as much}

\textit{Source linked on results page: \url{http://bit.ly/journalist-political-affiliation}}

\subsubsection*{COVID-19 Restrictions}
\vspace{-1mm}

In the face of the coronavirus pandemic, some places mandated strict lockdowns, while other places allowed for more activity and opened up sooner. This question asks how effective lockdowns were at preventing the spread of the coronavirus.

A recent study estimated how cases would have changed during the early stages of the pandemic if all areas implemented stay-at-home orders on March 17, 2020.

This question asks about the percent reduction in cases by April 30, 2020 if all areas implemented stay-at-home orders on March 17, 2020. Do you think it is more likely that this reduction was greater or less than [10 or 50] percent?

\vspace{1mm}

\textit{Correct answer: 19.5 percent}

\textit{Source linked on results page: \url{http://bit.ly/covid-restrictions-effect}}

\subsubsection*{Gun Laws}
\vspace{-1mm}

The United States has a homicide rate that is much higher than other wealthy countries. Some people attribute this to the prevalence of guns and favor stricter gun laws, while others believe that stricter gun laws will limit Americans' Second Amendment rights without reducing homicides very much.

After a mass shooting in 1996, Australia passed a massive gun control law called the National Firearms Agreement (NFA). The law illegalized, bought back, and destroyed almost one million firearms by 1997, mandated that all non-destroyed firearms be registered, and required a lengthy waiting period for firearm sales.

Democrats and Republicans have each pointed to the NFA as evidence for/against stricter gun laws. In the five years before the NFA (1991-1996), there were 320 homicides per year in Australia. In the five years after the NFA (1998-2003), do you think it is more likely that the average number of homicides in Australia was greater or less than [220 or 320] per year?

\vspace{1mm}

\textit{Correct answer: 318.6 per year}

\textit{Source linked on results page: \url{http://bit.ly/australia-homicide-rate} and \url{http://bit.ly/impact-australia-gun-laws.}}

\subsubsection*{Unemployment Rate}
\vspace{-1mm}

Some people believe that Donald Trump's policies improved the jobs situation in the United States, while others believe that his policies hindered employment.

This question asks whether the unemployment rate increased or decreased during the Trump administration as compared to the end of the Obama administration.

In the last two years of the Obama administration (Jan 2015-Jan 2017), the average unemployment rate was 5.1 percent. Do you think it is more likely that the average unemployment rate during the Trump administration was greater or less than [3.2 or 5.1] percent?

\vspace{1mm}

\textit{Correct answer: 5.04 percent}

\textit{Source linked on results page: \url{	http://bit.ly/unemployment-rate-data}}

\subsubsection*{Wage Growth}
\vspace{-1mm}

Some people believe that the Trump administration did a better job at increasing wages for most Americans, and some people believe that the Obama administration did a better job of wage growth.

In the last two years of the Obama administration (Jan 2015-Jan 2017), the median growth in Americans' wages was 3.28 percent on average.

Do you think it is more likely that the average median growth in Americans' wages during the Trump administration was greater or less than [3.28 or 4] percent?

\vspace{1mm}

\textit{Correct answer: 3.49 percent}

\textit{Source linked on results page: \url{http://bit.ly/median-wage-growth}}

\subsubsection*{Center of the US}
\vspace{-1mm}

The U.S. National Geodetic Survey approximated the geographic center of the continental United States. (This excludes Alaska and Hawaii, and U.S. territories.)

This question asks how far North the U.S. is located. For reference, the continental U.S. lies in the Northern Hemisphere, the Equator is 0 degrees North, and the North Pole is 90 degrees North.

Do you think it is more likely that this geographic center is greater or less than [30 or 45] degrees North?

\vspace{1mm}

\textit{Correct answer: 39.833 degrees North}

\textit{Source linked on results page: \url{http://bit.ly/center-of-the-us}}

\subsubsection*{Random Number}
\vspace{-1mm}

A computer randomly generated a number between 0 and 100, decimals included. What number do you think the computer chose?

As a reminder, it is in your interest to guess an answer that is close to the computer's choice, even if you don't perfectly guess it.

Do you think it is more likely that this number is greater or less than [40 or 60]?

\vspace{1mm}

\textit{Correct answer: 33.54026}

\subsubsection*{Performance on a CRT Task}
\vspace{-1mm}

Previously in this study, you were asked three quiz questions that some people use as a measure of cognitive ability. 

At the end of the study, your score on this test will be compared to the scores among all participants. This question asks you to predict how your score compared to others.

Do you think it is more likely that your score was greater or less than the average score?

\vspace{1mm}

\textit{The average score was between 1 and 2, so subjects who scored 2 or 3 scored greater than the average, and subjects who scored 0 or 1 scored less than the average.}

\subsubsection*{Quote from Biden: Visas and Immigrants}
\vspace{-1mm}

In 2021, Joe Biden said that there are ``over 11 million undocumented folks -- the vast majority are here overstaying visas.''

Do you think this statement is accurate or inaccurate?

\vspace{1mm}

\textit{Correct answer: Inaccurate}

\textit{Source linked on results page: \url{https://bit.ly/undocumented-mostly-visas}}

\subsubsection*{Quote from Biden: White Supremacists}
\vspace{-1mm}

In 2020, Joe Biden said that ``[Donald Trump's] FBI chief has said the greatest domestic threat to terrorism are white supremacists.''

Do you think this statement is accurate or inaccurate?

\vspace{1mm}

\textit{Correct answer: Accurate}

\textit{Source linked on results page: \url{https://bit.ly/white-supremacists-threat}}

\subsubsection*{Quote from Trump: Poverty Rates}
\vspace{-1mm}

In 2018, Donald Trump said that ``The poverty rates for African Americans and Hispanic Americans ... it's been incredible, they've all reached their lowest levels.''

Do you think this statement is accurate or inaccurate?

\vspace{1mm}

\textit{Correct answer: Accurate}

\textit{Source linked on results page: \url{https://bit.ly/poverty-rates-black-hispanic}}

\subsubsection*{Quote from Trump: Illegal Immigration}
\vspace{-1mm}

In 2021, Donald Trump said that there has been ``a massive flood of illegal immigration into our country, the likes of which we have never seen before.''

Do you think this statement is accurate or inaccurate?

\vspace{1mm}

\textit{Correct answer: Inaccurate}

\textit{Source linked on results page: \url{https://bit.ly/record-illegal-immigration}}

\subsubsection*{Attention Check: Current Year}
\label{comprehension-question}
\vspace{-1mm}
In 1776 our fathers brought forth, upon this continent, a new nation, conceived in Liberty, and dedicated to the proposition that all men are created equal. What is the year right now?

This is not a trick question and the first sentence is irrelevant; this is a check to make sure you are paying attention. If you get this question incorrect, you will not be eligible to receive a bonus payment.

\vspace{1mm}

\textit{Correct answer: 2021.}

\textit{Source linked on results page: \url{http://bit.ly/what-year-is-it}}

\end{small}

\newpage

\subsection{Screenshots}
\label{screenshots1}
% \begin{center}
% \includegraphics[width = \textwidth]{}
% \end{center}

% \newpage

% \begin{center}
% \includegraphics[height = \textheight]{}
% \end{center}

% \clearpage

% \begin{center}
% \includegraphics[width = \textwidth]{}
% \end{center}

% \clearpage

\vspace{5mm}
\begin{figure}[h]
\caption{Overview}
\begin{center}
\includegraphics[width = .9\textwidth]{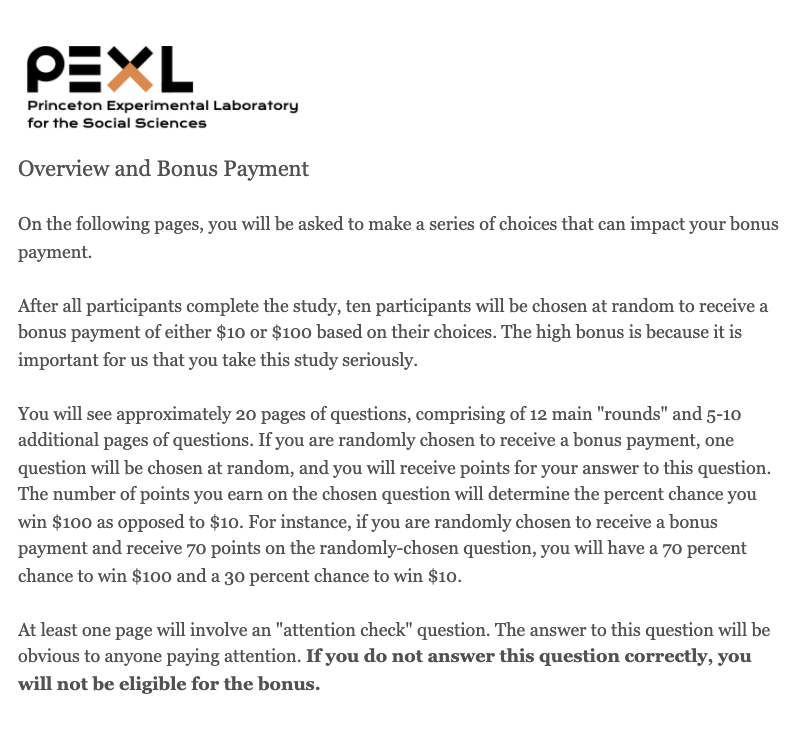}
\end{center}
\end{figure}

\clearpage

\begin{figure}
\caption{Demographics}
\begin{center}
\includegraphics[width = .84\textwidth]{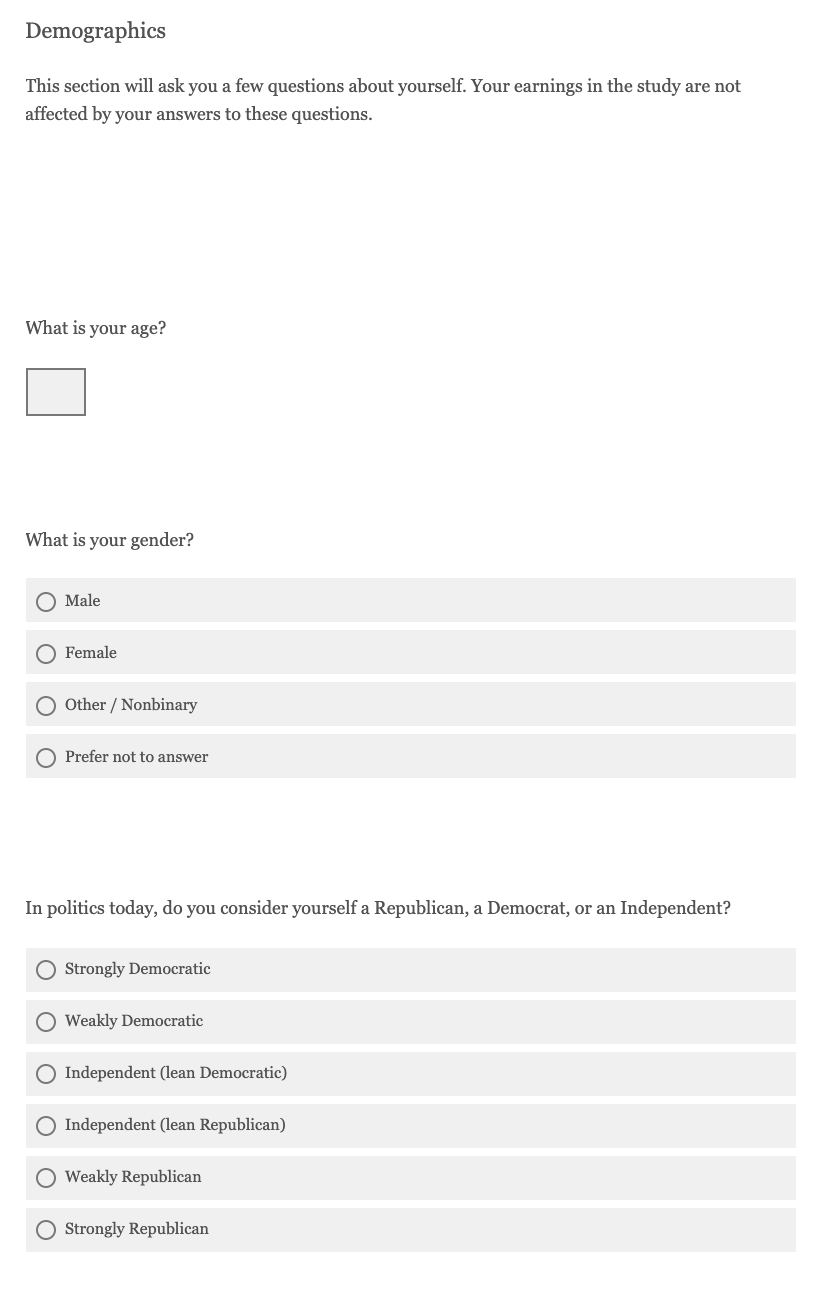}
\end{center}
\end{figure}

\clearpage

\begin{center}
\includegraphics[width = .84\textwidth]{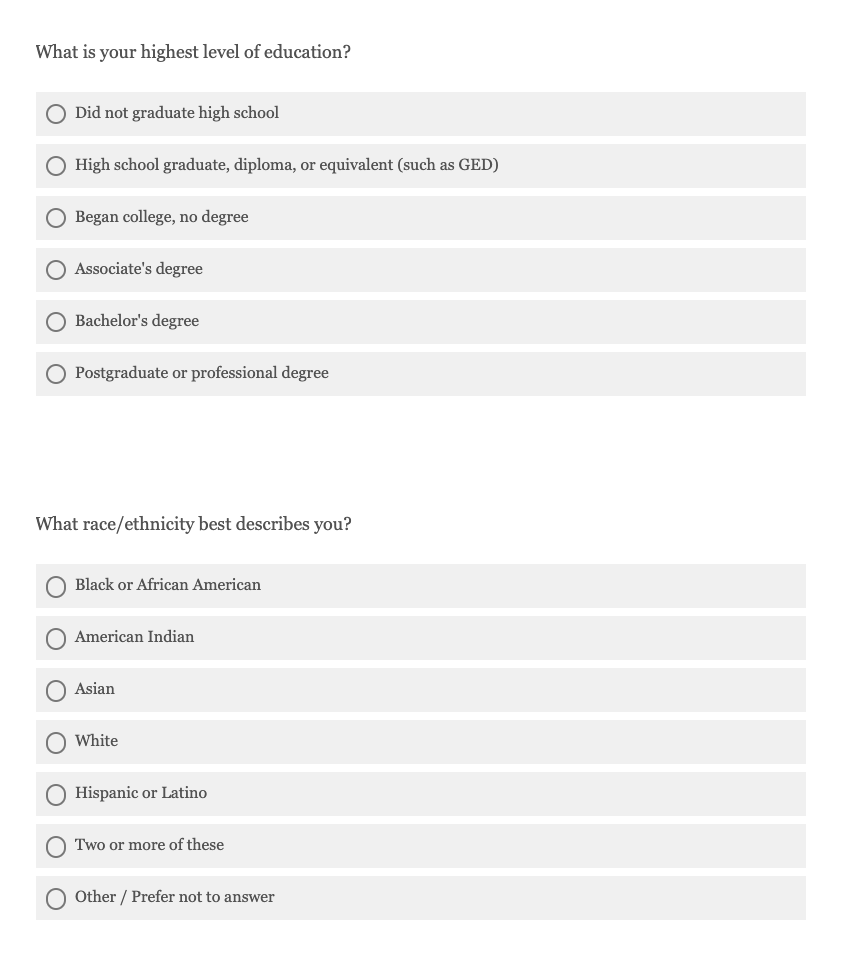}
\end{center}

\clearpage

\begin{figure}
\caption{Cognitive reflection task}
\begin{center}
\includegraphics[height = .92\textheight]{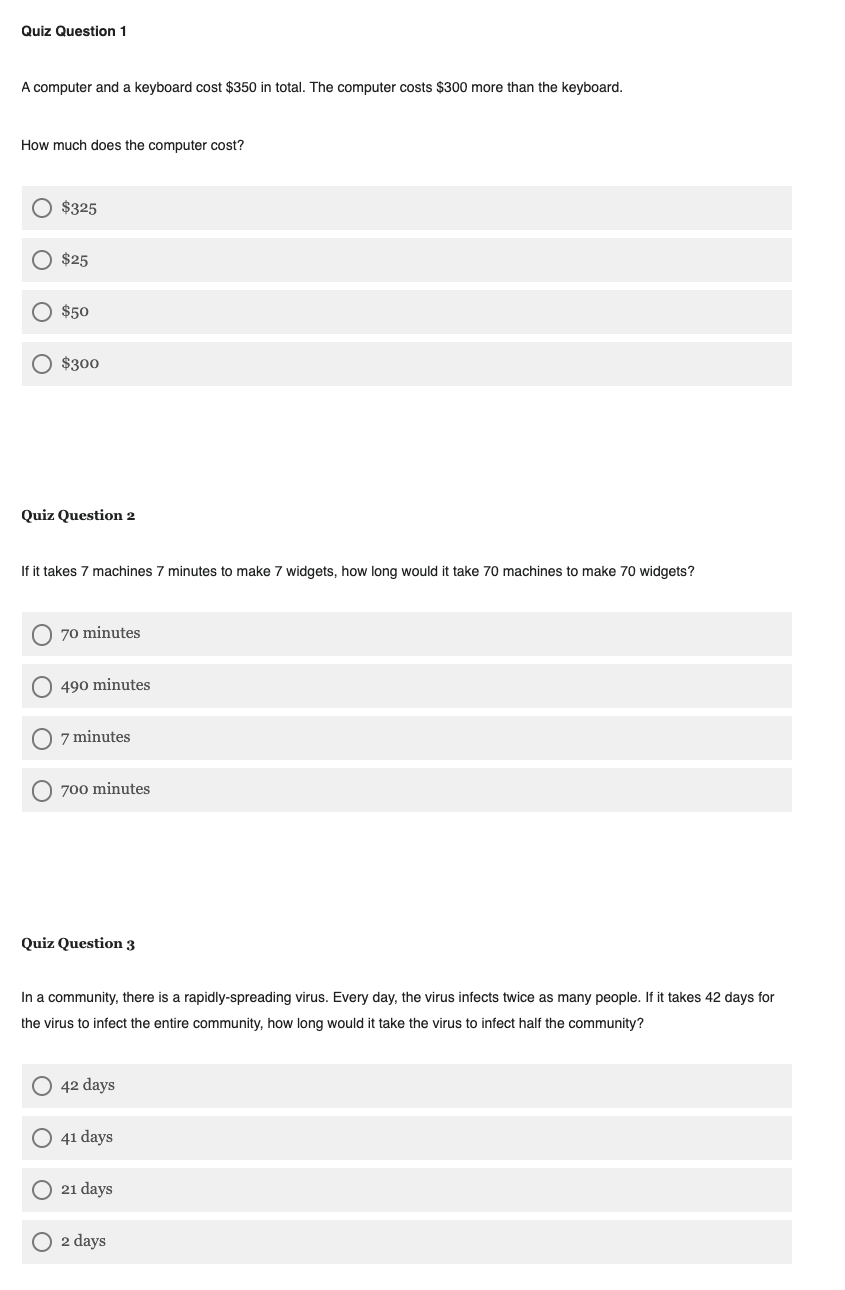}
\end{center}
\end{figure}

\clearpage

\begin{figure}
\caption{Overview for practice questions}
\begin{center}
\includegraphics[width = \textwidth]{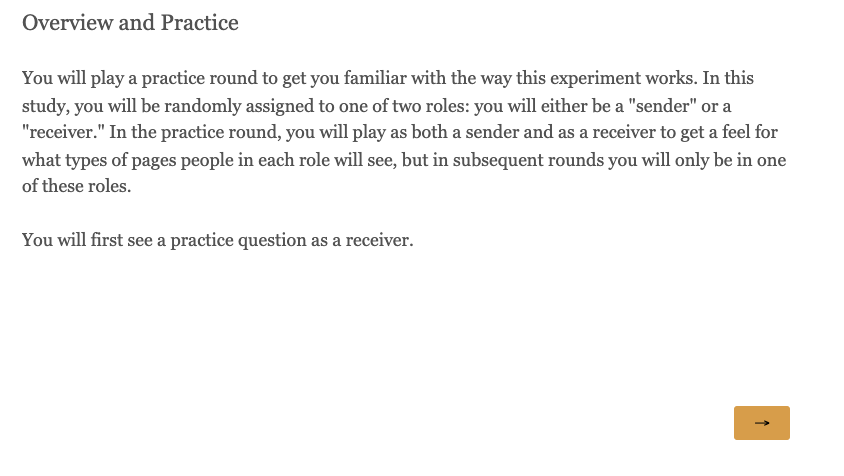}
\end{center}
\end{figure}

\clearpage

\begin{figure}
\caption{Instructions for receiver questions}
\begin{center}
\includegraphics[width = .89\textwidth]{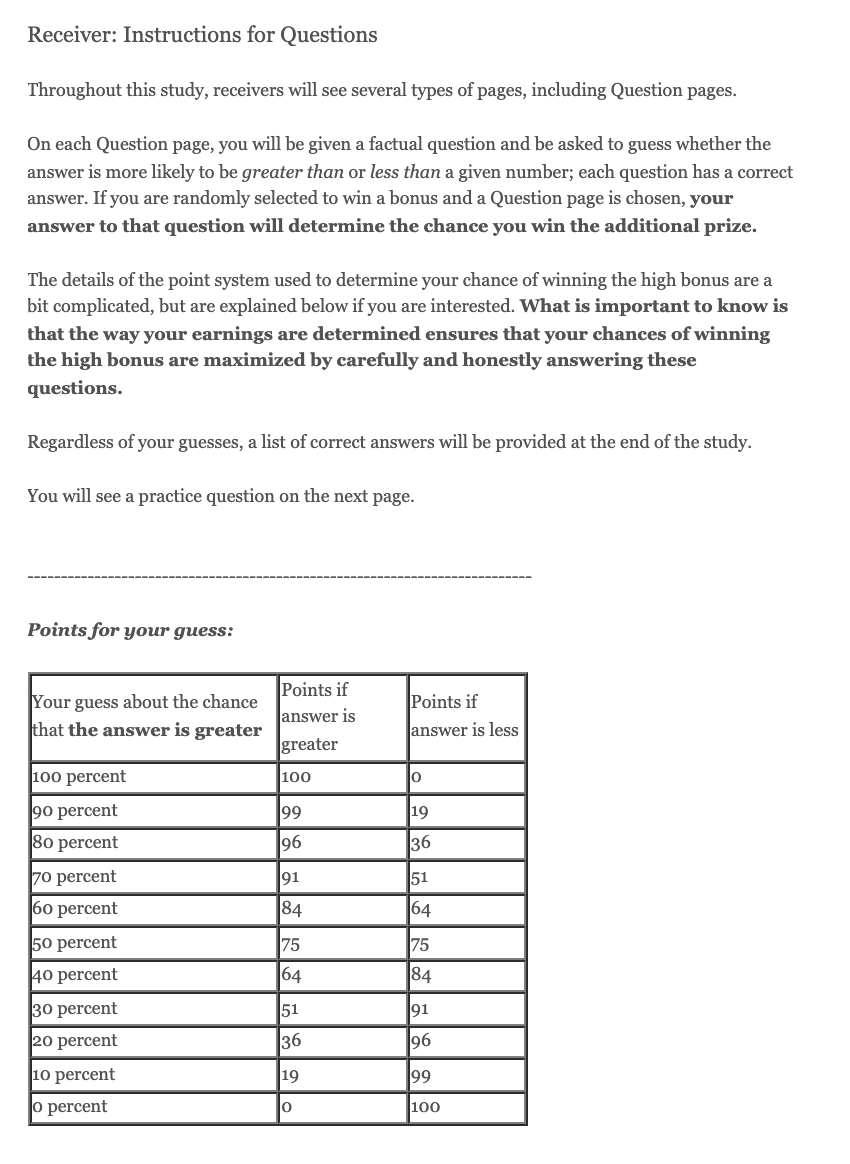}
\end{center}
\hrule
\vspace{3mm}

\noindent \footnotesize{The practice question is omitted since it looks the same as questions in the main rounds.}
\end{figure}

\clearpage

\begin{figure}
\caption{Practice receiver question}
\begin{center}
\includegraphics[width = \textwidth]{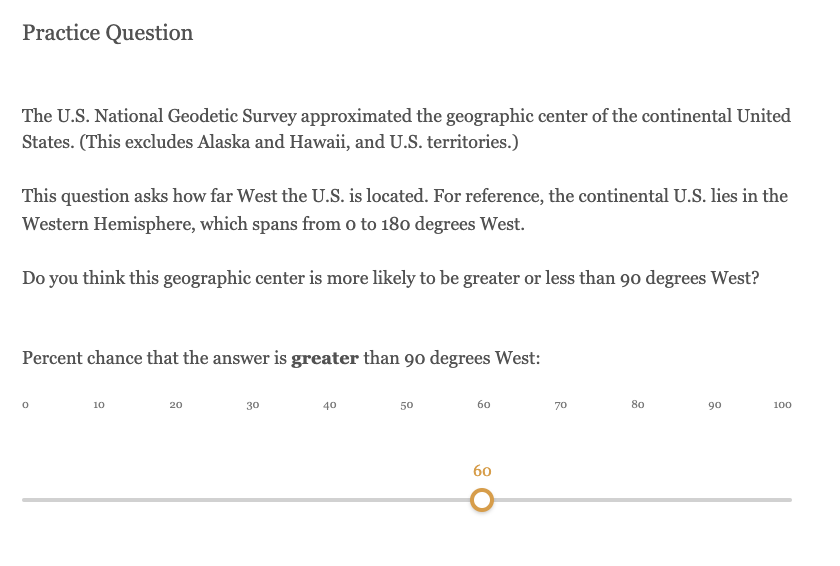}
\end{center}

\end{figure}

\clearpage

\begin{figure}
\caption{Instructions for sending messages: Incentivized treatment}
\begin{center}
\includegraphics[width = \textwidth]{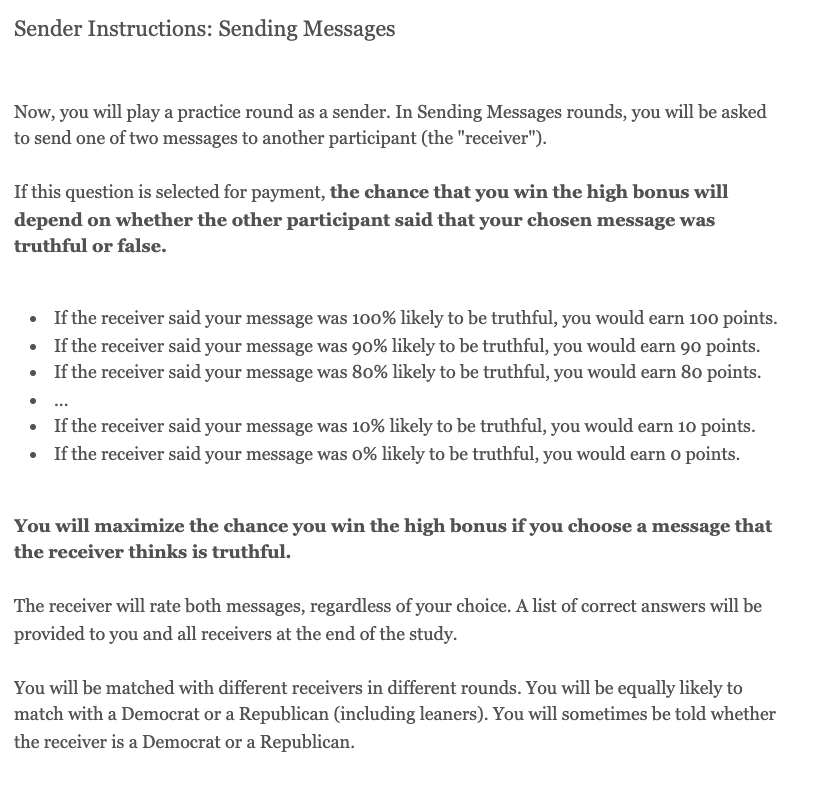}
\end{center}
\hrule
\vspace{5mm}

\small{Subjects in the unincentivized treatment are instead told: \textit{``\textbf{These questions will not affect your bonus payment.}''} and are not given the point system.}

\vspace{5mm} 
\small{Subjects in the competition treatment are instead told that \textit{``If this question is selected for payment, \textbf{the chance that you win the high bonus will depend on whether the receiver said that you were more likely to have sent more truthful messages over the course of the study.}''}}
\end{figure}

\clearpage

\begin{figure}
\caption{Practice sending messages: Incentivized treatment}
\begin{center}
\includegraphics[width = \textwidth]{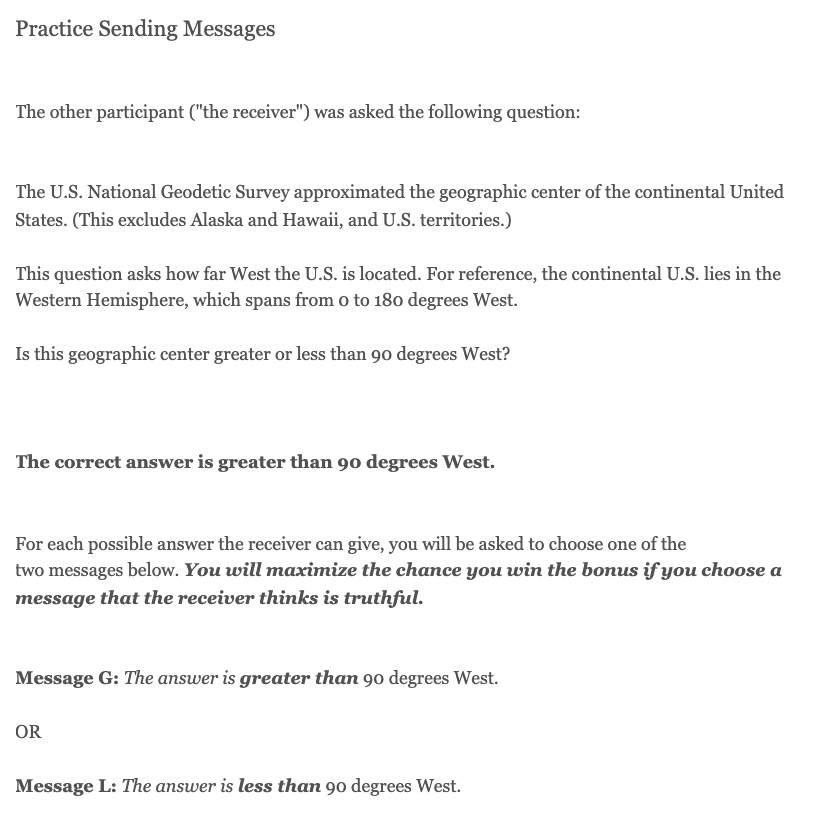}
\end{center}
\end{figure}

\clearpage

\begin{center}
\includegraphics[height = \textheight]{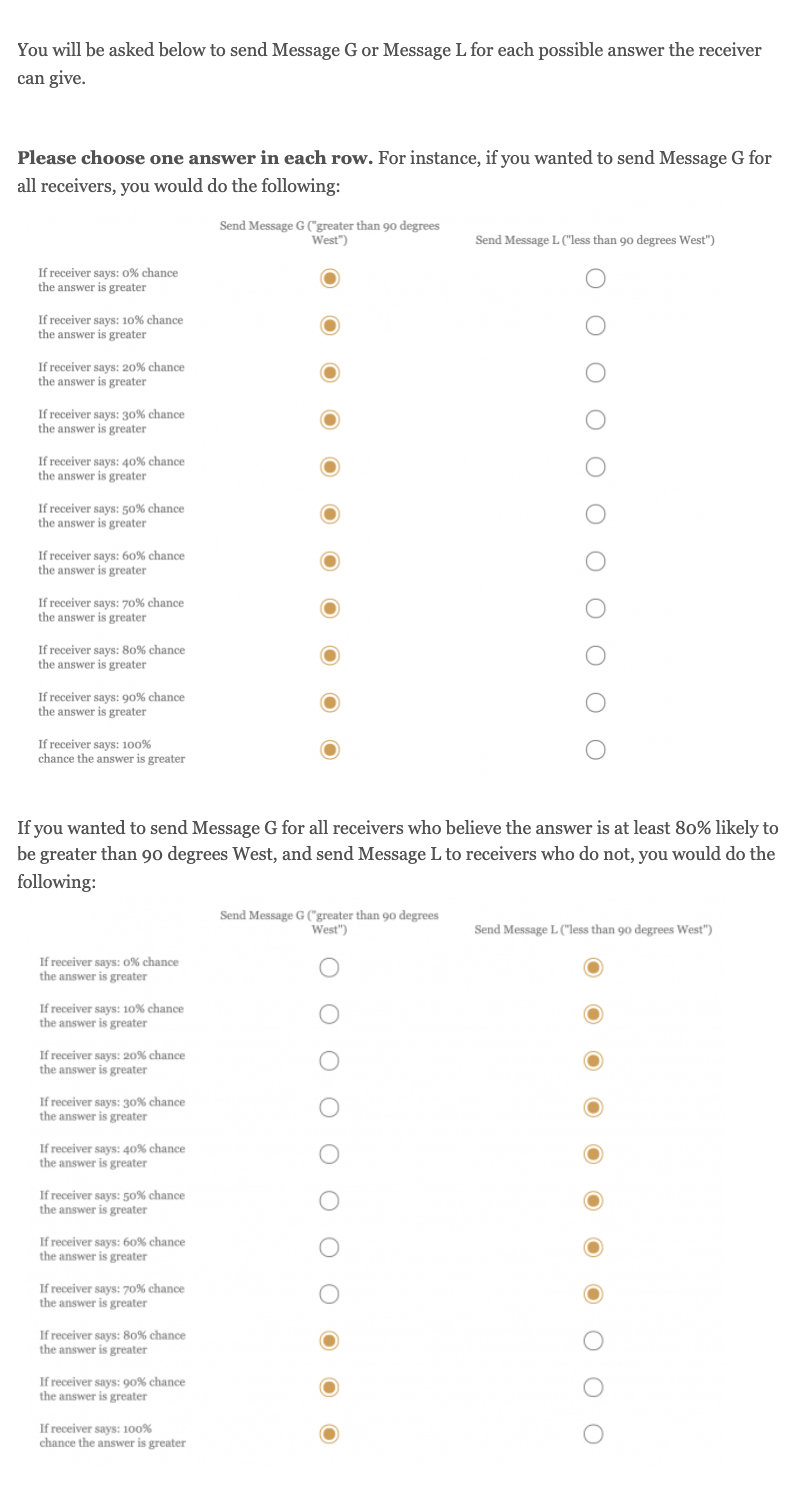}
\end{center}

\clearpage

\begin{center}
\includegraphics[width = \textwidth]{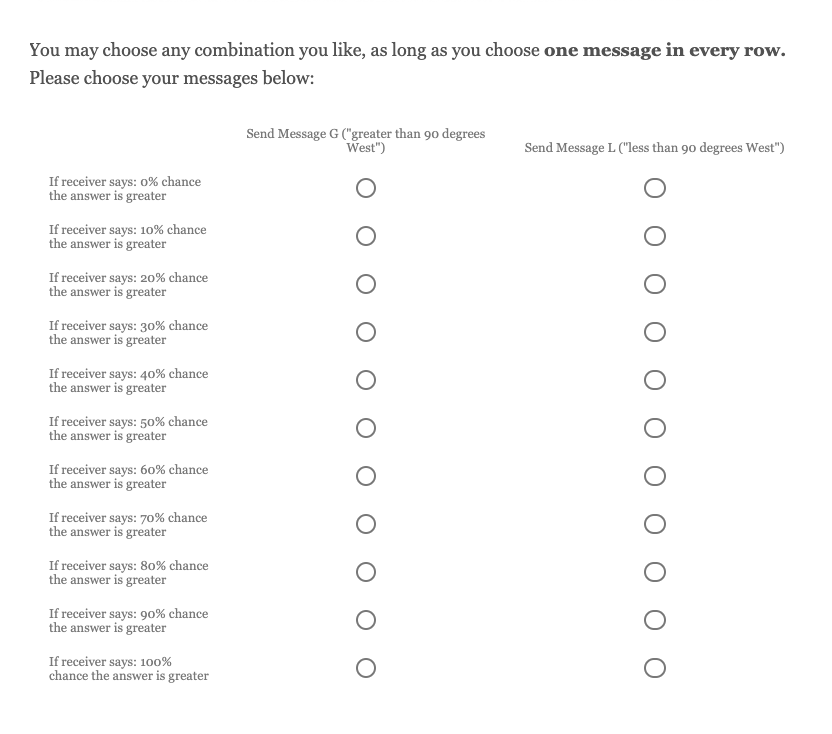}
\end{center}

\hrule
\vspace{5mm}

\noindent \footnotesize{Subjects in the unincentivized treatment are instead told: \textit{``\textbf{Your chance of winning the bonus is not affected by how you answer this question.}''}}

\vspace{5mm} 
\noindent \footnotesize{Subjects in the competition treatment are instead told that \textit{``\textbf{The receiver will predict, based on your message and another sender's message on this question, which sender sent more truthful messages over the course of the experiment. You will maximize the chance you win the bonus if the receiver believes that you sent more truthful messages.}''}}

\clearpage

\begin{figure}
\caption{Instructions for rating messages}
\begin{center}
\includegraphics[width = .97\textwidth]{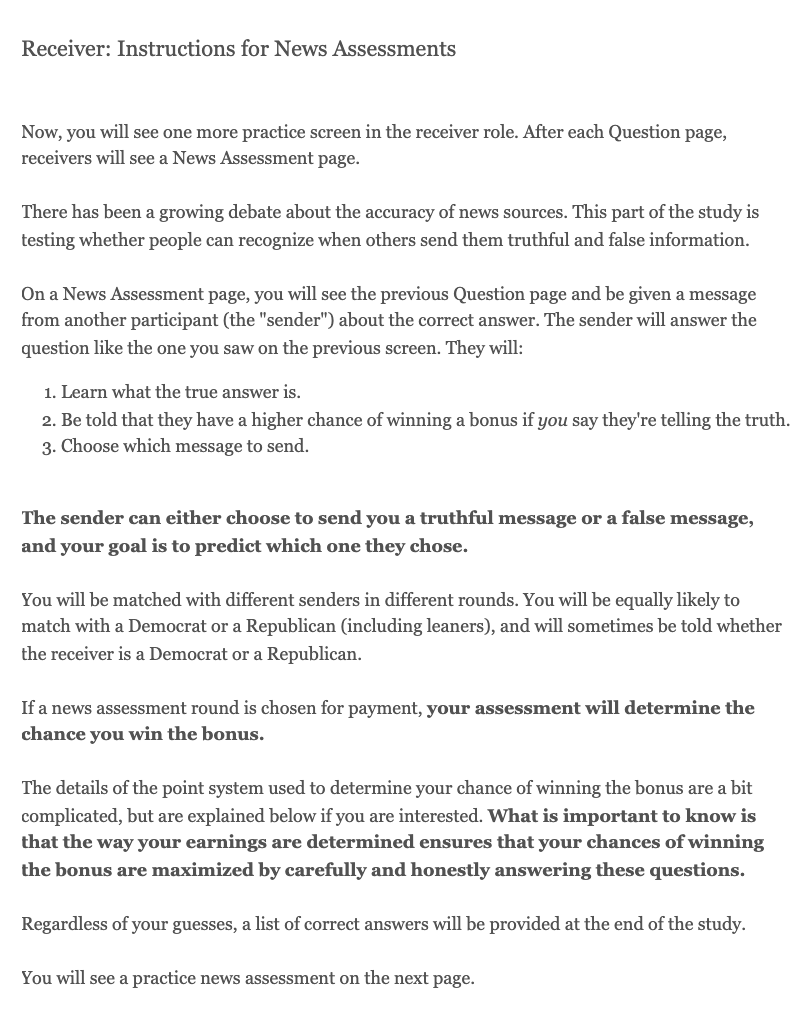}
\end{center}
\hrule
\vspace{3mm}

\noindent \footnotesize{The practice question is omitted since it looks the same as questions in the main rounds.}
\end{figure}

\clearpage

\begin{figure}
\caption{Treatment revelation page}
\begin{center}
\includegraphics[width = \textwidth]{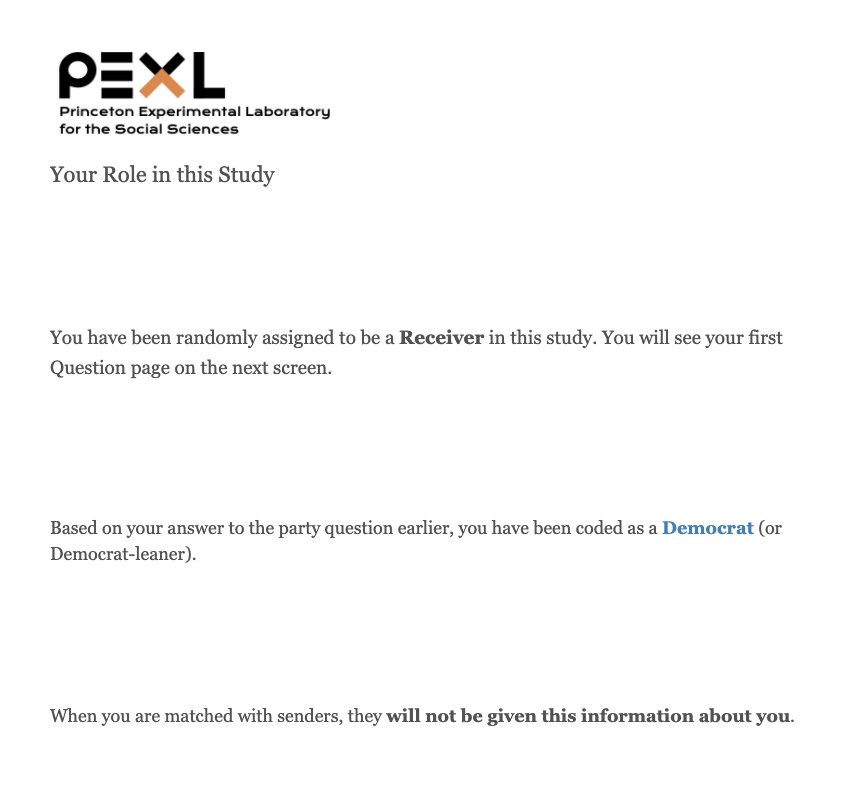}
\end{center}
\end{figure}

\clearpage

\begin{figure}
\caption{Receiver: Prior beliefs}
\begin{center}
\includegraphics[width=.9\textwidth]{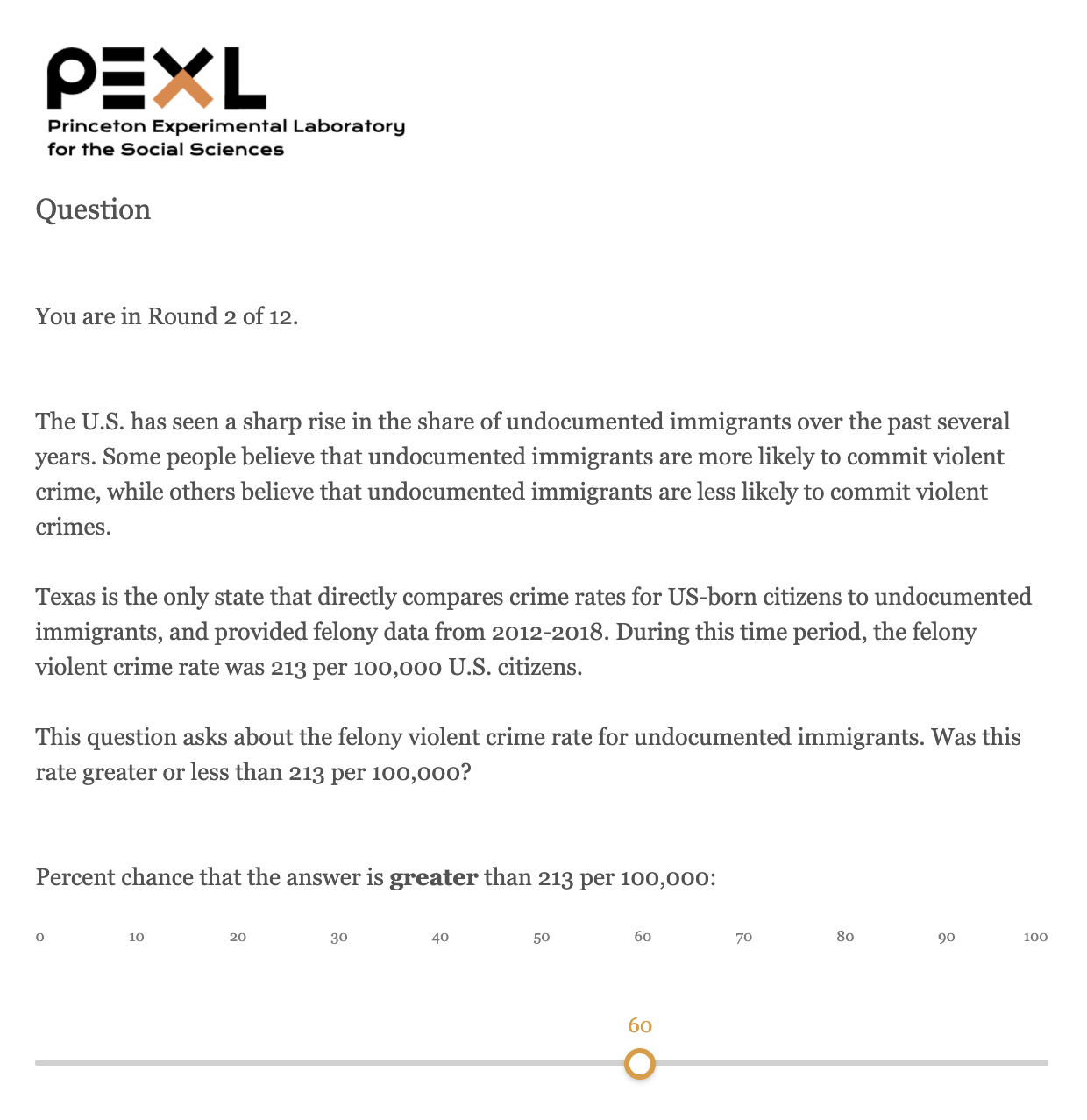}
\end{center}
\end{figure}

\clearpage

\begin{figure}
\caption{Receiver: News ratings}
\begin{center}
\includegraphics[width=.85\textwidth]{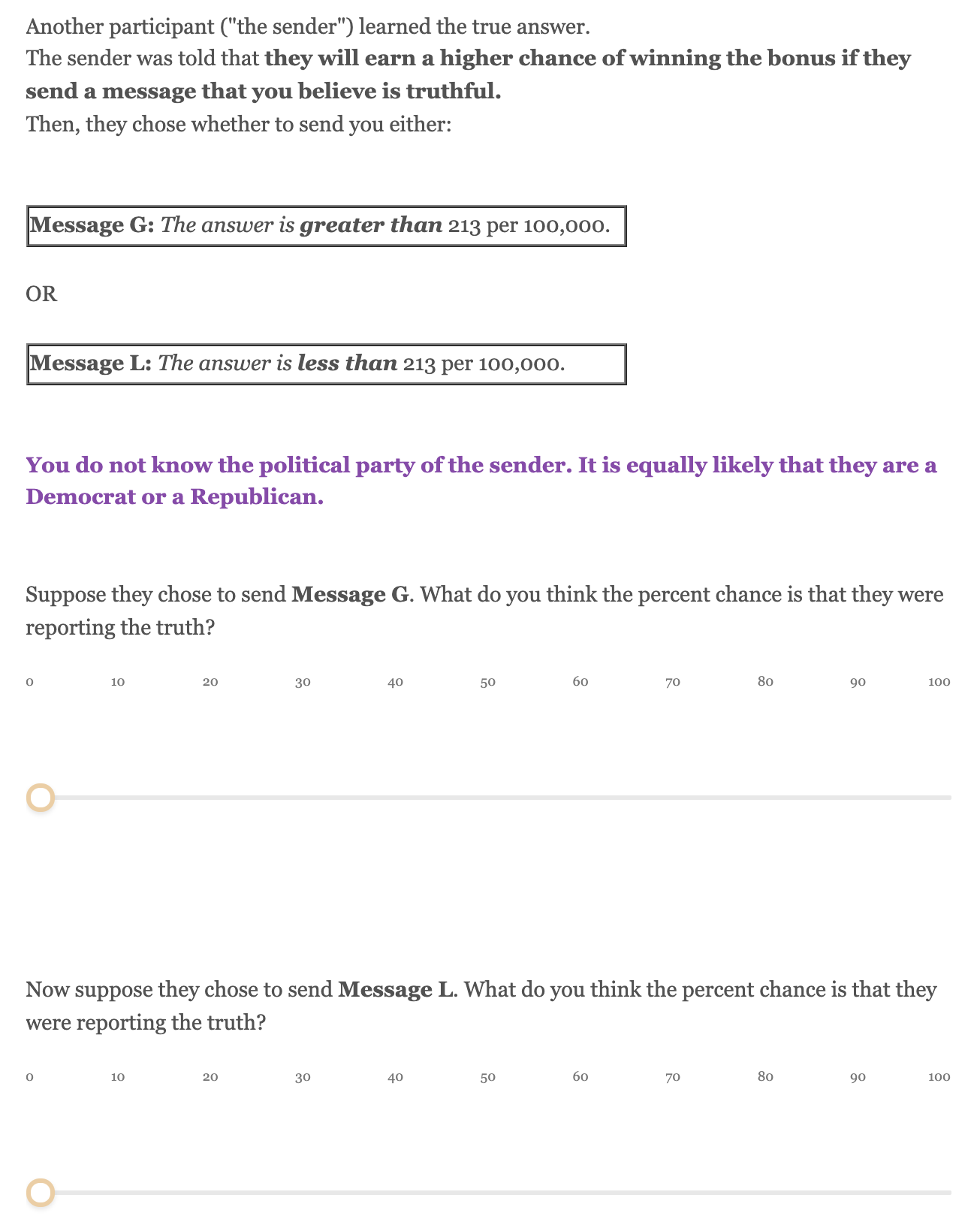}
\end{center}

\vspace{2mm}
\hrule
\vspace{5mm}

\noindent \footnotesize{Receivers in the unincentivized treatment do not see the line about the sender's bonus.}

\vspace{5mm} 
\noindent \footnotesize{Receivers in the competition treatment are instead asked: \textit{Now suppose that \textbf{Sender X} chose to send \textbf{Message G} and that \textbf{Sender Y} chose to send \textbf{Message L}. What do you think is the percent chance is that \textbf{Sender X} sent more truthful signals over the course of the experiment?}}
\end{figure}

\clearpage

\begin{figure}
\caption{Sender: Sending messages}
\begin{center}
\includegraphics[width=.95\textwidth]{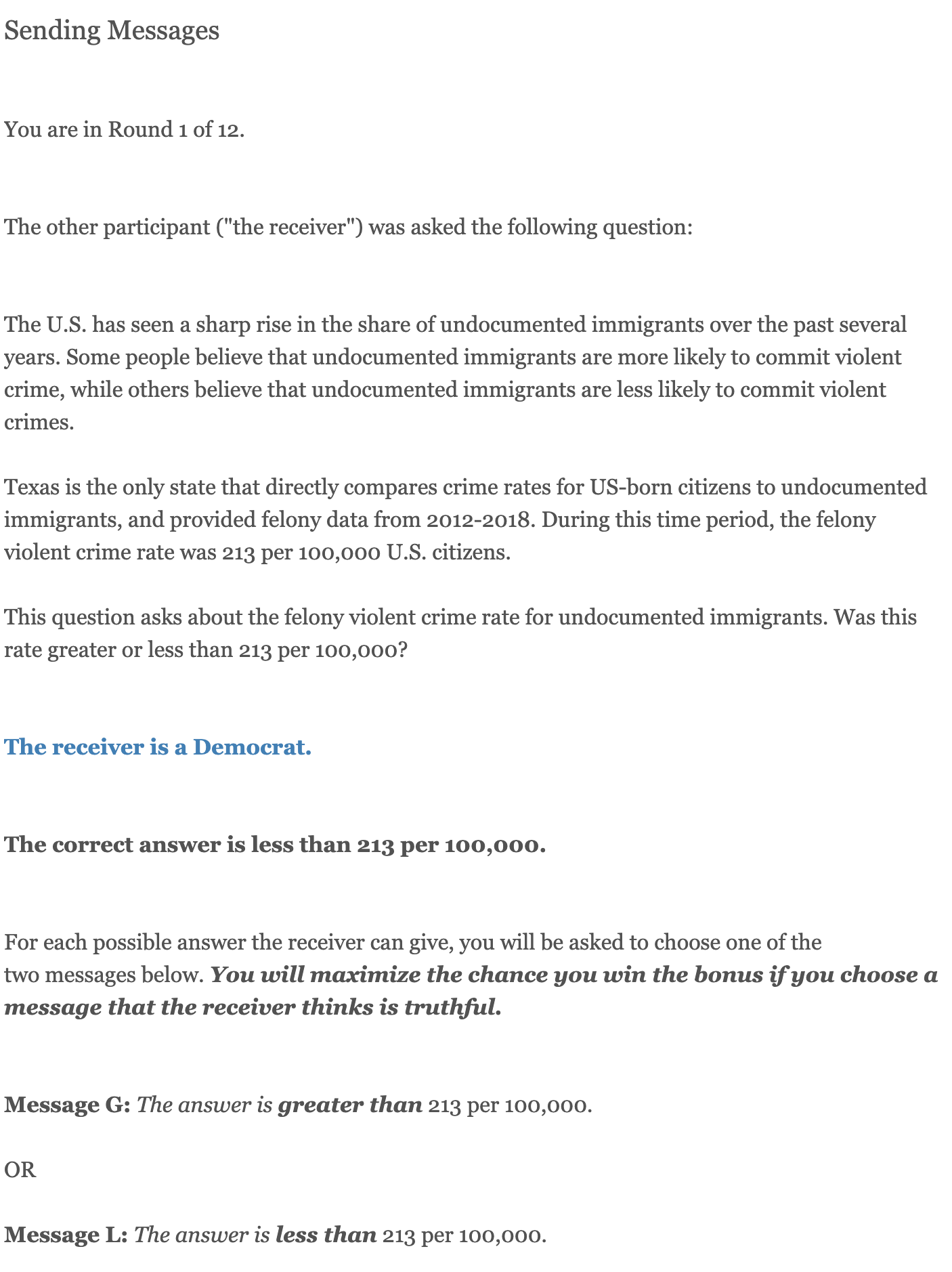}
\end{center}

\end{figure}

\clearpage
\begin{center}
\includegraphics[width=.82\textwidth]{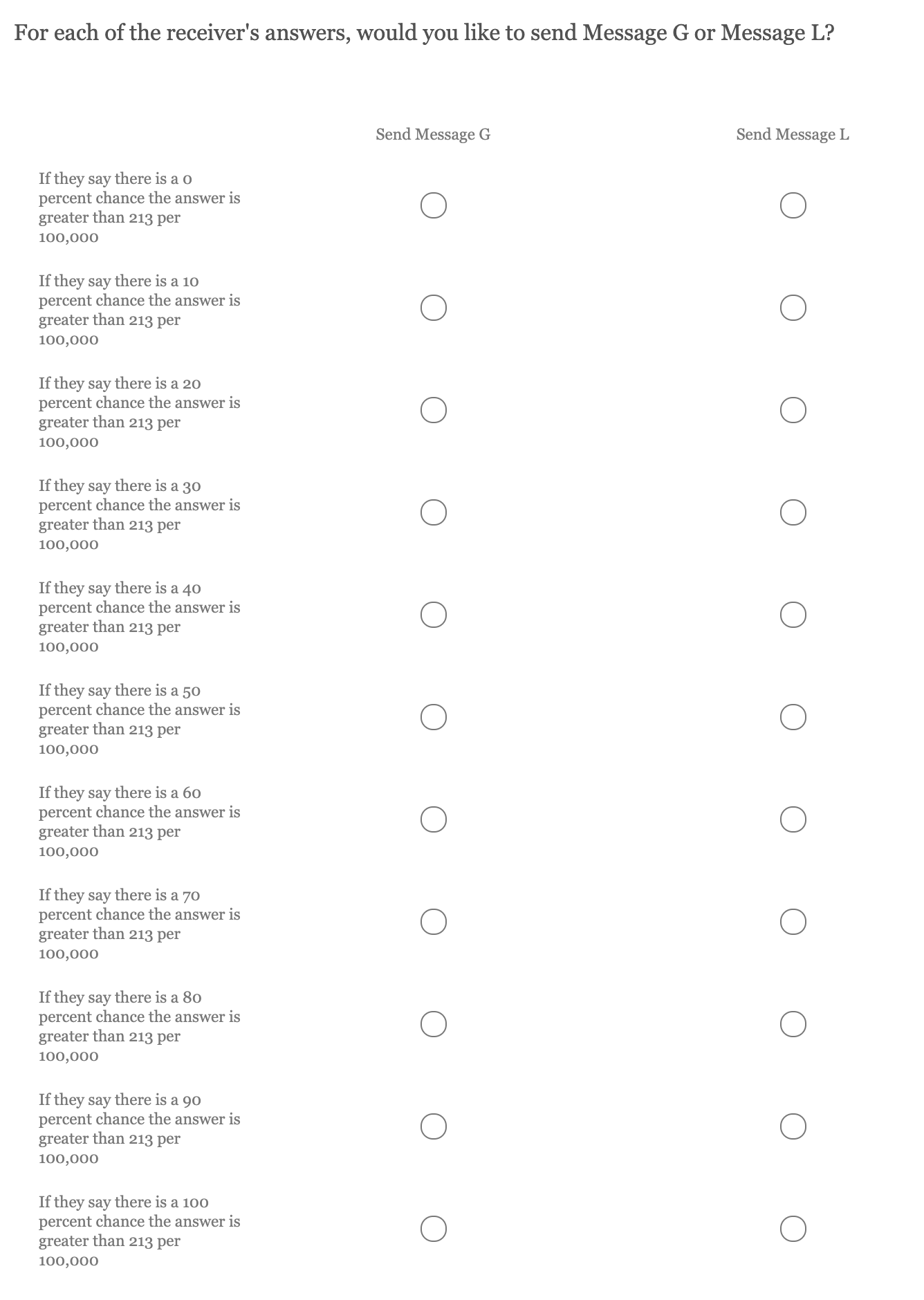}
\end{center}
\hrule
\vspace{2mm}

\noindent \footnotesize{Senders in the unincentivized treatment are instead told: \textit{``\textbf{Your chance of winning the bonus is not affected by how you answer this question.}''}}

\vspace{5mm} 
\noindent \footnotesize{Senders in the competition treatment are instead told that \textit{``\textbf{The receiver will predict, based on your message and another sender's message on this question, which sender sent more truthful messages over the course of the experiment. You will maximize the chance you win the bonus if the receiver believes that you sent more truthful messages.}''}}

\clearpage

\begin{figure}
\caption{Attention check}
\begin{center}
\includegraphics[width = \textwidth]{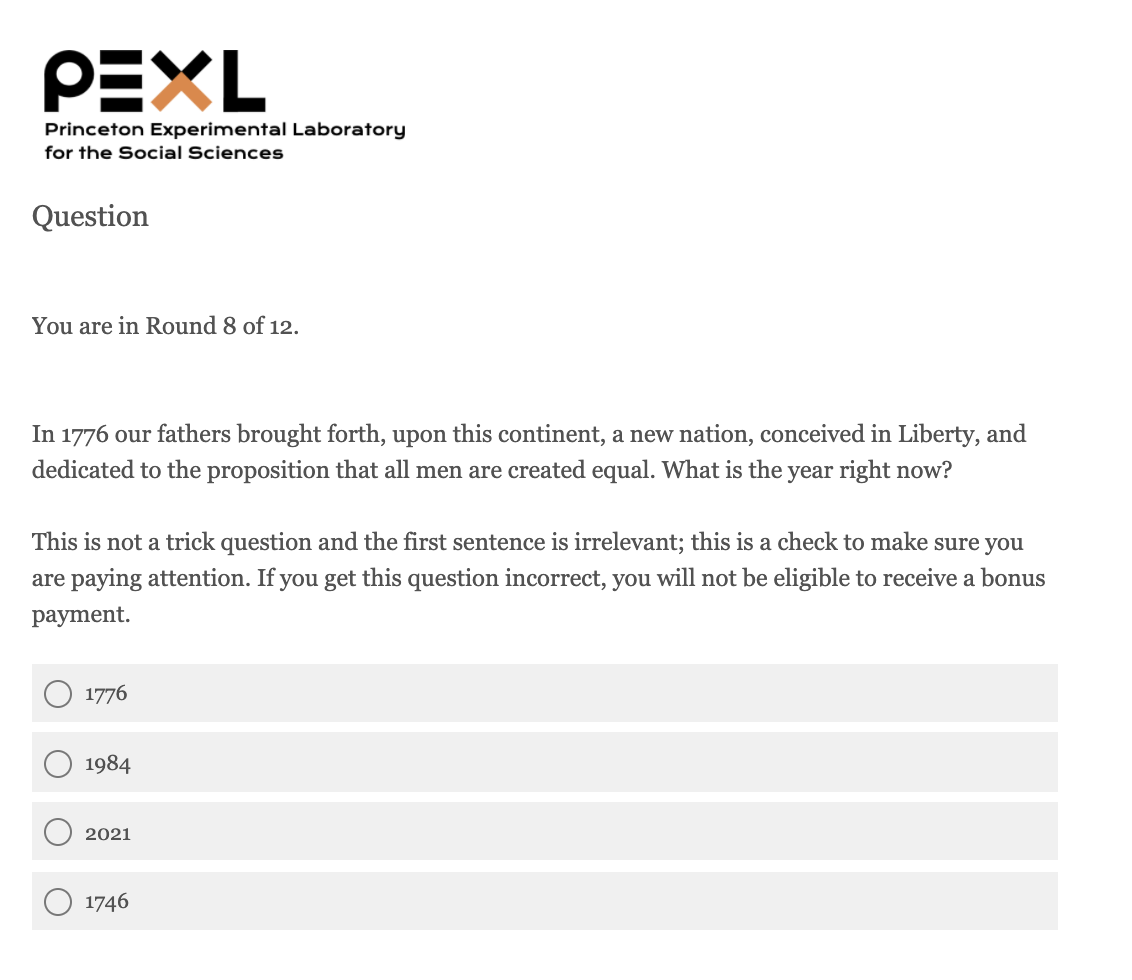}
\end{center}
\end{figure}

\clearpage

\begin{figure}
\caption{Receiver: Quote page instructions}
\begin{center}
\includegraphics[width = \textwidth]{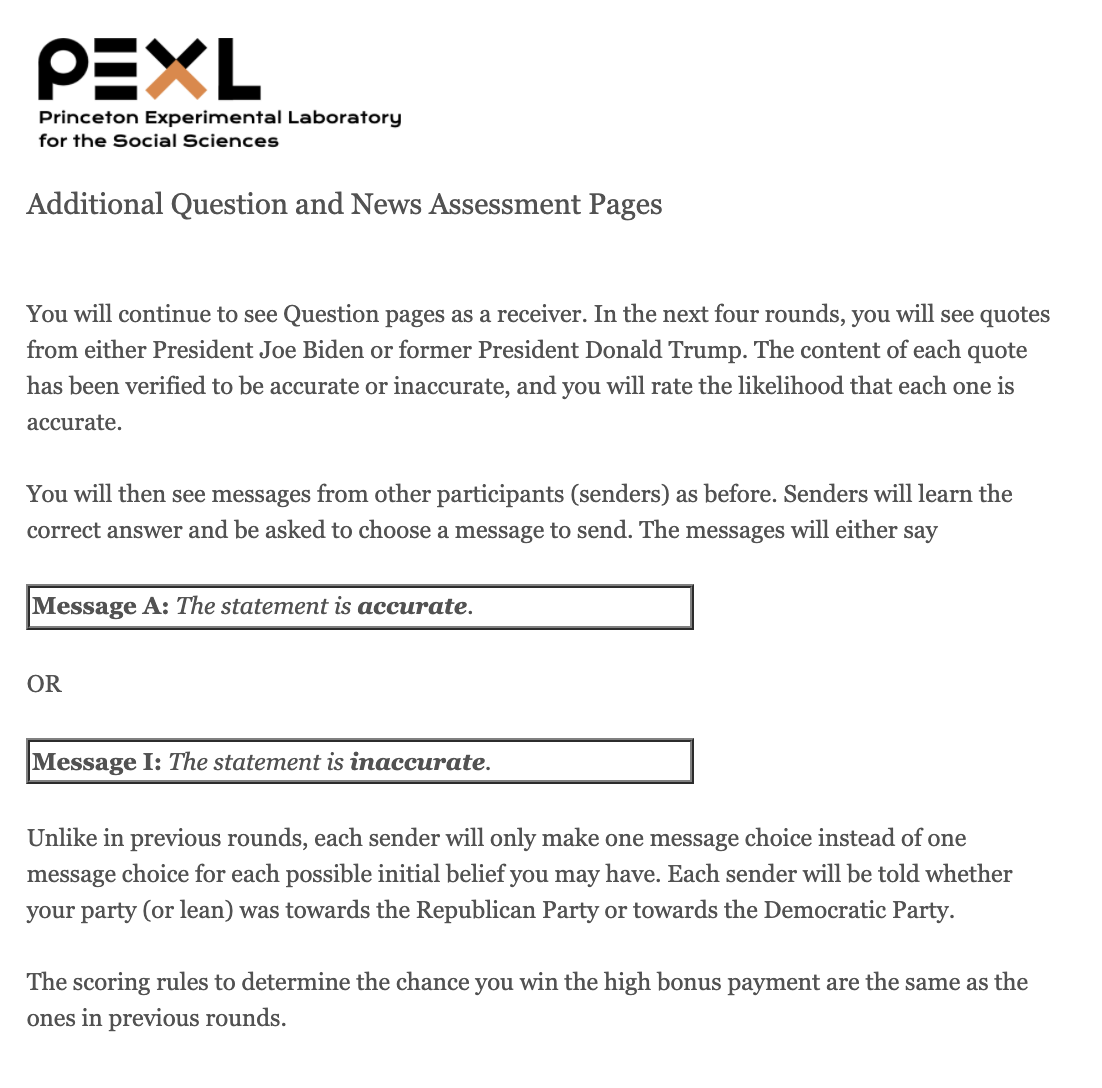}
\end{center}
\end{figure}

\clearpage

\begin{figure}
\caption{Sender: Quote page instructions}
\begin{center}
\includegraphics[width = \textwidth]{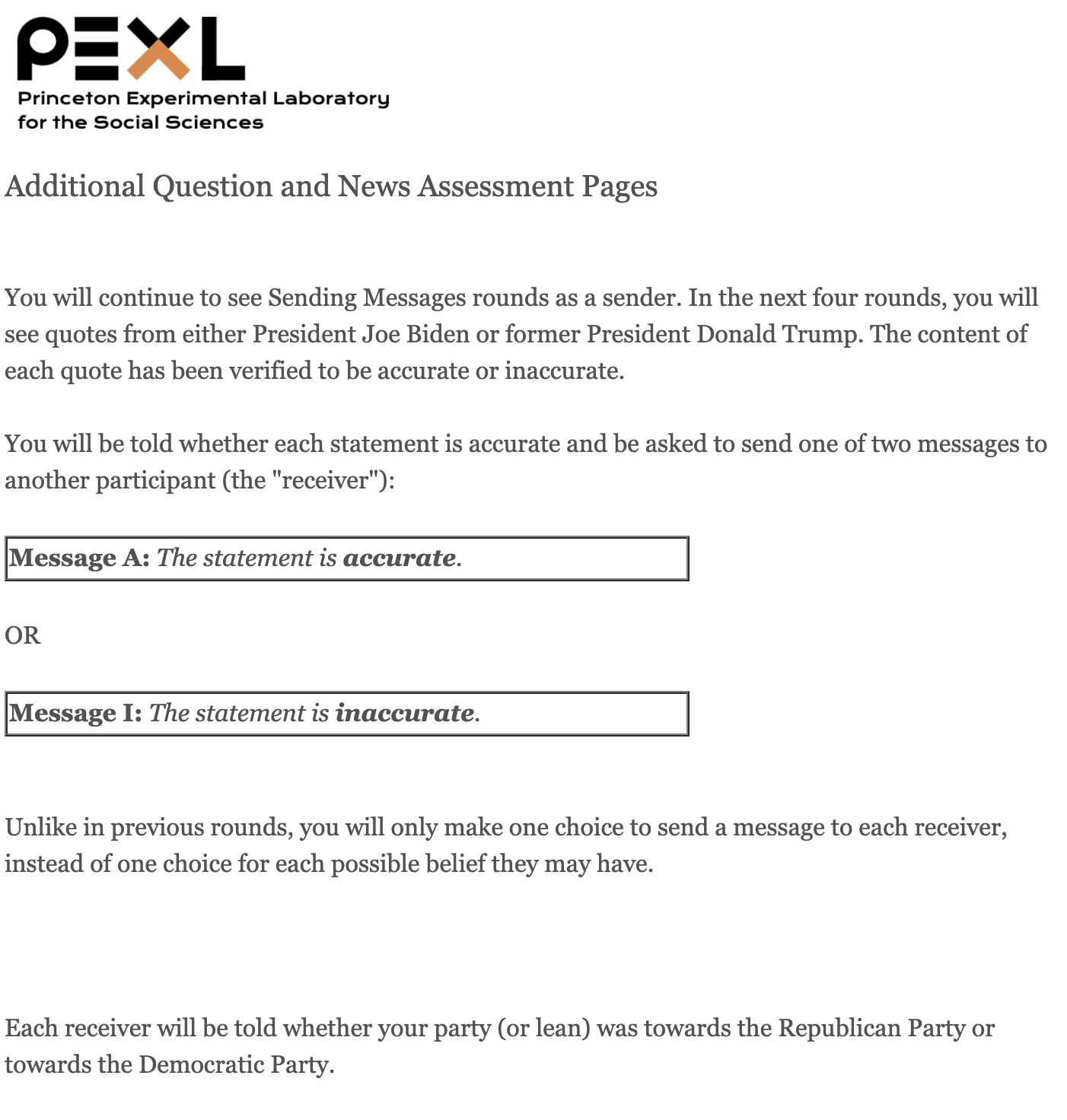}
\end{center}
\end{figure}

\clearpage

\begin{figure}
\caption{Receiver: Quote question}
\begin{center}
\includegraphics[width = \textwidth]{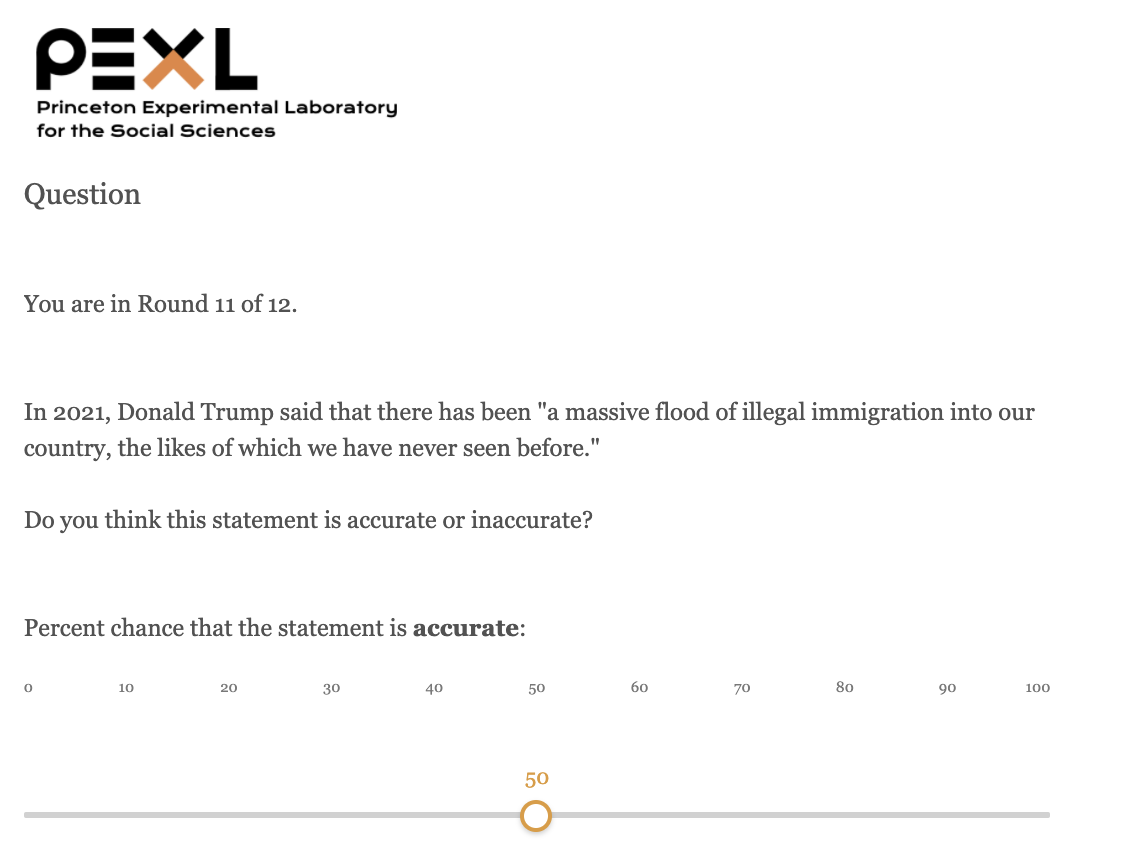}
\end{center}
\end{figure}

\clearpage

\begin{figure}
\caption{Receiver: Quote news ratings}
\begin{center}
\includegraphics[width = .92\textwidth]{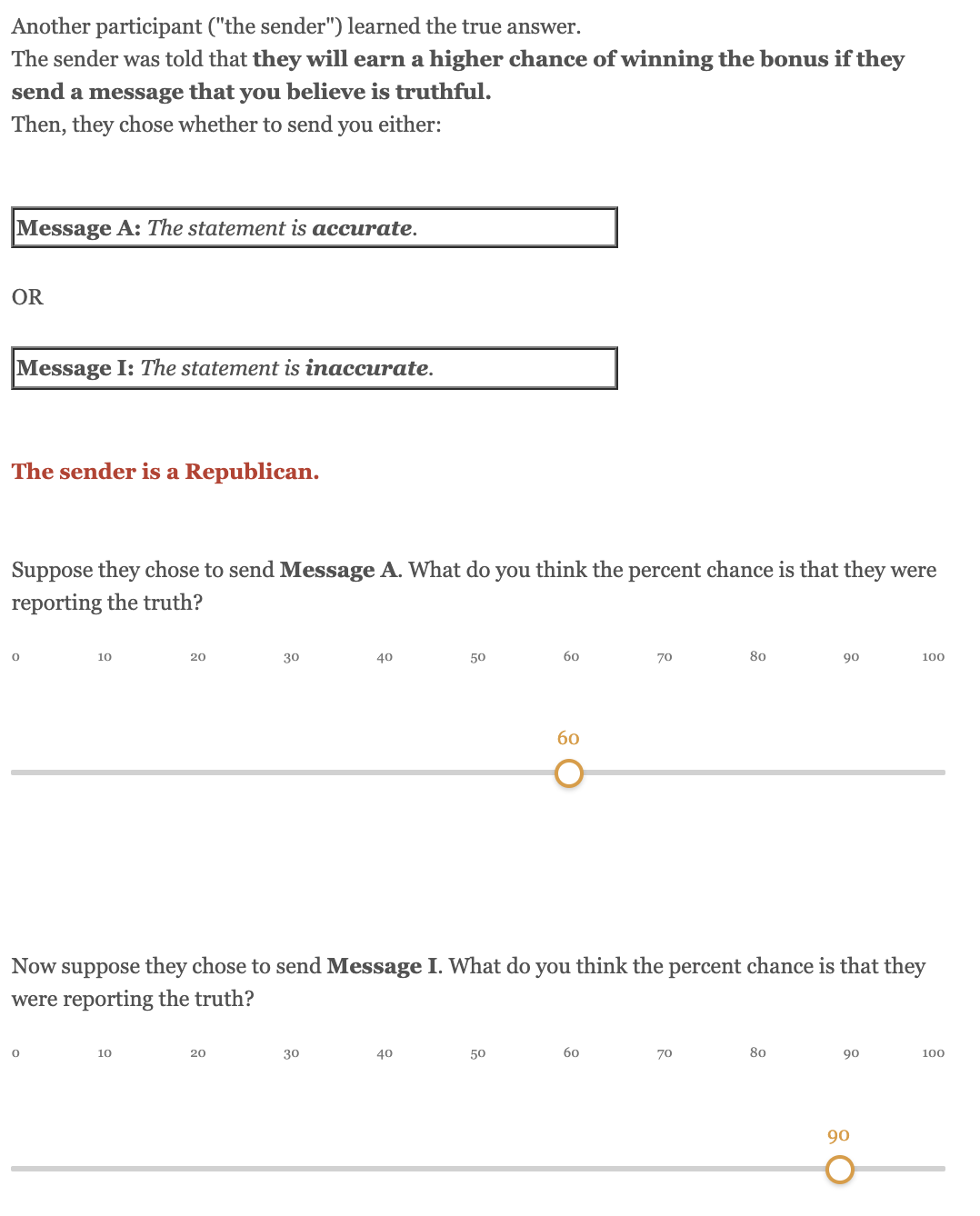}
\end{center}
\vspace{2mm}
\hrule
\vspace{3mm}

\noindent \footnotesize{Receivers in the unincentivized treatment do not see the line about the sender's bonus.}

\vspace{4mm} 
\noindent \footnotesize{Receivers in the competition treatment are instead asked: \textit{Now suppose that \textbf{Sender X} chose to send \textbf{Message G} and that \textbf{Sender Y} chose to send \textbf{Message L}. What do you think is the percent chance is that \textbf{Sender X} sent more truthful signals over the course of the experiment?}}
\end{figure}

\clearpage

\begin{figure}
\caption{Sender: Quote message choice}
\vspace{-7mm}
\begin{center}
\includegraphics[width = .83\textwidth]{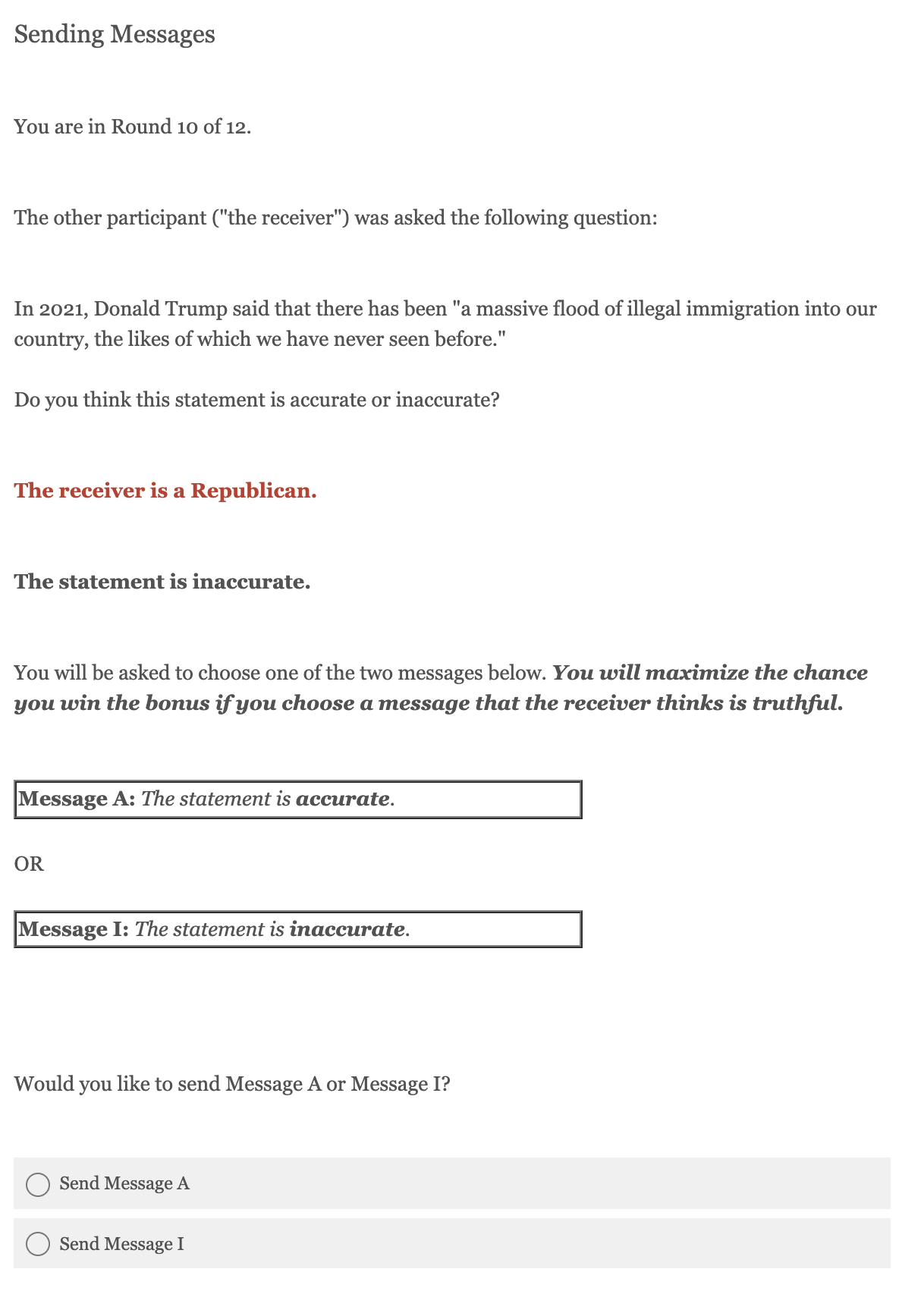}
\end{center}
\hrule
\vspace{2mm}

\noindent \footnotesize{Senders in the unincentivized treatment are instead told: \textit{``\textbf{Your chance of winning the bonus is not affected by how you answer this question.}''}}

\vspace{4mm} 
\noindent \footnotesize{Senders in the competition treatment are instead told that \textit{``\textbf{The receiver will predict, based on your message and another sender's message on this question, which sender sent more truthful messages over the course of the experiment. You will maximize the chance you win the bonus if the receiver believes that you sent more truthful messages.}''}}
\end{figure}

\clearpage

\begin{figure}
\caption{Receiver: Predictions}
\begin{center}
\includegraphics[width = \textwidth]{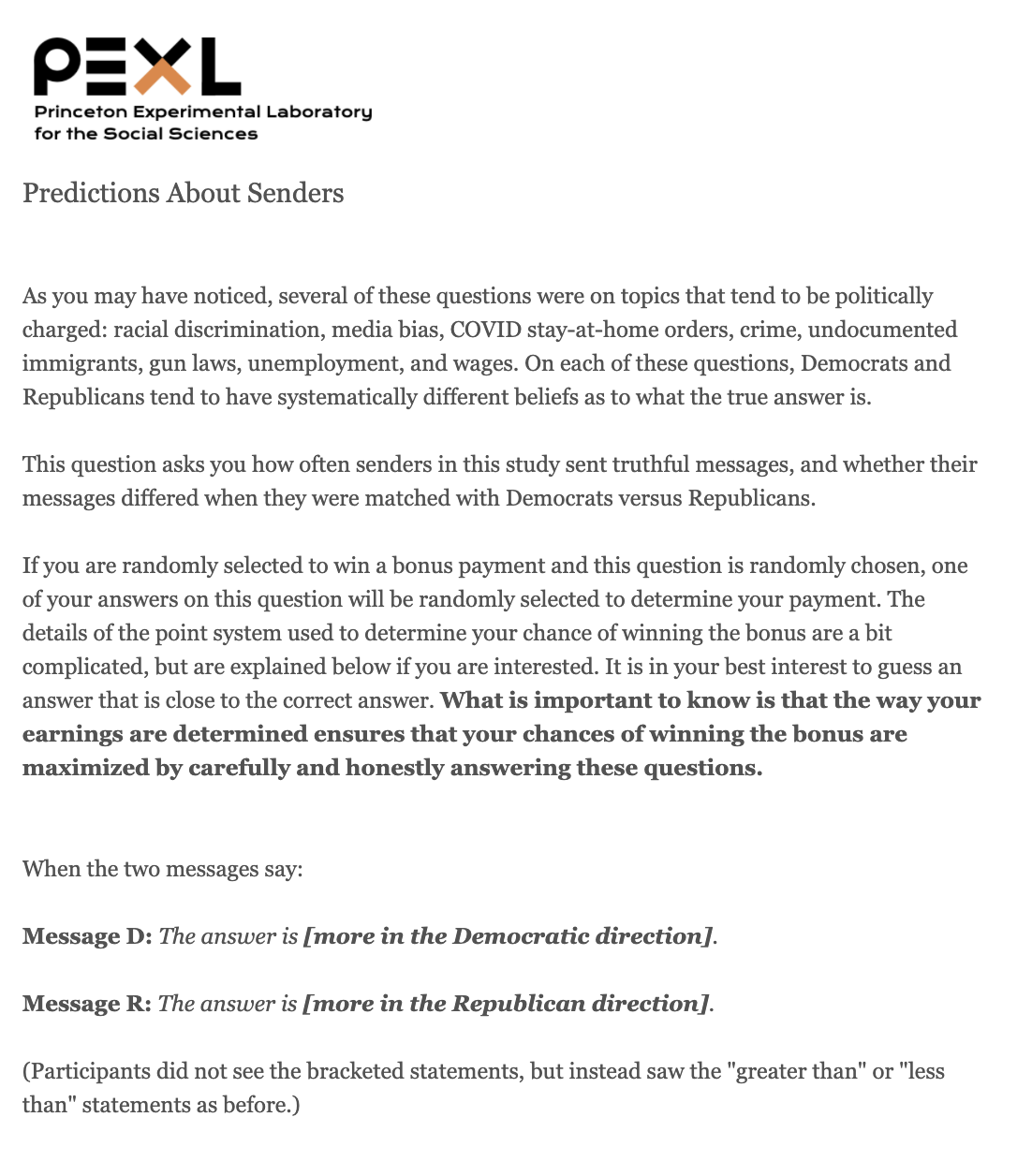}
\end{center}
\end{figure}

\clearpage

\begin{center}
\includegraphics[width = .83\textwidth]{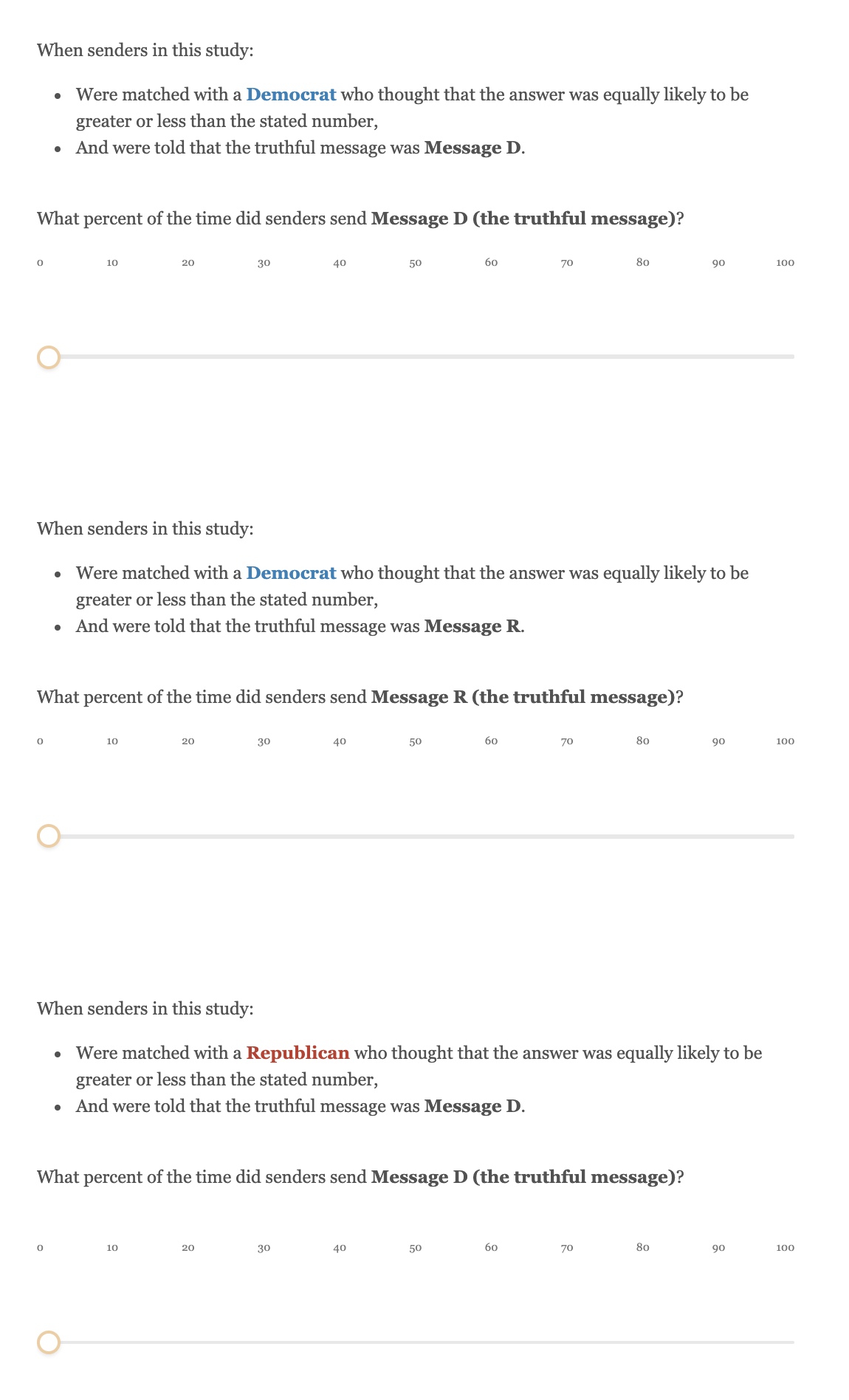}
\end{center}

\clearpage

\begin{center}
\includegraphics[width = .83\textwidth]{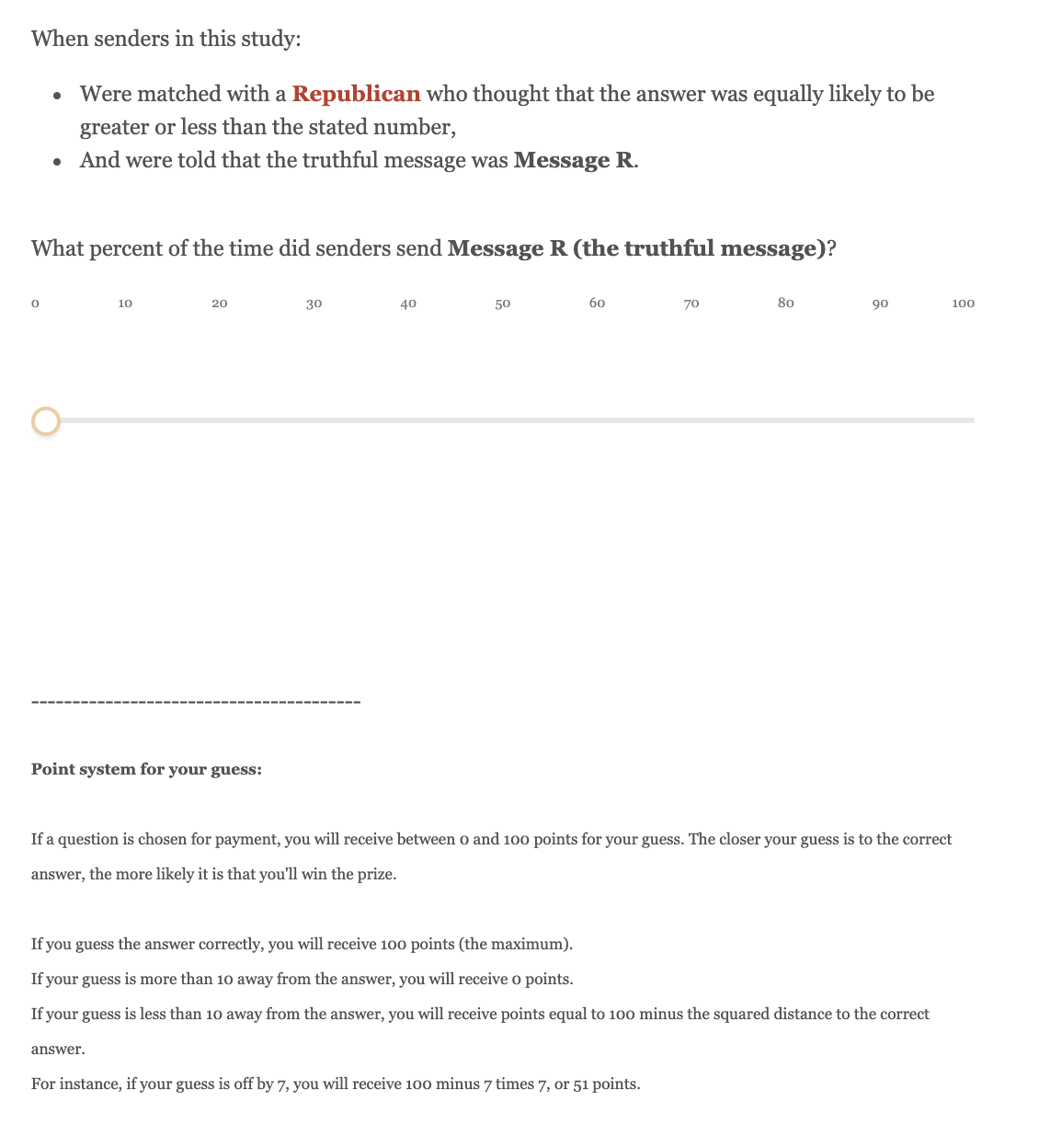}
\end{center}
\clearpage

\begin{figure}
\caption{Sender: Predictions}
\begin{center}
\includegraphics[width = \textwidth]{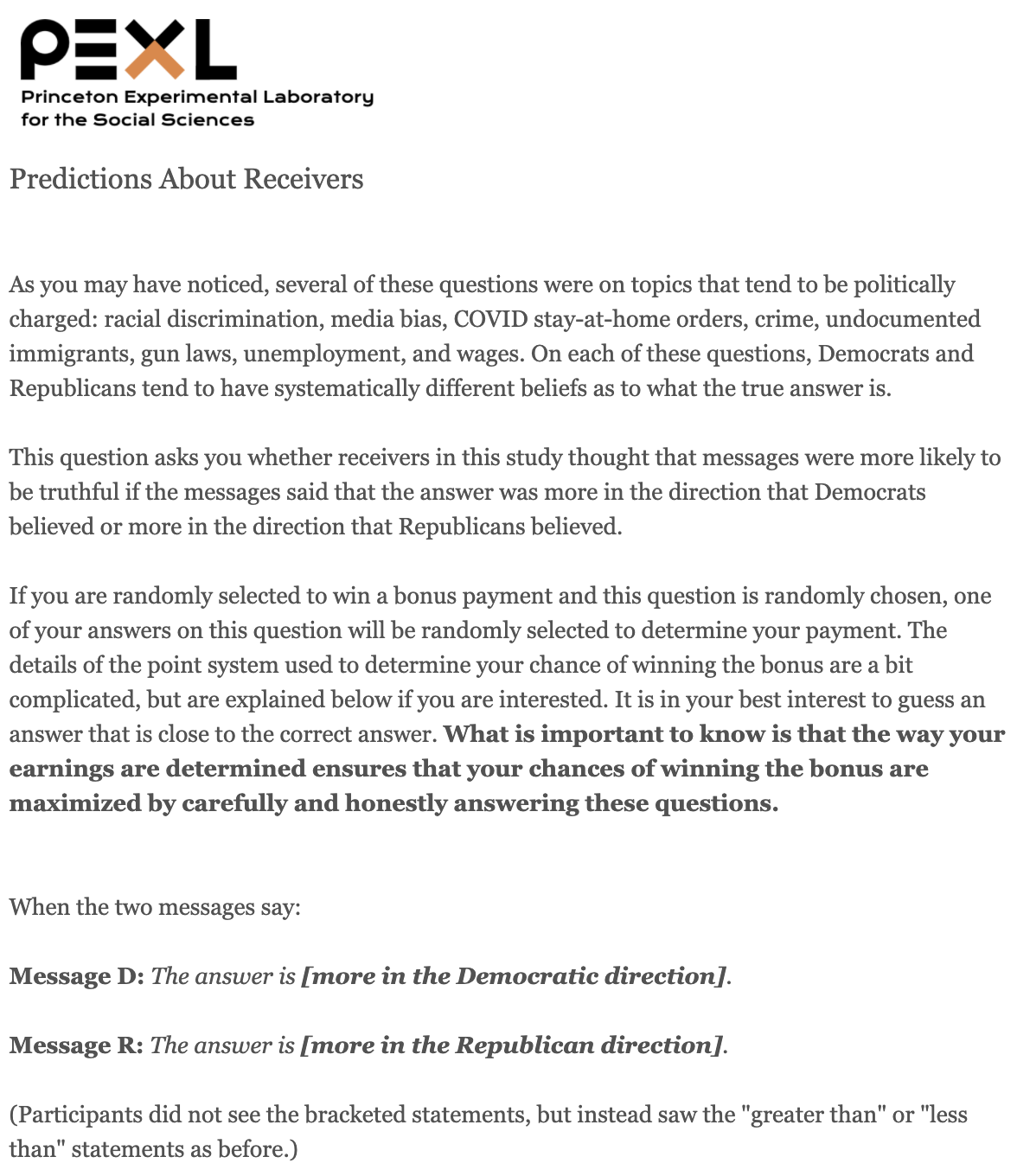}
\end{center}
\end{figure}

\clearpage

\begin{center}
\includegraphics[width = \textwidth]{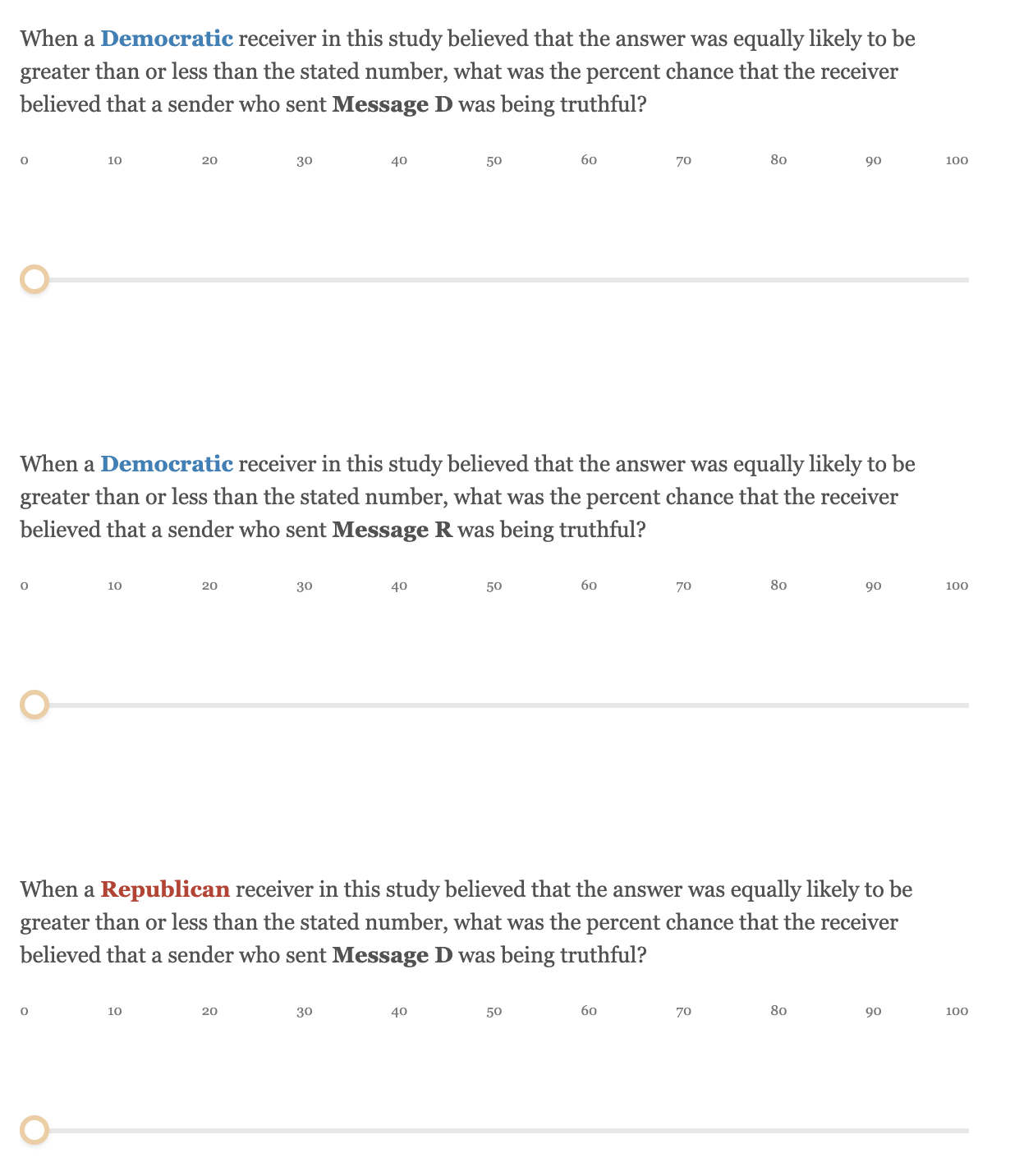}
\end{center}

\clearpage

\begin{center}
\includegraphics[width = \textwidth]{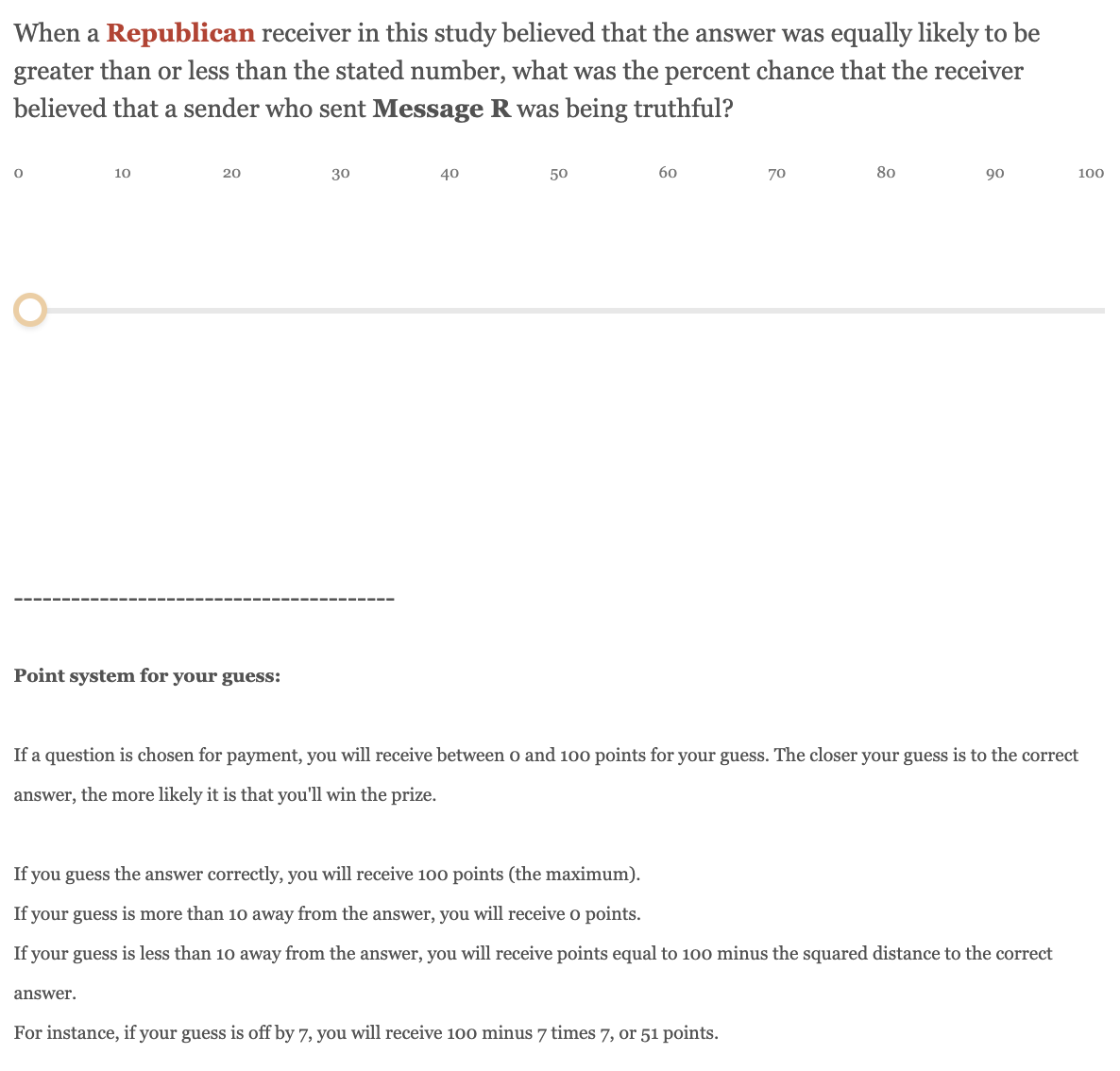}
\end{center}

\clearpage

% \begin{center}
% \includegraphics[height = \textheight]{}
% \end{center}

% \clearpage

% \begin{center}
% \includegraphics[width = \textwidth]{}
% \end{center}

% \clearpage

\begin{figure}
\caption{Receiver: Survey beliefs}
\begin{center}
\includegraphics[width = \textwidth]{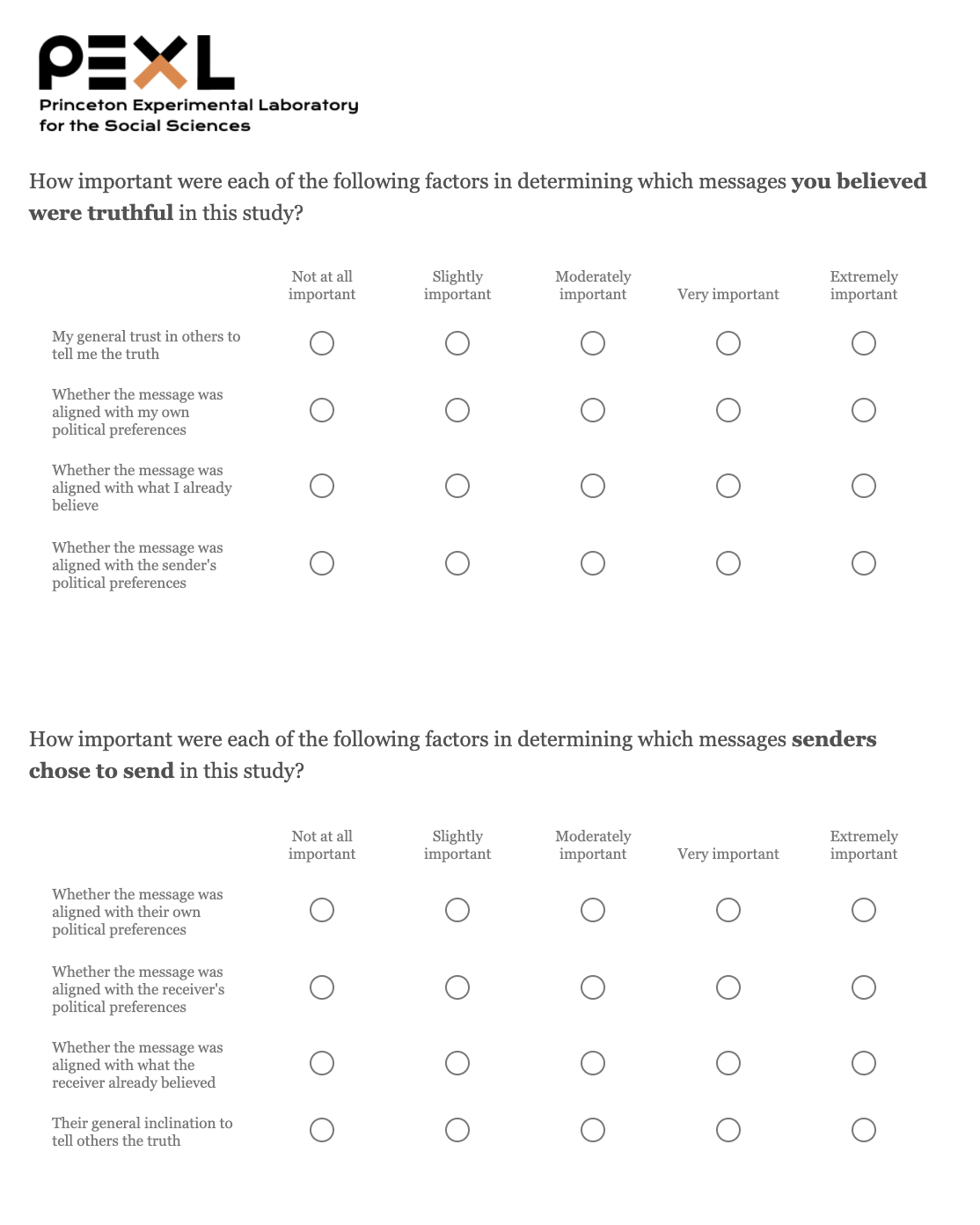}
\end{center}
\end{figure}

\clearpage

\begin{figure}
\caption{Sender: Survey beliefs}
\begin{center}
\includegraphics[width = \textwidth]{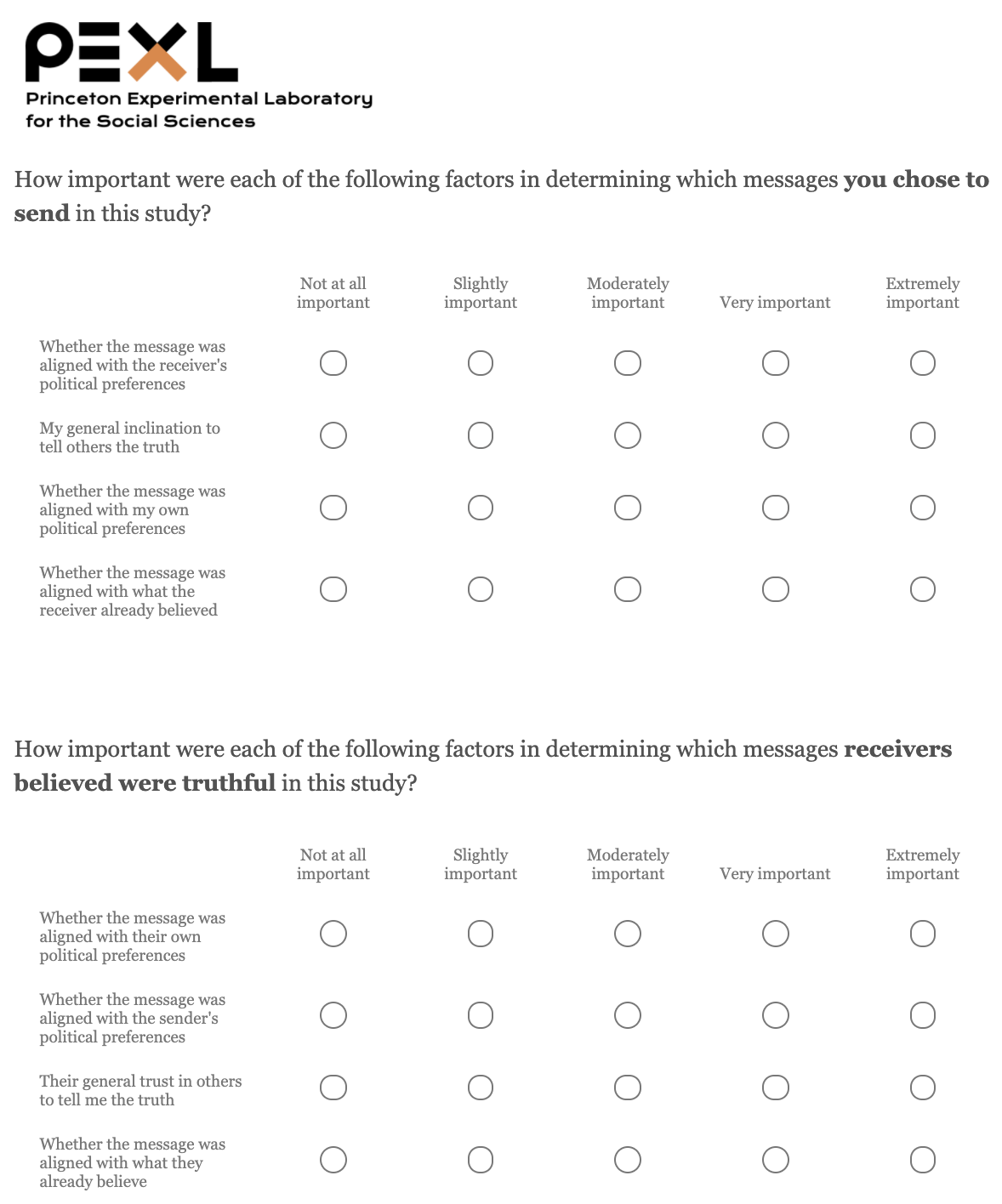}
\end{center}
\end{figure}

\clearpage

% \begin{figure}
% \caption{Solutions}
% \begin{center}
% \includegraphics[height = .9\textheight]{}
% \end{center}
% \end{figure}

% \clearpage

% \begin{center}
% \includegraphics[width = .9\textwidth]{}
% \end{center}

% \clearpage

\section{Online Appendix: Study Materials for the Additional Experiment}\label{screenshots2}

\subsection{Questions}

\normalsize{
Receivers saw a subset of the questions from the primary experiment. Instead of being given a target number, they were asked to input their guess. For instance, the end of the question about violent crime among undocumented immigrants said ``What was the felony violent crime rate per 100,000 undocumented immigrants?''

Senders saw each question with the following median beliefs of receivers:}

\vspace{5mm}
\renewcommand\arraystretch{1.3}
\begin{center}
\begin{tabular}{l l l}  
\toprule
\textbf{Topic} & \textbf{Median belief} & \textbf{Truthful message} \\
\midrule
US crime & 500.0 & Less than \\
Immigrants' crime & 213.0 & Less than \\
Racial discrimination & 8.50 & Less than \\
Media bias & 65 & Greater than \\
COVID-19 restrictions & 50.0 & Less than \\
Gun reform & 220.0 & Greater than \\
Unemployment & 3.20 & Greater than \\
Wages & 4.00 & Less than \\
Latitude of US & 45.0 & Less than \\
Random number & 50.0 & Greater than  \\
\bottomrule
\end{tabular}
\end{center}
\renewcommand\arraystretch{1}

\vspace{3mm}

\noindent The truthful message corresponds to which one of the ``Greater than'' / ``Less than'' messages is true, given the receiver's median belief.

\clearpage

\subsection{Screenshots}

\vspace{5mm}
\begin{figure}[h]
\caption{Overview page}
\begin{center}
\includegraphics[width = .95\textwidth]{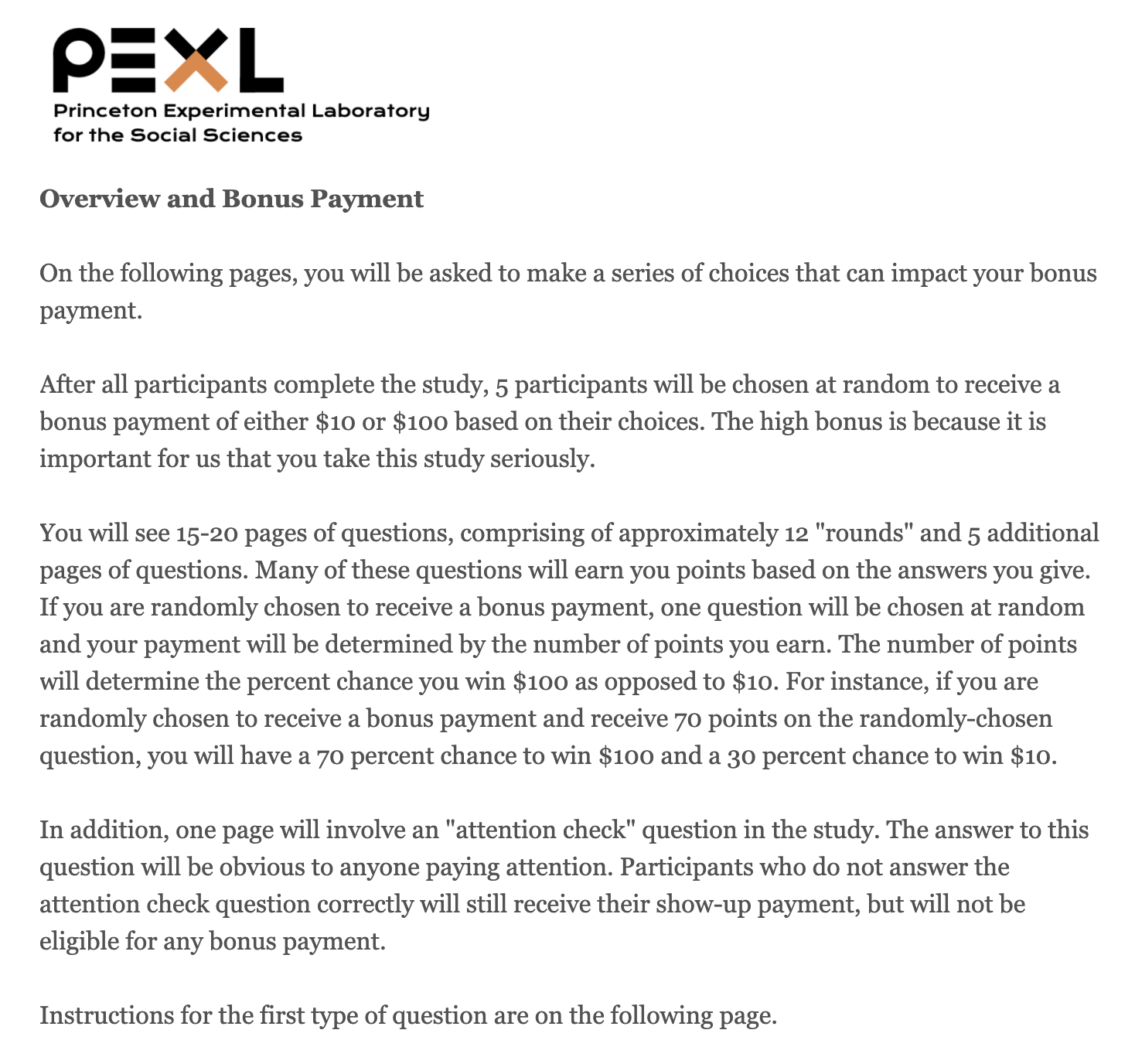}
\end{center}
\end{figure}

\clearpage

\begin{figure}
\caption{Instructions for median beliefs}
\begin{center}
\includegraphics[width = \textwidth]{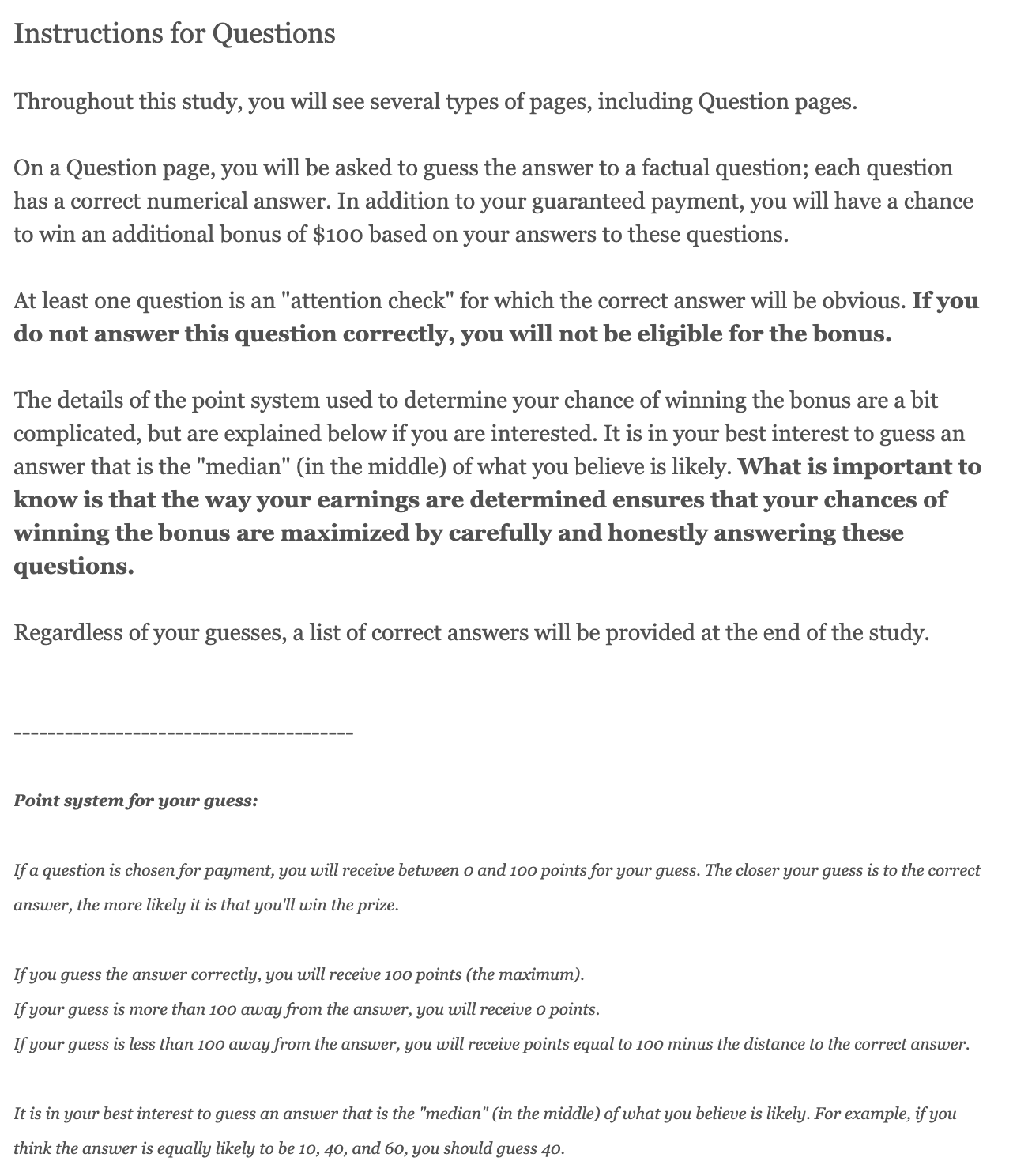}
\end{center}
\end{figure}

\clearpage

\begin{figure}
\caption{Instructions for news ratings}
\begin{center}
\includegraphics[width = \textwidth]{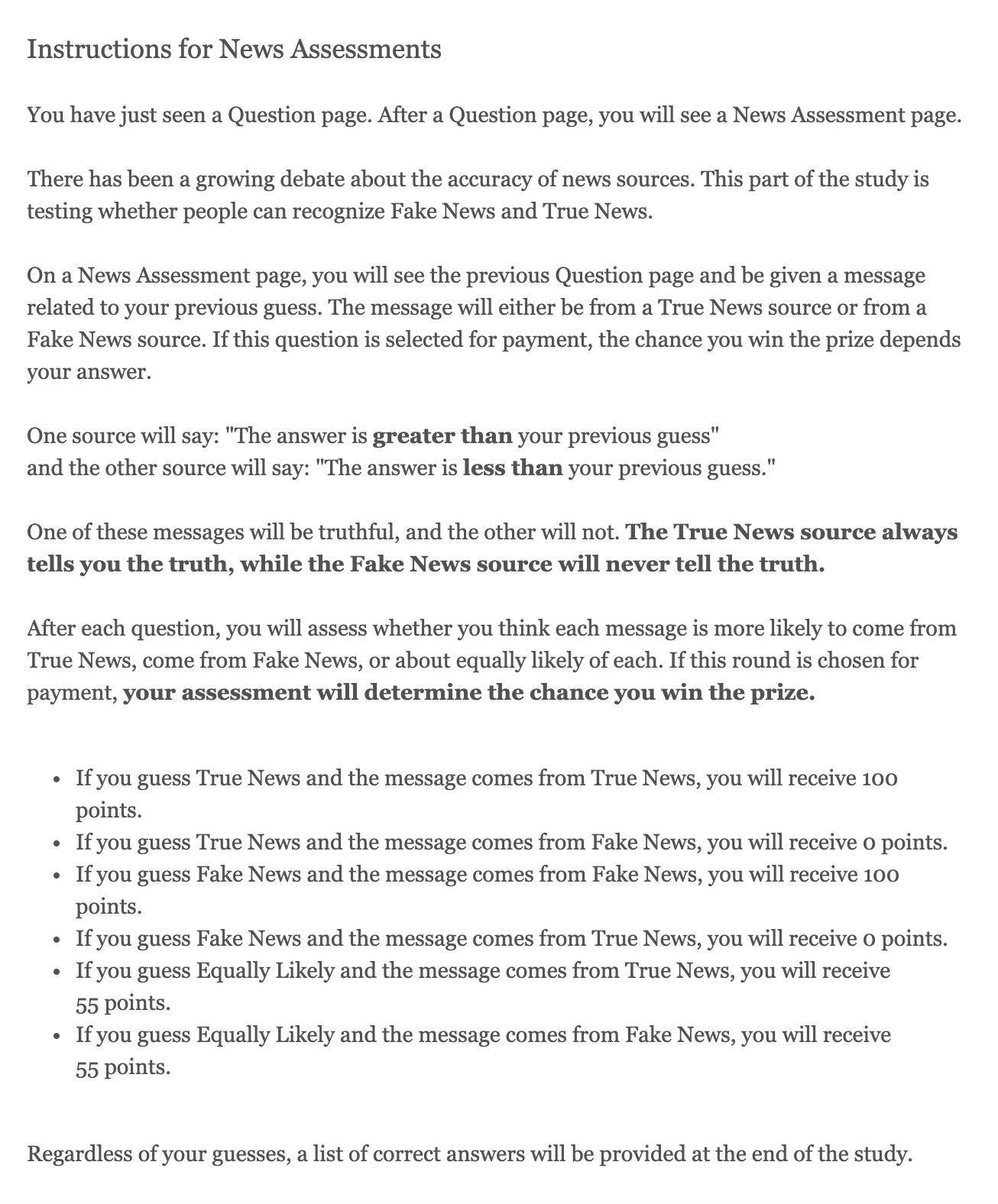}
\end{center}
\end{figure}

\clearpage

\begin{figure}
\caption{Receiver: Question page}
\begin{center}
\includegraphics[width = \textwidth]{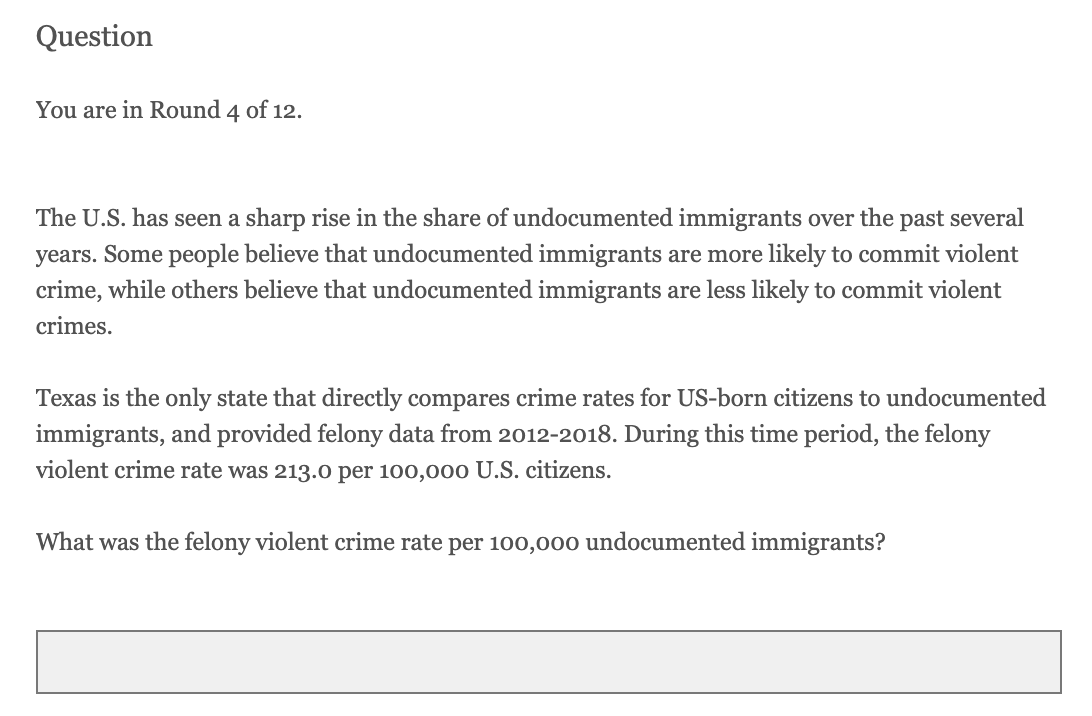}
\end{center}
\end{figure}

\clearpage

\begin{figure}
\caption{Receiver: News page}
\begin{center}
\includegraphics[width = .93\textwidth]{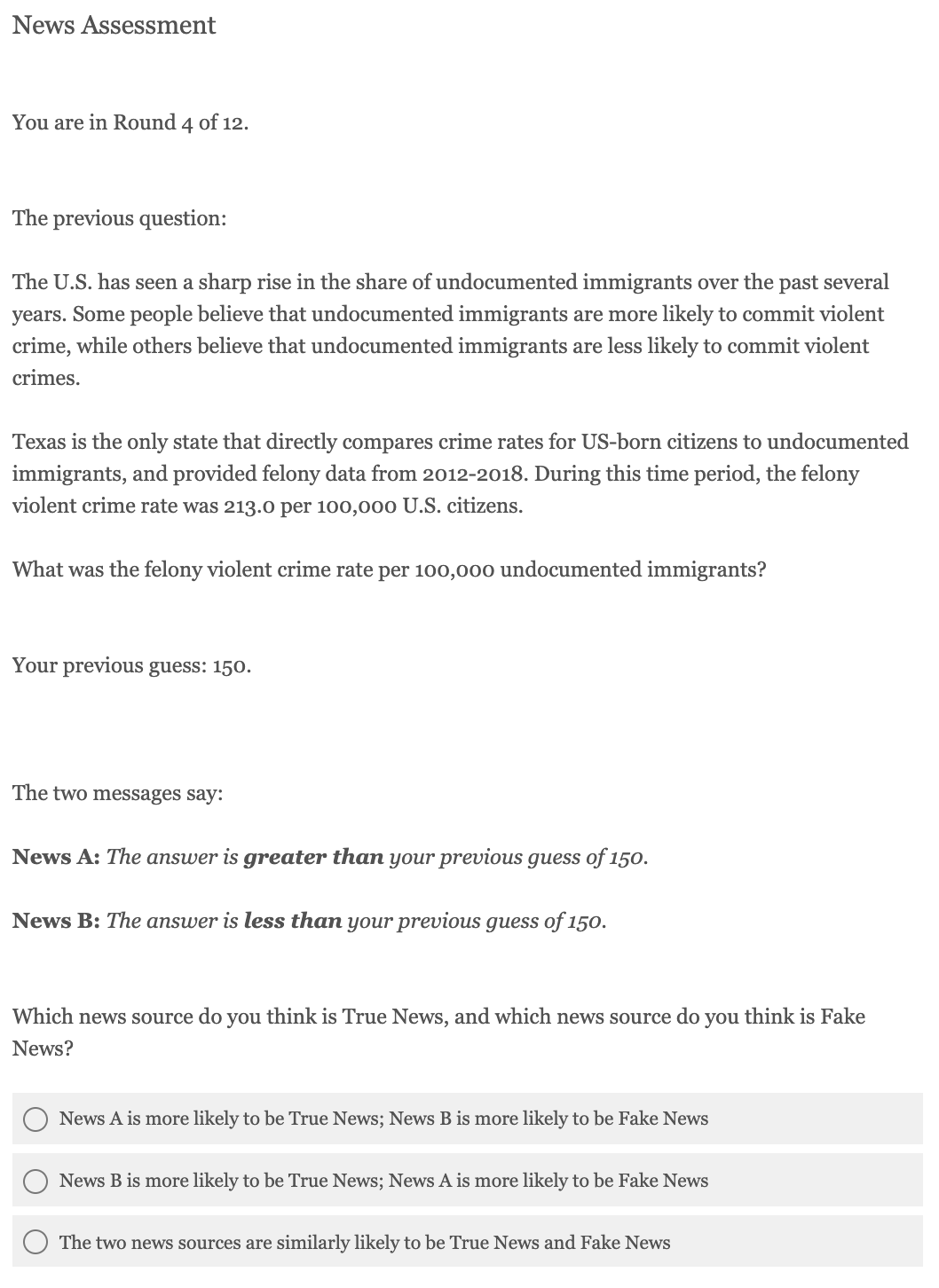}
\end{center}
\end{figure}

% \clearpage

% \begin{figure}
% \caption{Instructions for median beliefs}
% \begin{center}
% \includegraphics[width = \textwidth]{}
% \end{center}

% \clearpage

% \begin{figure}
% \begin{center}
% \includegraphics[width = \textwidth]{}
% \end{center}

\clearpage

\begin{figure}
\caption{Instructions for choosing messages}
\begin{center}
\includegraphics[width = .95\textwidth]{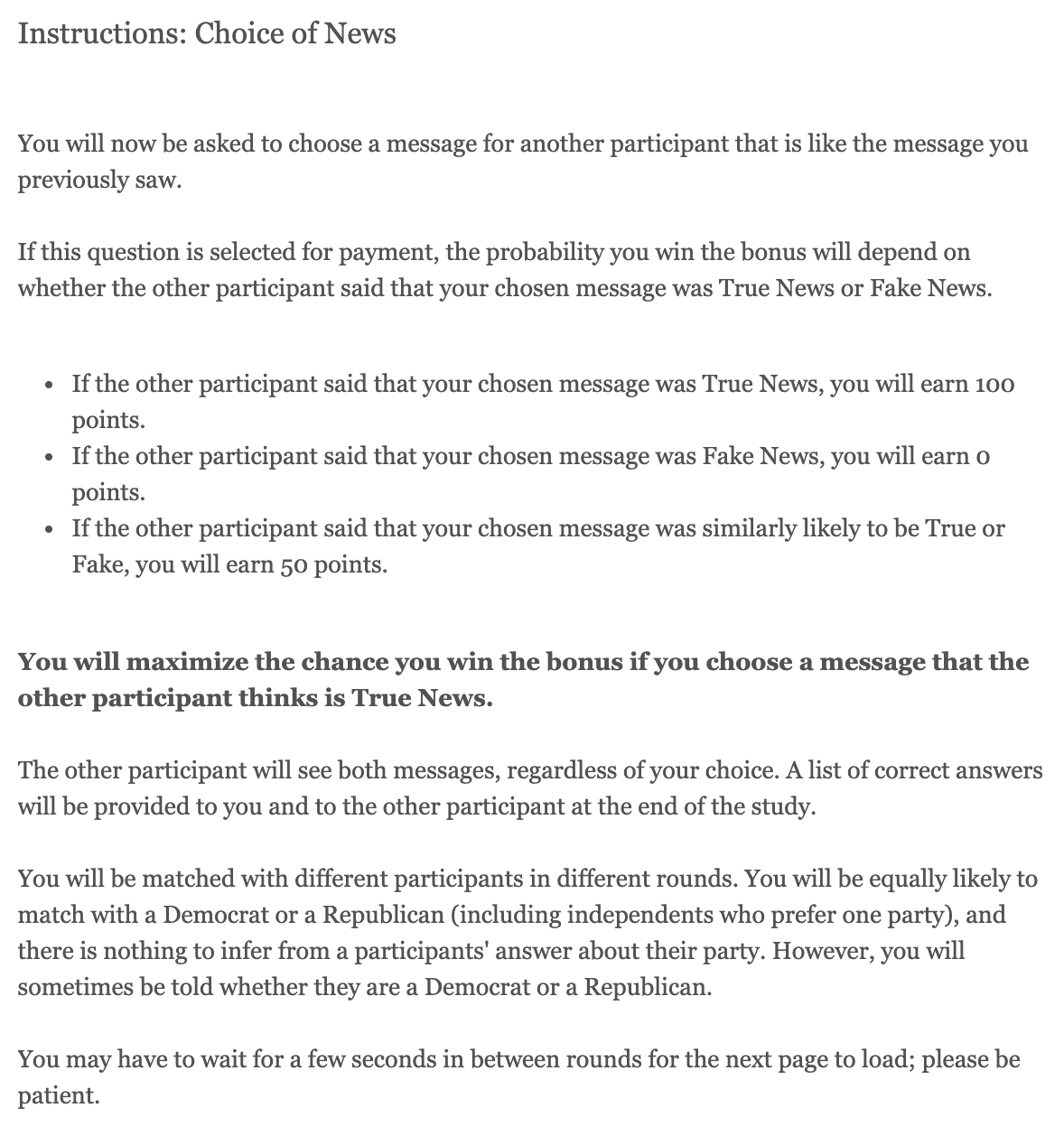}
\end{center}
\vspace{4mm}
\hrule
\vspace{3mm}

\noindent \footnotesize{Senders in the unincentivized treatment do not see the sentences about the bonus, and instead see: \textit{\textbf{``These questions, unless otherwise specified, will not affect your chance to win a bonus payment.''}}}
\end{figure}

\clearpage

\begin{figure}
\caption{Sender: Choosing messages page}
\begin{center}
\includegraphics[width = .95\textwidth]{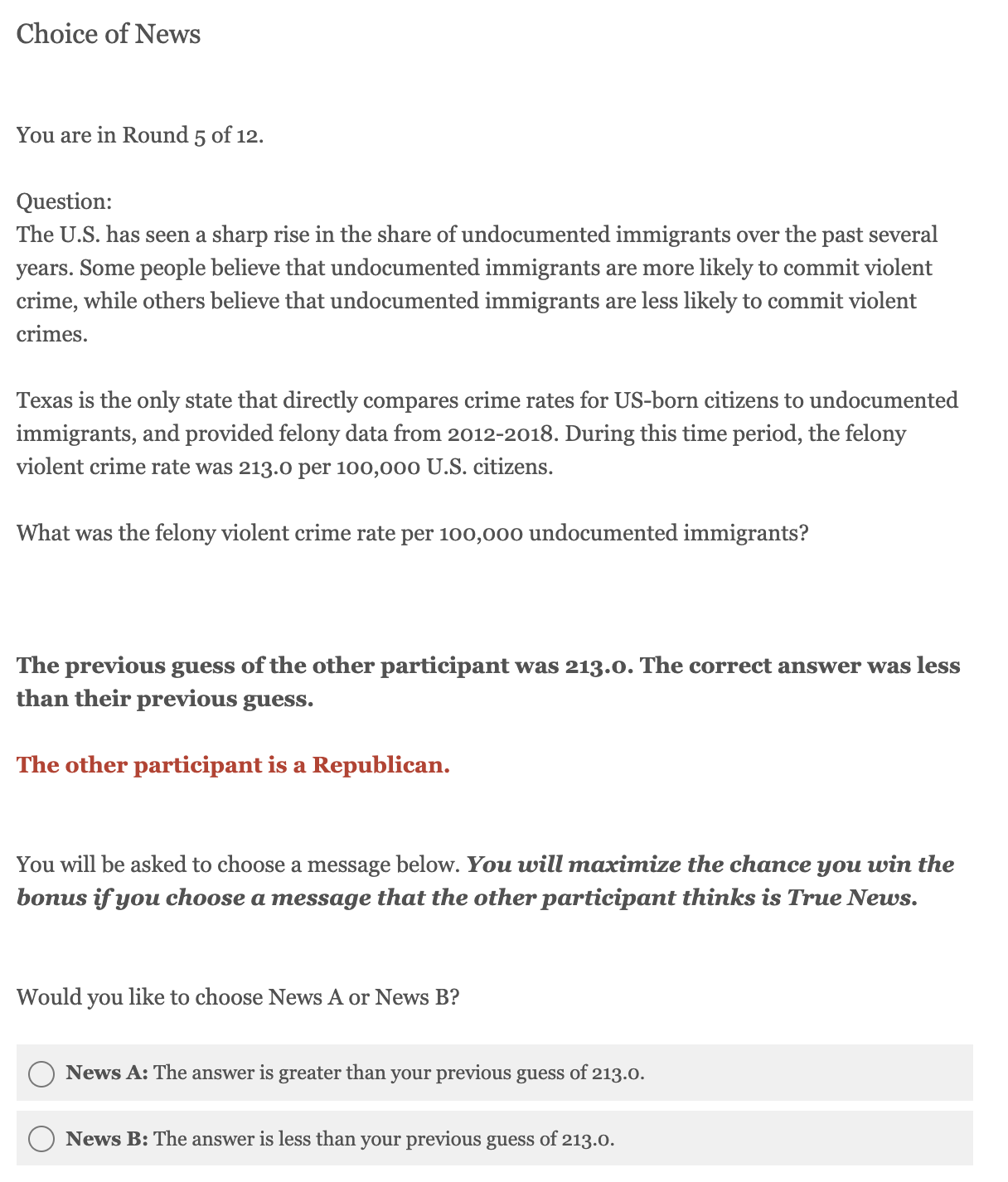}
\end{center}
\vspace{4mm}
\hrule
\vspace{3mm}

\noindent \footnotesize{Senders in the unincentivized treatment do not see the sentence about the bonus.}
\end{figure}

\clearpage

\begin{figure}
\caption{Sender instructions: Demand for information}
\begin{center}
\includegraphics[width = \textwidth]{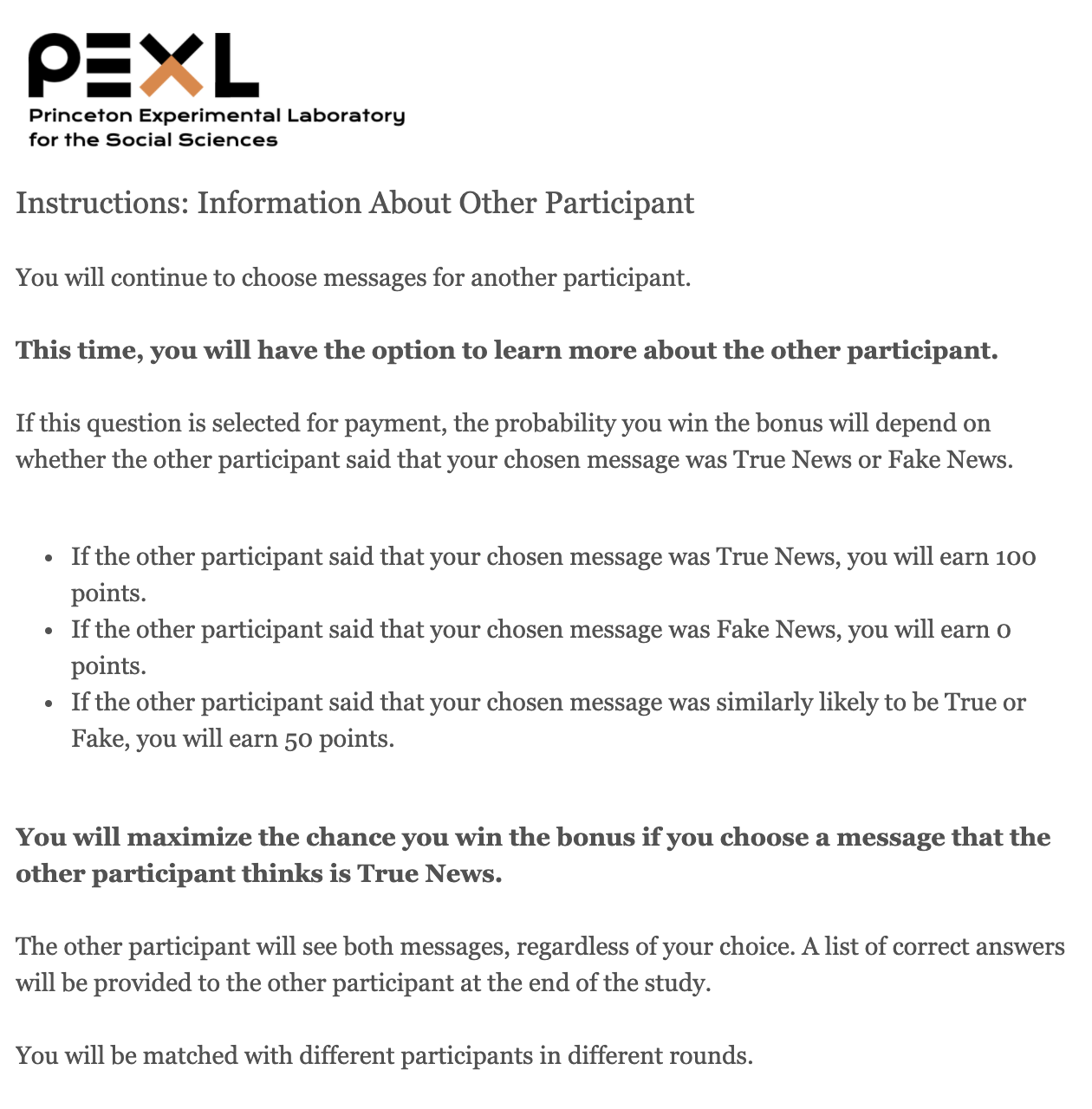}
\end{center}
\end{figure}

% \clearpage

% \begin{center}
% \includegraphics[width = \textwidth]{}
% \end{center}

\clearpage

\begin{figure}
\caption{Sender: Information choice page}
\begin{center}
\includegraphics[width = \textwidth]{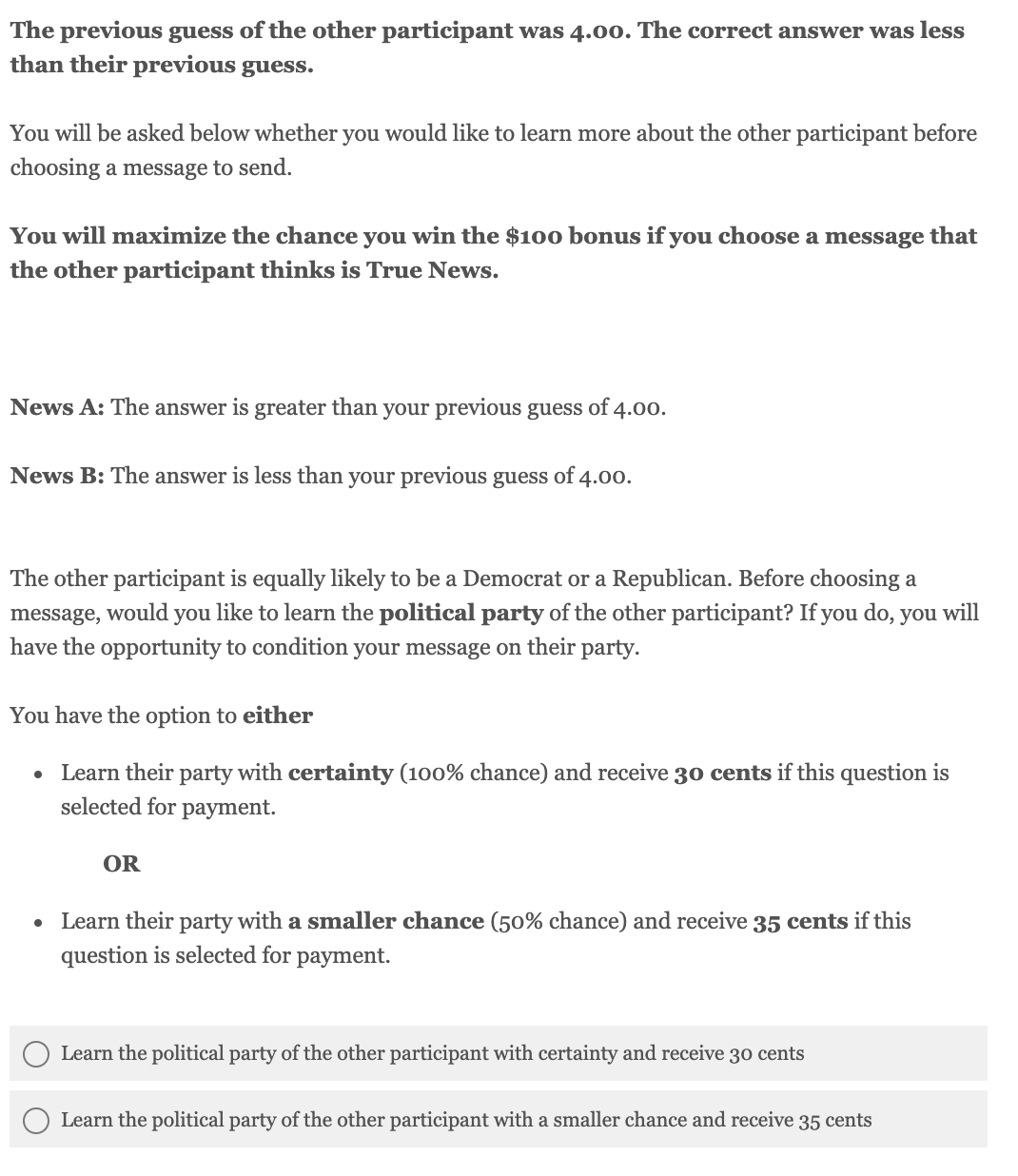}
\end{center}
\end{figure}

\clearpage

\begin{figure}
\caption{Sender: Can condition on receiver's party}
\begin{center}
\includegraphics[width = \textwidth]{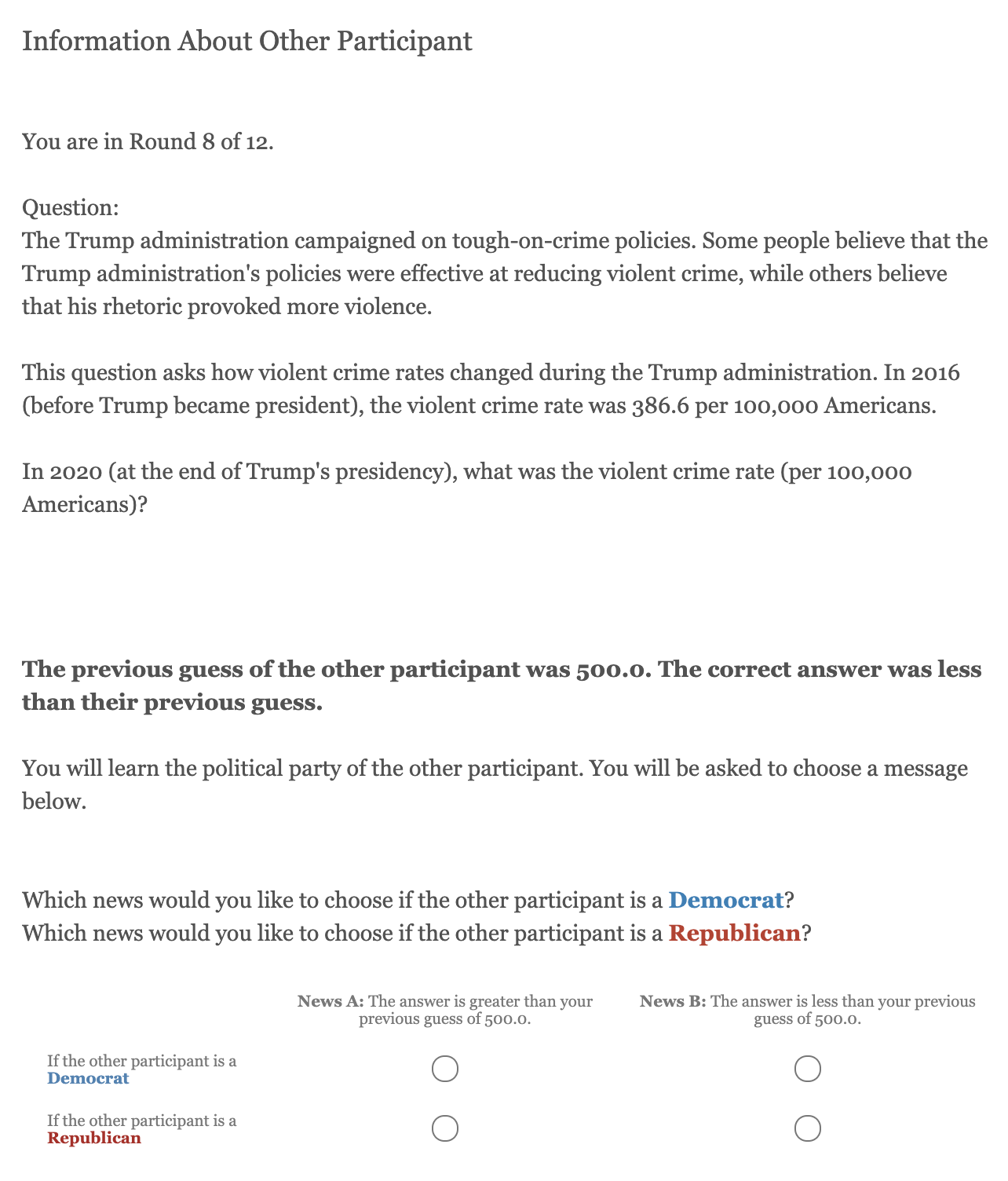}
\end{center}
\end{figure}

\clearpage

\begin{figure}
\caption{Sender: Cannot condition on receiver's party}
\begin{center}
\includegraphics[width = \textwidth]{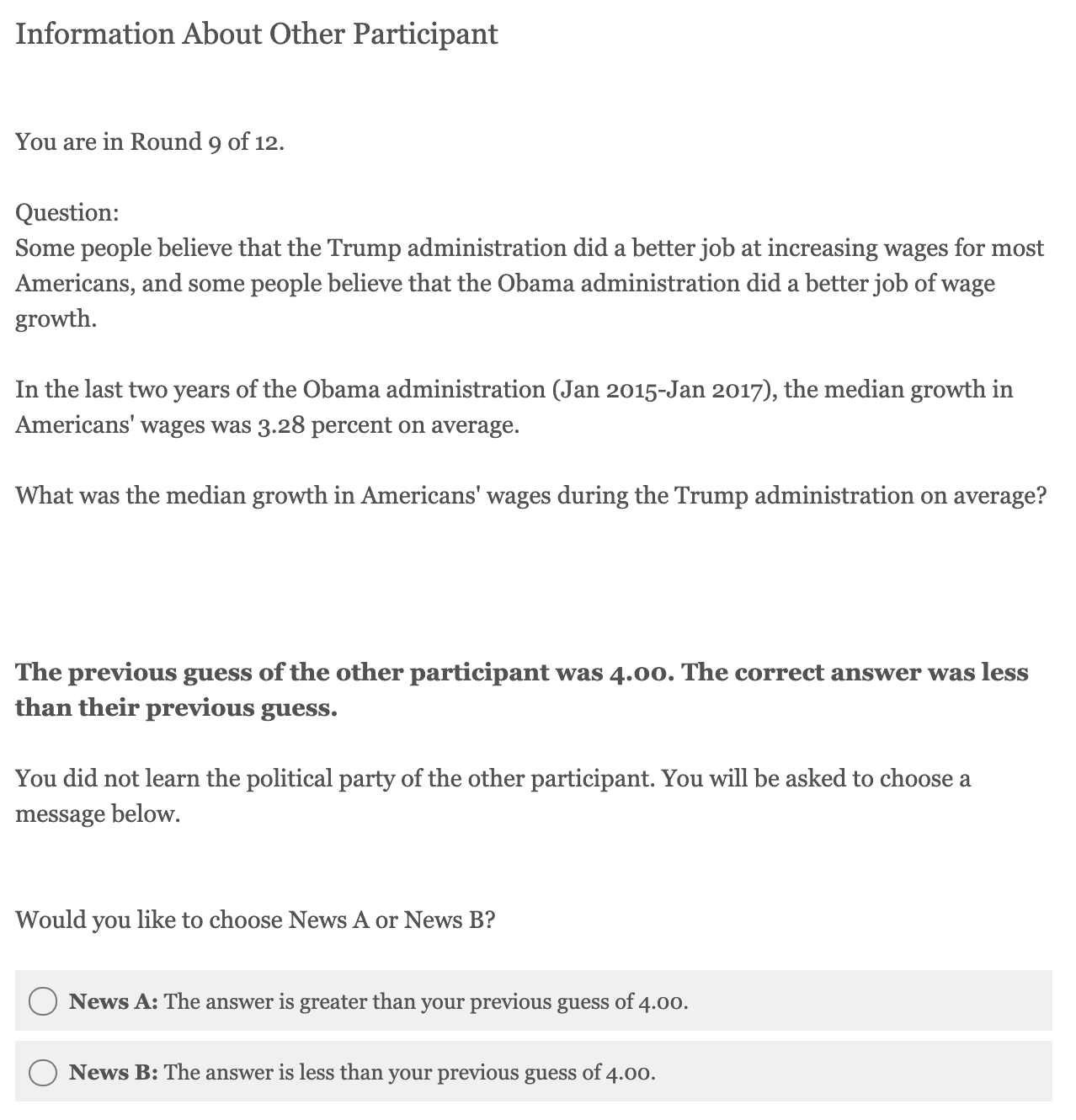}
\end{center}
\end{figure}

\clearpage

\begin{figure}
\caption{Attention check}
\begin{center}
\includegraphics[width = \textwidth]{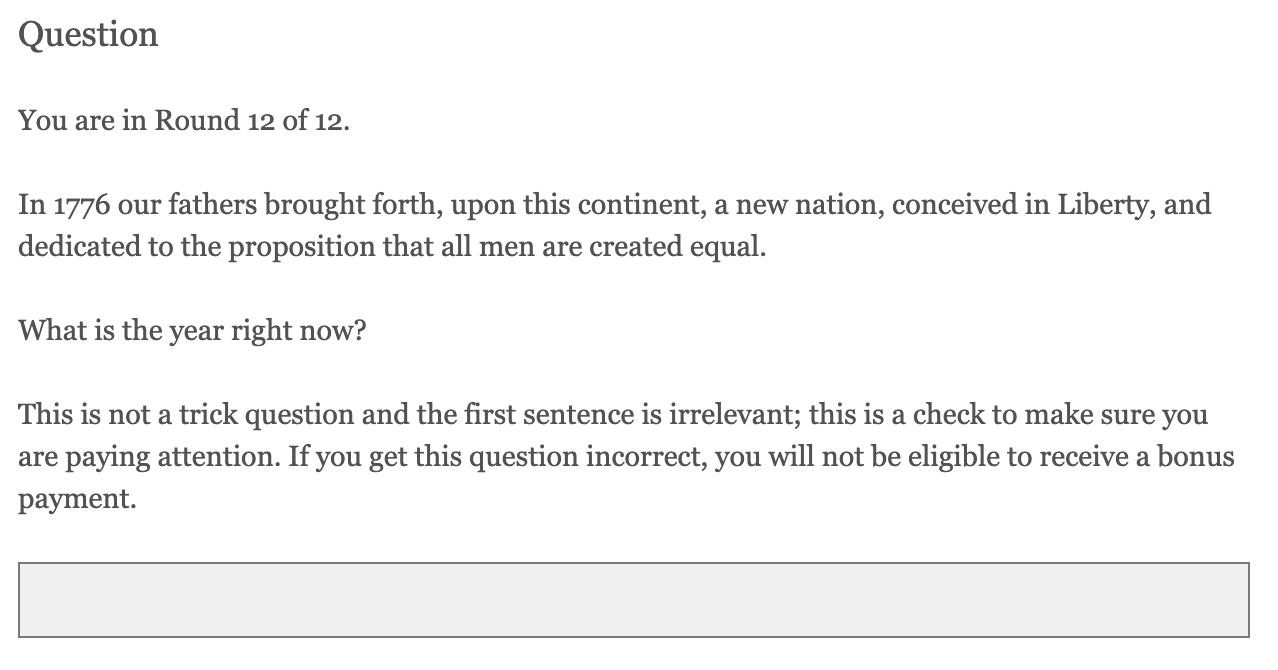}
\end{center}
\end{figure}

\clearpage

\begin{figure}
\caption{Sender: Beliefs about receivers}
\begin{center}
\includegraphics[width = \textwidth]{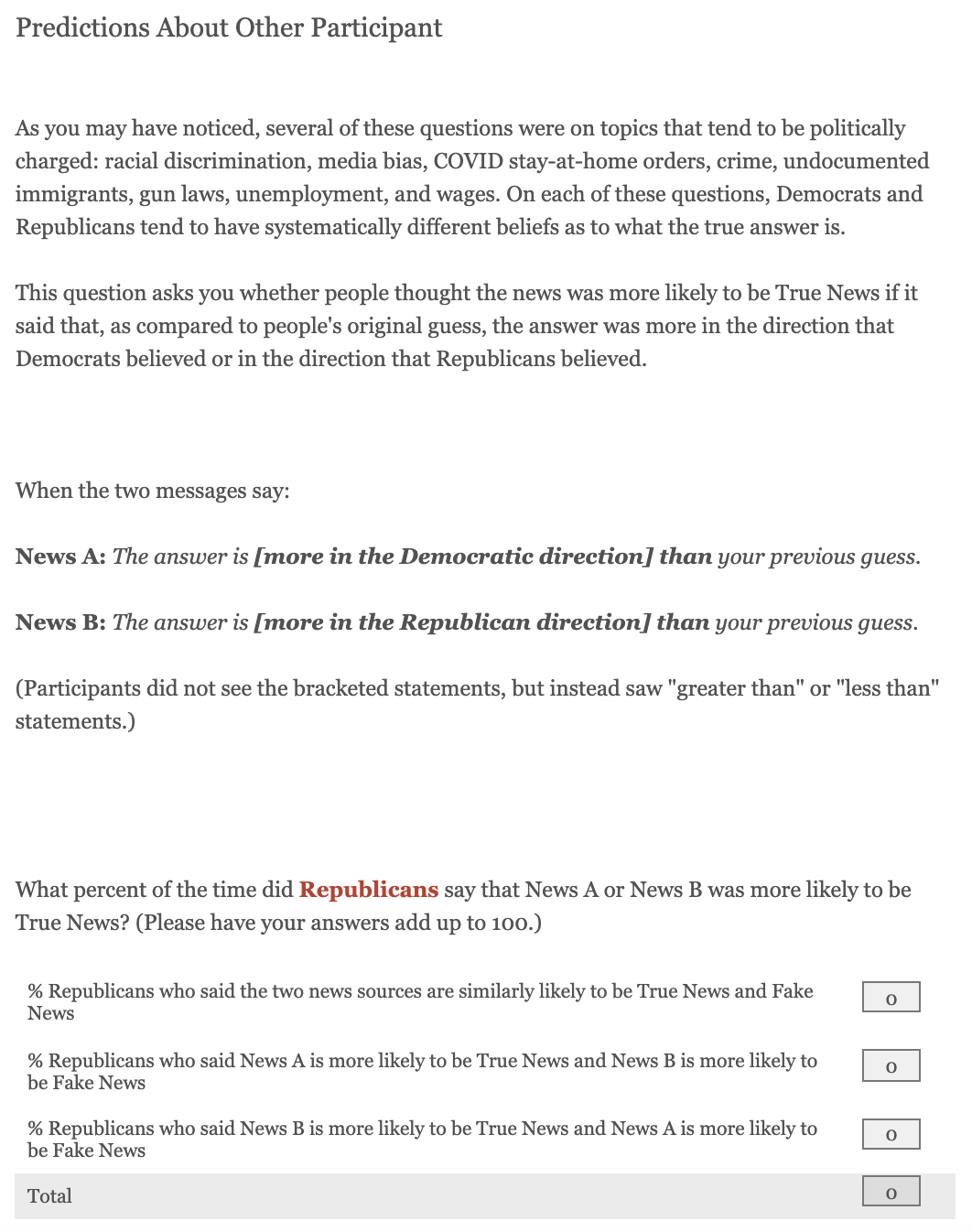}
\end{center}
\end{figure}

\clearpage

\begin{center}
\includegraphics[width = \textwidth]{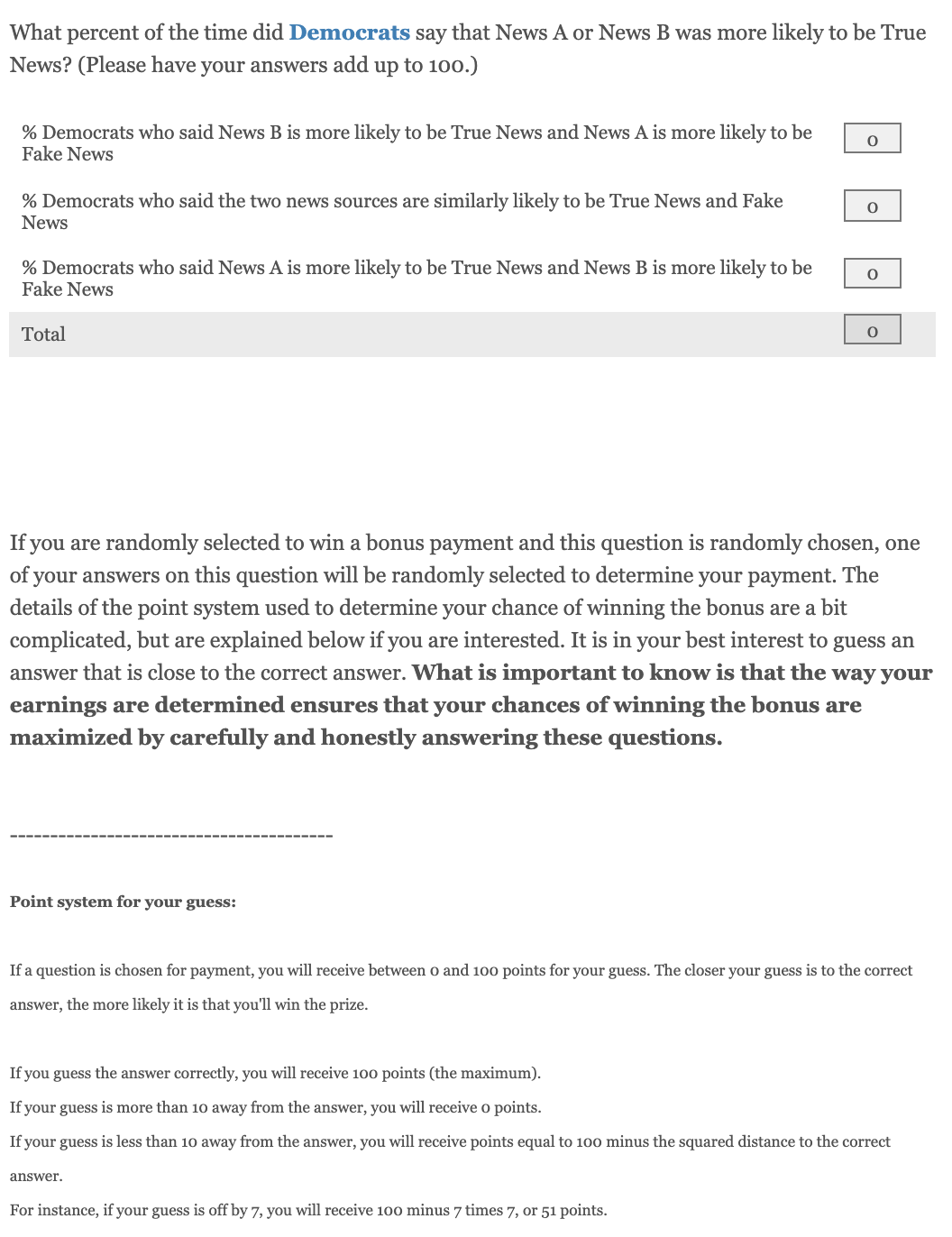}
\end{center}

% \clearpage

% \begin{center}
% \includegraphics[width = \textwidth]{}
% \end{center}

% \clearpage

% \begin{center}
% \includegraphics[height = \textheight]{}
% \end{center}

% \clearpage

% \begin{center}
% \includegraphics[height = \textheight]{}
% \end{center}

% \clearpage

% \begin{figure}
% \caption{Cognitive reflection task}
% \begin{center}
% \includegraphics[height = .92\textheight]{}
% \end{center}
% \end{figure}

% \clearpage

% \begin{figure}
% \caption{Solutions}
% \begin{center}
% \includegraphics[width = \textwidth]{}
% \end{center}
% \end{figure}

% \clearpage

% \begin{center}
% \includegraphics[width = \textwidth]{}
% \end{center}

\end{document}